\documentclass[11pt]{article}
\usepackage[utf8]{inputenc}
\usepackage[dvipsnames]{xcolor}
\usepackage{placeins}
\usepackage[margin=1.125in]{geometry}
\usepackage[dvipsnames]{xcolor}
\usepackage[normalem]{ulem}
\definecolor{DarkGreen}{rgb}{0.1,0.5,0.1}
\definecolor{DarkRed}{rgb}{0.5,0.1,0.1}
\definecolor{DarkBlue}{rgb}{0.1,0.1,0.5}

\usepackage{mleftright}

\usepackage[small]{caption}
\usepackage{subcaption}
\usepackage[pdftex]{hyperref}
\hypersetup{
    unicode=false,          
    pdftoolbar=true,        
    pdfmenubar=true,        
    pdffitwindow=false,      
    pdfnewwindow=true,      
    colorlinks=true,       
    linkcolor=DarkBlue,          
    citecolor=DarkGreen,        
    filecolor=DarkGreen,      
    urlcolor=DarkBlue,          
    pdftitle={The list size of random linear codes at capacity},
    pdfauthor={Shashwat Silas},
}
\usepackage{amssymb,amsmath,amsthm,amsfonts,enumitem, dsfont}
\usepackage{fullpage,comment,mathtools}
\usepackage{tikz}
\usepackage{pgfplots}
\pgfplotsset{compat=1.14}
\usetikzlibrary{arrows,decorations.pathmorphing,decorations.shapes,snakes,patterns,positioning}
\usepackage[ruled,linesnumbered]{algorithm2e}
\usepackage{thm-restate}
\mathtoolsset{showonlyrefs}

\newcommand{\A}{\ensuremath{\mathcal{A}}}

\newcommand{\cC}{\ensuremath{\mathcal{C}}}

\newcommand{\cB}{\ensuremath{\mathcal{B}}}

\newcommand{\cK}{\ensuremath{\mathcal{K}}}

\newcommand{\cU}{\ensuremath{\mathcal{U}}}

\newcommand{\cT}{\ensuremath{\mathcal{T}}}
\newcommand{\cY}{\ensuremath{\mathcal{Y}}}

\newcommand{\R}{{\mathbb R}}
\newcommand{\Z}{{\mathbb Z}}

\newcommand{\F}{{\mathbb F}}

\newcommand{\PR}[1]{\mathrm{Pr}\left[ #1\right]}
\newcommand{\PROver}[2]{\mathrm{Pr}_{#1}\left[ #2\right]}

\newcommand{\EE}{\mathbb{E}}
\newcommand{\Eover}[2]{\mathop{\mathbb{E}}_{#1}\left[ #2 \right]}

\newcommand{\ceil}[1]{\left\lceil {#1}\right\rceil}
\newcommand{\floor}[1]{\left\lfloor {#1}\right\rfloor}

\newcommand{\ip}[2]{\ensuremath{\left\langle #1,#2\right\rangle}}

\newcommand{\inset}[1]{\left\{#1\right\}}
\newcommand{\fracpart}[1]{\left\{#1\right\}}                     

\newcommand{\inparen}[1]{\left(#1\right)}
\newcommand{\inbrak}[1]{\left[#1\right]}
\newcommand{\suchthat}{\,:\,}

\newcommand{\supp}{\mathrm{supp}}

\newcommand{\poly}{\mathrm{poly}}

\newcommand{\spn}{\ensuremath{\operatorname{span}}}
\newcommand{\dist}{\ensuremath{\operatorname{dist}}}

\newcommand{\Ker}{\ensuremath{\operatorname{Ker}}}

\newcommand{\eps}{\varepsilon}
\renewcommand{\epsilon}{\varepsilon}
\newcommand{\vphi}{\varphi}

\definecolor{brightpink}{rgb}{1.0, 0.0, 0.5}
\definecolor{byzantine}{rgb}{0.74, 0.2, 0.64}
\definecolor{byzantium}{rgb}{0.44, 0.16, 0.39}

\DeclareMathOperator*\Ber{Bernoulli}

\newcommand\one{\mathds{1}}

\newcommand{\onesvector}{\mathbf{1}}

\theoremstyle{plain}
\declaretheorem[name=Theorem,numberwithin=section]{theorem}
\declaretheorem[name=Lemma,sibling=theorem]{lemma}

\newtheorem*{lemma*}{Lemma}
\newtheorem*{theorem*}{Theorem}
\newtheorem{definition}[theorem]{Definition}

\newtheorem{conjecture}[theorem]{Conjecture}

\newtheorem{corollary}[theorem]{Corollary}

\newtheorem{remark}[theorem]{Remark}

\newtheorem{proposition}[theorem]{Proposition}

\newtheorem{fact}[theorem]{Fact}

\newcommand{\rat}{\ensuremath{\operatorname{rat}}}                 
\newcommand{\Vabs}{\ensuremath{\mathcal{V}}}                      
\newcommand{\LowW}[1]{\ensuremath{\mathrm{Low}_{<#1}}}             
\newcommand{\Dq}[2]{\ensuremath{D_q\mleft(#1\,\middle\|\,#2\mright)}} 
\newcommand{\cF}{\ensuremath{\mathcal{F}}}
\newcommand{\cW}{\ensuremath{\mathcal{W}}}

\title{The list size of random linear codes at capacity}
\date{}

\author{Shashwat Silas\\[2pt] \small\texttt{shashwat@alumni.stanford.edu}}

\begin{document}
\maketitle
\thispagestyle{empty}

\begin{abstract}
\noindent
Let $\mathcal{C} \le \mathbb{F}_q^n$ be a uniformly random
$\mathbb{F}_q$-linear code of rate $1 - h_q(\rho) - \varepsilon$, and let
$L^*(\mathcal{C},\rho)$ be the least $L$ such that every Hamming ball of
relative radius $\rho$ contains at most $L$ codewords of $\mathcal{C}$.
That $L^* = \Theta_{q,\rho}(1/\varepsilon)$ has been known since work of
Guruswami, H\r{a}stad and Kopparty and of Guruswami and Narayanan.
Guruswami, Li, Mosheiff, Resch, Silas and Wootters proved that the
constant in front of $1/\varepsilon$ is at least $h_q(\rho)$ for all $q$,
along with an upper bound special to $q = 2$ which narrowed $L^*$ to
within three consecutive integers in that case. But for $q \ge 3$ no
upper bound with the correct constant was known.

We determine $L^*$ for every prime power $q$. Let
$\zeta := h_q(\rho)/\varepsilon$. For every sufficiently small
$\varepsilon$, with probability $1-o(1)$ over the choice of
$\mathcal{C}$,
\[
L^*(\mathcal{C},\rho) \;=\; \lceil \zeta \rceil,
\]
unless the fractional part of $\zeta$ is at most
$q^{-\Omega_{q,\rho}(\zeta)}$, in which case $L^*(\mathcal{C},\rho)$ is
$\lfloor \zeta \rfloor$ or $\lfloor \zeta \rfloor + 1$.

By the threshold characterization of random linear codes due to Mosheiff, Resch, Ron-Zewi, Silas and Wootters, both bounds reduce to a two-sided estimate of a single quantity $\mathcal{V}(q,L,\rho)$, where $1-\mathcal{V}(q,L,\rho)$ is the threshold rate for $(\rho,L)$-list-decodability. We prove for all large $L$:
\[
h_q(\rho)\left(1 + \tfrac1L\right) - q^{-\Omega_{q,\rho}(L)}
\;\le\; \mathcal{V}(q,L,\rho) \;\le\; h_q(\rho)\left(1 + \tfrac1L\right).
\]

The upper bound rests on a new entropy inequality for sparse random vectors under pairwise non-proportional linear constraints, proved with the Erd\H{o}s--Rado sunflower lemma. The lower bound is an exact analysis of the distribution introduced by Guruswami, Li, Mosheiff, Resch, Silas and Wootters.
\end{abstract}

\newpage

\section{Introduction}\label{sec:intro}
A code $\cC\le\F_q^n$ is \emph{$(\rho,L)$-list-decodable} if
\begin{equation}\label{eq:ld}
\left|\inset{c\in\cC\suchthat\dist(c,y)\le\rho n}\right|\;\le\;L
\qquad\text{for every }y\in\F_q^n,
\end{equation}
where $\dist$ is the Hamming distance. The
\emph{rate} of $\cC$ is $\log_q|\cC|/n$. The largest rate at which $L$
can be bounded independently of $n$ is $1-h_q(\rho)$, the
\emph{list-decoding capacity}, where $h_q$ is the $q$-ary entropy
function. Fix a prime power $q$ and a radius
$\rho\in\inparen{0,1-\frac1q}$, let $\cC\le\F_q^n$ be a \emph{random
	linear code of rate $R = 1-h_q(\rho)-\eps$}, i.e.\ the image of a uniformly random matrix in $\F_q^{n\times Rn}$, whose
rate is $R$ with probability $1-o(1)$. We call $\eps$ the
\emph{gap to capacity}, and let $L^*(\cC,\rho)$ be the least list size at which
$\cC$ is list-decodable at radius $\rho$, that is,
\[
L^*(\cC,\rho) := \min\inset{L\ge1\suchthat \cC\text{ is }(\rho,L)\text{-list-decodable}}.
\]
We study how $L^*$ grows as
$\eps\to0$.\footnote{Part of the literature,
	including \cite{GLMRSW}, requires instead \emph{strictly fewer than}
	$L$ codewords. We follow the convention of \cite{MRRSW}.}

The question \emph{what is the list size of a random linear
		code?} is over
forty years old. Zyablov and Pinsker \cite{ZP81} gave the first bound at
capacity, $L = q^{O(1/\eps)}$; Guruswami, H\r{a}stad and Kopparty
\cite{GHK11} improved this to $O(1/\eps)$; and the matching
$\Omega(1/\eps)$ of Guruswami and Narayanan \cite{GN14} settled the order
$\Theta_{q,\rho}(1/\eps)$. Guruswami, Li, Mosheiff, Resch, Silas and
Wootters \cite{GLMRSW} then proved that the constant in front of $1/\eps$ is at least $h_q(\rho)$
for every $q$, through the threshold framework for random linear codes of
\cite{MRRSW}. The only upper bound with the matching constant $h_q(\rho)$, a
potential-function argument of Li and Wootters \cite{LW18} (tightened in
\cite{GLMRSW}, where it is proved for the stronger average-radius notion of list-decoding), was special to $q = 2$.

Together these very different arguments left the binary list size known to within three consecutive integers, and the $q\ge3$ list size known only up to the constant. Our work determines the list size to within two consecutive integers, and typically to a single integer, for every prime power $q$ using the threshold framework of \cite{MRRSW} for both the upper and the lower bounds, thereby unifying the picture technically. The threshold framework reduces the determination of the list size to a precise estimate of
$\Vabs(q,L,\rho)$ (formal Definition~\ref{def:rat}). Informally, any $L+1$ codewords in one ball of relative radius $\rho$ have an empirical joint distribution, their \emph{type}, and the least entropy per dimension among its linear projections is one minus the rate at which sets of that type first appear in a random linear code (because a set appears no earlier than its rarest projection). $\Vabs$ is the largest value this least entropy can take. Since $\Vabs$ is a supremum, we derive our lower bound by exactly analyzing a single well-chosen distribution (Proposition~\ref{prop:sigma-min}). The upper bound is the harder
direction, since it must handle every distribution at once. Our main
technical tool is a new entropy inequality for sparse random vectors
under pairwise non-proportional linear constraints,
Theorem~\ref{thm:tool}, from which we derive the upper bound. Together, these bounds on $\Vabs$ determine the list size.

\subsection{Our contributions}\label{sec:contributions}

Let $\zeta := h_q(\rho)/\eps$. Table~\ref{tab:prior} compares our determination of the list size to prior work.

\begin{table}[!ht]
\centering\small
\begin{tabular}{l|ll|ll}
\hline
& \multicolumn{2}{c|}{lower bound} & \multicolumn{2}{c}{upper bound}\\
& previous & this paper & previous & this paper\\
\hline
$q = 2$ & $\ceil{\zeta}-1$\textsuperscript{$\dagger$} \cite{GLMRSW} & $\ceil{\zeta}$\textsuperscript{$\dagger$} & $\ceil{\zeta}$ \cite{GLMRSW} & $\ceil{\zeta}$\\
$q\ge3$ & $\ceil{\zeta}-1$\textsuperscript{$\dagger$} \cite{GLMRSW} & $\ceil{\zeta}$\textsuperscript{$\dagger$} & $O(1/\eps)$ \cite{GHK11} & $\ceil{\zeta}$\\
\hline
\end{tabular}
\caption{The list size of a uniformly random linear code of rate
$1-h_q(\rho)-\eps$, for non-integral $\zeta$, where
$\ceil{\zeta} = \floor \zeta+1$. All bounds hold
with probability $1-o(1)$ for $\eps$ small.}
\label{tab:prior}
\end{table}
\FloatBarrier

The lower bounds marked $\dagger$ in Table~\ref{tab:prior} hold only when the fractional part of $\zeta$ is large enough: ours guarantees $\ceil{\zeta}$ when the fractional part exceeds $t(\zeta)$ and $\ceil{\zeta}-1$ otherwise, and that of \cite{GLMRSW} guarantees $\ceil{\zeta}-1$ when it is at least $0.01$ and $\ceil{\zeta}-2$ otherwise. The upper bound of \cite{GLMRSW} holds for the stronger average-radius notion of list decoding. Our main theorem is stated below.

\begin{restatable}{theorem}{MainTwoPoint}\label{thm:main-twopoint}
Let $q$ be a prime power and $\rho\in\inparen{0,1-\frac1q}$. There is
$\eps_0 = \eps_0(q,\rho)>0$ such that the following holds for every fixed
$0<\eps<\eps_0$. Let $\cC\le\F_q^n$ be a uniformly random linear code
of rate $R = 1-h_q(\rho)-\eps$, and let $\zeta:=h_q(\rho)/\eps$. Then, with
probability $1-o(1)$ as $n\to\infty$,
\[
L^*(\cC,\rho)\;=\;\ceil{\zeta},
\]
unless the fractional part of $\zeta$ is at most an explicit cutoff
$t(\zeta)\le q^{-\Omega_{q,\rho}(\zeta)}$, in which case
$L^*(\cC,\rho)\in\inset{\floor \zeta,\ \floor \zeta+1}$
(Proposition~\ref{prop:twopoint}).
\end{restatable}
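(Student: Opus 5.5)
The proof reduces the theorem to the two-sided estimate on $\Vabs(q,L,\rho)$ from the abstract, via the threshold characterization of \cite{MRRSW}. The framework says the following. For a fixed list size $L$, a random linear code of rate $R$ is $(\rho,L)$-list-decodable with probability $1-o(1)$ when $R<1-\Vabs(q,L,\rho)-\delta$. It fails to be $(\rho,L)$-list-decodable with probability $1-o(1)$ when $R>1-\Vabs(q,L,\rho)+\delta$, for any fixed $\delta>0$. Here $\Vabs$ is defined relative to bad sets of $L+1$ codewords in a ball. With $R=1-h_q(\rho)-\eps$, the code is $(\rho,L)$-list-decodable w.h.p.\ exactly when $h_q(\rho)+\eps>\Vabs(q,L,\rho)$, up to the boundary case. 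I will use the two bounds
\[
h_q(\rho)\left(1+\tfrac1L\right)-q^{-\Omega_{q,\rho}(L)}\;\le\;\Vabs(q,L,\rho)\;\le\;h_q(\rho)\left(1+\tfrac1L\right).
\]
The upper bound comes from Theorem~\ref{thm:tool}; the lower bound comes from the explicit distribution of Proposition~\ref{prop:sigma-min}. Both are taken as established for all $L\ge L_0(q,\rho)$, and $\eps_0$ is chosen so that $\zeta\ge L_0+2$.

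\textbf{Upper bound $L^*\le\ceil\zeta$.} Take $L=\ceil\zeta$. Then $L\ge\zeta$ gives $h_q(\rho)/L\le\eps$, so $\Vabs(q,L,\rho)\le h_q(\rho)+h_q(\rho)/L\le h_q(\rho)+\eps$. The threshold theorem needs a strict gap. If $\zeta$ is non-integral, the gap $\eps-h_q(\rho)/L>0$ is a fixed constant, since $\eps$ is fixed as $n\to\infty$. If $\zeta$ is an integer, we are in the exceptional case (fractional part $0\le t(\zeta)$). There it suffices to show $L^*\le\floor\zeta+1$: take $L=\zeta+1$, giving $\Vabs\le h_q(\rho)+h_q(\rho)/(\zeta+1)<h_q(\rho)+\eps$ with a fixed positive gap. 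So w.h.p.\ the code is $(\rho,\ceil\zeta)$-list-decodable in the generic case, and $(\rho,\floor\zeta+1)$-list-decodable always.

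\textbf{Lower bound.} To get $L^*\ge\ceil\zeta$ in the generic case, I show that the code is w.h.p.\ \emph{not} $(\rho,L)$-list-decodable for $L=\floor\zeta=\ceil\zeta-1$. This requires $\Vabs(q,L,\rho)>h_q(\rho)+\eps$ with a fixed gap. Write $\zeta=L+f$ with $f\in(0,1)$ the fractional part. By the lower bound on $\Vabs$, it suffices to have
\[
\frac{h_q(\rho)}{L}-\frac{h_q(\rho)}{L+f}>q^{-cL}, \qquad\text{i.e.}\qquad \frac{h_q(\rho)\,f}{L(L+f)}>q^{-cL}.
\]
This holds as soon as $f>t(\zeta):=q^{-cL}\,L(L+1)/h_q(\rho)$. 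For $L$ large, this cutoff is itself $q^{-\Omega_{q,\rho}(\zeta)}$, since the polynomial factor is absorbed into a slightly smaller constant in the exponent. If instead $f\le t(\zeta)$, the same argument applied to $L=\floor\zeta-1$ gives a gap of order $h_q(\rho)/L^2\gg q^{-cL}$. This yields $L^*\ge\floor\zeta$, which together with the upper bound gives $L^*\in\{\floor\zeta,\floor\zeta+1\}$.

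\textbf{Main obstacle.} The only delicate point is matching the exact form of the threshold statement in \cite{MRRSW}: whether "bad at rate above threshold" is stated with $1-o(1)$ probability for fixed gap, and the off-by-one between a list of size $L$ and $L+1$ codewords in the definition of $\Vabs$. I would first check Definition~\ref{def:rat} to confirm which index convention makes the $(1+1/L)$ factor correspond to list size $L$. I would then verify that the fixed-$\eps$ gaps above are independent of $n$, so that the $\delta$ in the threshold theorem can be taken constant.
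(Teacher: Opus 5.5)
Your proposal is correct and follows the paper's proof of Proposition~\ref{prop:twopoint} essentially step for step. The upper bound comes at $L=\floor\zeta+1$ from the upper bound on $\Vabs$, and the lower bounds come at $L=\floor\zeta-1$ and $L=\floor\zeta$ from the explicit type $\sigma_{L+1}$. The paper applies Lemma~\ref{fact:threshold}\ref{item:threshold-super} directly to that full-support witness rather than to $\Vabs$, which settles the threshold-form concern you flag. It thereby gets the slightly sharper cutoff $t(\zeta)=\zeta\Delta_{m^+}/h_q(\rho)$ in place of your $q^{-cL}L(L+1)/h_q(\rho)$.
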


The only part of our argument that meaningfully restricts the range of $\eps$ is the \emph{exact} upper bound on $L^*$. In fact, our results imply a weaker statement at every rate below capacity: for every $0<\eps<1-h_q(\rho)$, with probability $1-o(1)$,
\[
L^*(\cC,\rho)\;\le\;\frac{h_q(\rho)}{\eps}+O_{q,\rho}(1),
\]
where the constant is the cutoff $L_{\Vabs}(q,\rho)$ of Theorem~\ref{thm:main-upper} (Corollary~\ref{cor:all-eps}).\footnote{In independent and concurrent work, Yuan and Zhu~\cite{YZ26} prove $L^*\le\ceil{\zeta}+O_{q,\rho}(1)$ for all $\eps<1-h_q(\rho)$ by a different argument. Their constant and that of Corollary~\ref{cor:all-eps} are of the same order, both exponential in $\kappa^{-2}$, with $\kappa$ as defined below. Table~\ref{tab:prior} and the discussion above describe the state of the problem before both of these concurrent works.} In order to remove the additive constant we need to bound the residual $\eta$ in Theorem~\ref{thm:tool} by $h_q(\rho)/L$, which requires several new ideas including a careful application of the Erd\H{o}s--Rado sunflower lemma.

In the high noise regime (when $\rho$ is near $1-\frac1q$) the resulting $\eps_0$ turns out to be small. Indeed, $\log(1/\eps_0) = \Theta_q(\kappa^{-2})$ for $\kappa := 1-\frac{q\rho}{q-1}$, whereas our lower bound holds for every $\eps\le(1-h_q(\rho))/c$ with $c$ an absolute constant. We discuss this further in Section~\ref{sec:limits}, where we pose a conjecture that would give the exact list size over the same range.

We know from \cite{MRRSW} that for each fixed $L$, the property of being
$(\rho,L)$-list-decodable has a sharp threshold rate, and the
threshold is $1-\Vabs(q,L,\rho)$ (Definition~\ref{def:rat}). Since a longer list is a weaker
requirement, the threshold rate $1-\Vabs(q,L,\rho)$ is non-decreasing in $L$, so $\Vabs(q,L,\rho)$ is non-increasing in $L$, and the list size
is
\begin{equation}\label{eq:implicit}
L^*(\cC,\rho) \;=\; \min\inset{L\ge1\suchthat
\Vabs(q,L,\rho)<h_q(\rho)+\eps}
\end{equation}
with probability $1-o(1)$, except when $h_q(\rho)+\eps$ equals $\Vabs(q,L,\rho)$ for some $L$. We determine $\Vabs$ up to an exponentially small additive error.

\begin{restatable}{theorem}{MainUpper}\label{thm:main-upper}
Let $q$ be a prime power and $\rho\in\inparen{0,1-\frac1q}$. There are
$L_{\Vabs} = L_{\Vabs}(q,\rho)<\infty$ and $C_\lambda = C_\lambda(q,\rho)<\infty$ such
that for
every integer $L\ge L_{\Vabs}$,
\[
h_q(\rho)\inparen{1+\frac1L}-C_\lambda\,\frac{(L+1)\,\lambda^{L+1}}{L}
\;\le\;\Vabs(q,L,\rho) \;\le\; h_q(\rho)\inparen{1+\frac 1L},
\]
where $\lambda = \lambda(q,\rho)\in(0,1)$ is the constant of
Lemma~\ref{lem:lambda}.
\end{restatable}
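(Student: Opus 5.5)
The theorem has two independent halves, and I would prove them separately. Both start from Definition~\ref{def:rat}: $\Vabs(q,L,\rho)$ is a supremum, over distributions $\tau$ of $(L+1)$-tuples $(X_1,\dots,X_{L+1})\in\F_q^{L+1}$ coming from $L+1$ points in a $\rho$-ball, of the minimum over nonzero subspaces $W$ of the normalized projection entropy $H(\pi_W X)/\dim W$.

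\textbf{Lower bound.} A supremum is bounded below by any single witness, so it suffices to evaluate one distribution. I would take the one of \cite{GLMRSW} and invoke Proposition~\ref{prop:sigma-min}, which analyzes it exactly. Concretely, the $L+1$ coordinates are the center plus independent noise of weight distribution $\rho$, suitably conditioned or symmetrized so that the type is balanced. The full-dimensional projection gives entropy about $(L+1)h_q(\rho)$, up to the conditioning correction. Dividing by the dimension $L$ of the relevant quotient (translating by the center removes one dimension) yields $h_q(\rho)(1+1/L)$.

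The remaining work is to check that no lower-dimensional projection does better. A generic projection of dimension $k$ has entropy at least about $(k+1)h_q(\rho)$ by symmetry and subadditivity. The only deficiency comes from events where the noise vectors coincide. These have probability at most $\lambda^{L+1}$, where $\lambda$ is the constant of Lemma~\ref{lem:lambda}, essentially $\max_x\Pr[\text{noise}=x]$ raised to a collision power. Summing over the $L+1$ coordinate choices gives the $C_\lambda(L+1)\lambda^{L+1}/L$ loss after normalization.

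\textbf{Upper bound.} This is the main obstacle. We must show that every admissible $\tau$ has some projection $W$ with $H(\pi_W X)\le h_q(\rho)(1+1/L)\dim W$. My plan is to reduce to Theorem~\ref{thm:tool} in three steps.

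1. Shift by the center so that each coordinate $X_i$ is nonzero with probability at most $\rho$. This is the sparsity condition.
2. If some two of the $X_i-X_j$ are proportional, pass to the quotient by that relation. This lowers the dimension and lets me induct on $L$.
3. Otherwise the constraints are pairwise non-proportional, which is exactly the hypothesis of Theorem~\ref{thm:tool}.

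Theorem~\ref{thm:tool} then gives a joint entropy bound of the form $H(X)\le (L+1)h_q(\rho)+\eta$. Taking $W$ to be the full $L$-dimensional quotient, the loss of one dimension to the shift is what produces the factor $(1+1/L)$. What has to be verified is that the residual $\eta$ is nonpositive, or small enough to be absorbed, for all $L\ge L_{\Vabs}$. I expect the delicate part to be making the induction over proportional pairs compatible with the normalization. The sunflower structure inside Theorem~\ref{thm:tool} controls the supports.

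Finally, I would take $L_{\Vabs}$ to be the maximum of the thresholds needed for the two halves.
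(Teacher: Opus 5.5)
\textbf{Upper bound: the argument does not close.} You read Theorem~\ref{thm:tool} as giving $H_q(X)\le(L+1)h+\eta$ and then divide by $L$. That leaves you with $\rat(\tau)\le h(1+\frac1L)+\frac{\eta}{L}$, so you are forced to hope that $\eta\le0$. This hope fails. Part~(2) requires $\eta>0$ and $N\ge N_0(q,\rho,\eta)$ forms, and part~(3) shows that for $q\ge3$ the residual is really there. The more basic problem is that the theorem's bound is $(k-1)h+\eta$, a full coordinate below the trivial $kh$. Here $k=r^\circ+1$ is the dimension of the configuration you get by restricting $Y=x-\alpha\onesvector$ to an information set of $C\oplus\F_q\onesvector$ (Lemma~\ref{lem:reduction}). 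Correspondingly, the denominator in $\rat(\tau)\le H_q(X)/(k-1)$ is $r^\circ$, not $L$. Your plan never separates types by their reduced rank, and that separation is the whole mechanism.

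Here is the case split the proof needs. If $r^\circ=L$, there are no forms and $k=L+1$. Subadditivity alone gives $H_q(Y)\le(L+1)h$, hence $\rat(\tau)\le h(1+\frac1L)$ with no residual (Proposition~\ref{prop:augment}), and Theorem~\ref{thm:tool} is not used at all. If $1\le r^\circ\le L-1$, the trivial bound only gives $h(1+\frac1{r^\circ})$. You must then save $\frac{hN}{L}$ below $kh$ using the $N=L-r^\circ\ge1$ forms (Remark~\ref{rem:target}), and there are two regimes:
\begin{itemize}
\item When $hN\le\gamma^*L$, a single form suffices (Corollary~\ref{cor:single-form}).
\item When $hN>\gamma^*L$, apply part~(2) at $\eta=h/L$. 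The residual is absorbed because $(k-1)h+\frac hL\le kh-\frac{hN}L$ is equivalent to $N\le L-1$. The hypothesis $N\ge N_0(q,\rho,h/L)$ holds because $N_0$ grows only polylogarithmically in $L$, while $N>\gamma^*L/h$ grows linearly.
\end{itemize}
This comparison is what defines $L_{\mathrm{low}}$ (Proposition~\ref{prop:low}). You also need $L\ge q$ to rule out $r^\circ=0$.

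Your step~2 should be dropped. After adjoining $\onesvector$, the coordinate functionals are $(\xi_i,-1)$; they share the last entry $-1$, so they are automatically pairwise non-proportional. Moreover, an induction that lowers $L$ runs the wrong way, since $h(1+\frac1L)$ is a weaker target for smaller $L$. The reduction you actually need is Lemma~\ref{lem:reduce}. When $\onesvector\in C$, it lowers the rank (not the length) so that $C+\F_q\onesvector$ is a direct sum.

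\textbf{Lower bound.} Citing Proposition~\ref{prop:sigma-min} for $\sigma_{L+1}$, valid when $L+1\ge m_{\mathrm{lb}}$, is exactly the paper's route. The explicit error term, however, comes from Lemma~\ref{lem:delta}, and your heuristic for it is off:
\begin{itemize}
\item The loss is $\Delta_{L+1}=H_q(u\mid u\bmod\F_q\onesvector)$.
\item $\lambda$ is the Bhattacharyya coefficient between $\nu_0$ and its translate by a nonzero scalar, not a maximal atom probability.
\item The factor $L+1$ comes from the bound $\log_q\frac1\delta\le m\log_q\frac{q-1}{\rho}$, not from summing over coordinates.
\end{itemize}
Likewise, a $D$-dimensional quotient of $\sigma_m$ need not have entropy near $(D+1)h$. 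For example, the one-dimensional quotient given by $Y_1-Y_2=u_1-u_2$ has entropy $h_q(\pi_2)=h_q(\rho(1+\kappa))<2h$, by strict concavity of $h_q$. What Proposition~\ref{prop:equi} proves, via supermodularity and $S_m$-symmetry, is the weaker bound $D\,Q_m$.
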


The lower bound in Theorem~\ref{thm:main-upper} comes from a single distribution, that of \cite[Definition~4.2]{GLMRSW}, an i.i.d.\ $\rho$-sparse vector plus a uniform multiple of the all-ones vector, and in Section~\ref{sec:lower} we compute the least entropy per dimension among its projections exactly. The upper bound follows from a general theorem about sparse random vectors, which uses the sparsity constraints only through their marginals and assumes no independence between them. We
call an $\F_q$-valued random variable \emph{$\rho$-sparse} if it is nonzero
with probability at most $\rho$, and let $e_1,\dots,e_k$ denote
the standard basis of $\F_q^k$. Vectors $a,b\in\F_q^k$ are \emph{proportional} if $a = cb$ for some $c\in\F_q\setminus\inset0$, that is, if they span the same line. $H_q$ denotes entropy in base $q$ (Definition~\ref{def:entropy}).

\begin{restatable}[Entropy under sparse linear constraints]{theorem}{ToolTheorem}\label{thm:tool}
Let $q$ be a prime power, $\rho\in\inparen{0,1-\frac1q}$, and let
$X\in\F_q^k$ be a random vector, each of whose coordinates is
$\rho$-sparse. Let $a_1,\dots,a_N\in\F_q^k$ be such that each
$\ip{a_j}{X}$ is $\rho$-sparse, and such that the $k+N$ vectors
$e_1,\dots,e_k,a_1,\dots,a_N$ are nonzero and pairwise
non-proportional.
\begin{enumerate}
\item If $N\ge1$, then $H_q(X)\le k\,h_q(\rho)-\gamma^*$, where
$\gamma^* = \gamma^*(q,\rho)>0$ (calculated in Appendix~\ref{app:constants}).
\item For every $\eta>0$ there is an explicit
$N_0 = N_0(q,\rho,\eta)<\infty$, polynomial in $\log(1/\eta)$ of a
degree depending on $(q,\rho)$, such that
if $N\ge N_0$ then
\[
H_q(X)\;\le\;(k-1)\,h_q(\rho)+\eta.
\]
\item For $q\ge3$ the residual $\eta$ cannot be removed. There are
instances with $k$ and $N$ arbitrarily large and $H_q(X)>(k-1)\,h_q(\rho)$.
\end{enumerate}
\end{restatable}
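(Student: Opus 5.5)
We handle the three parts separately, starting from the same basic inequality. Write $Y_j:=\ip{a_j}{X}$ for $j\le N$ and $X_i$ for the coordinates. By subadditivity, $H_q(X)\le\sum_i H_q(X_i)\le k\,h_q(\rho)$, because a $\rho$-sparse variable over $\F_q$ has entropy at most $h_q(\rho)$, with equality only for the law that is $0$ with probability $1-\rho$ and uniform on $\F_q\setminus\{0\}$ otherwise. The constraints $Y_j$ are functions of $X$, so they cost no entropy. What they do is force either correlations among the $X_i$ or a departure of the marginals from that extremal law.

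\textbf{Part 1.} Take one constraint $a=a_1$. Since $a$ is nonzero and not proportional to any $e_i$, its support $S$ has $|S|\ge2$. Let $X_i$ be i.i.d.\ with the extremal law. Then $\ip{a}{X}$ is a sum of at least two independent nonzero-uniform-scaled sparse variables, and its probability of being nonzero is $1-(1-\rho)^{|S|}-(\text{correction})>\rho$, strictly, uniformly in $|S|\ge2$. So the $\rho$-sparsity of $Y_1$ is violated by some margin $\delta(q,\rho)>0$. The goal is a stability form of the statement: if $H_q(X)\ge k\,h_q(\rho)-\gamma$, then the pair $(X_{i_1},X_{i_2})$ for $i_1,i_2\in S$ is close in total variation to the product of extremal laws, with distance $O(\sqrt\gamma)$ by Pinsker applied to the chain-rule deficiency. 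Restricting to $S$ (taking two coordinates and conditioning on the rest) also needs care, so we instead use that $\PR{Y_1\ne0}$ only depends on the law of $X_S$. We then bound the total correlation of $X_S$ by the deficiency $\gamma$. Compactness/continuity then gives that $\PR{Y_1\neq 0}\le\rho$ forces $\gamma\ge\gamma^*$. To make $\gamma^*$ explicit and independent of $|S|$, we reduce to two coordinates: condition on $X_{S\setminus\{i_1,i_2\}}$, and note that $Y_1=0$ requires $a_{i_1}X_{i_1}+a_{i_2}X_{i_2}=-c$, which for near-independent near-extremal pairs has probability bounded away from $1-\rho$.

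\textbf{Part 2.} This is the main obstacle. The idea is that many pairwise non-proportional constraints should essentially cost a full coordinate's worth of entropy. We plan to use the sunflower lemma on the supports $\supp(a_j)$. If many supports have bounded size, Erd\H{o}s--Rado yields a large sunflower with core $K$ and disjoint petals $P_j$. Conditioned on $X_K$, the variables $\ip{a_j|_{P_j}}{X}$ depend on disjoint blocks, and each must equal a value forced with high probability, because $Y_j$ is sparse. Non-proportionality prevents the petals from all being trivial. Averaging the Part 1 deficiency over $m$ disjoint petals gives a total deficiency of at least about $m\gamma^*$ times the probability that the shifted constraint bites. Alternatively, the constraints force $X_K$ to concentrate, which costs entropy on $K$. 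Balancing these two effects should give a deficiency of $h_q(\rho)-\eta$ once $m$ is polynomial in $\log(1/\eta)$. If instead the supports are large, then we apply a hypercontractivity/anti-concentration argument: a linear form in many near-independent sparse variables is nonzero with probability approaching $1-1/q>\rho$, which is a contradiction unless the entropy deficiency is large. Choosing the size cutoff as a function of $\eta$ and tracking the sunflower bound $s!\,(m-1)^s$ should give the polynomial dependence of $N_0$ on $\log(1/\eta)$.

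\textbf{Part 3.} We construct $X=Z+U\onesvector$, in the spirit of the GLMRSW distribution. Here $Z$ has i.i.d.\ sparse coordinates with slightly reduced sparsity $\rho'<\rho$, $U$ is uniform on $\F_q$ with a small probability mass, and the $a_j$ are chosen from $\{a:\sum_i a_i=0\}$. Then $\ip{a_j}{X}=\ip{a_j}{Z}$. Taking supports of size $2$, i.e.\ $a=e_i-ce_{i'}$, which are pairwise non-proportional when $q\ge3$, keeps $Y_j$ $\rho$-sparse after tuning. We then compute the entropy to be $(k-1)h_q(\rho)+\Omega(1)$ when $k$ is large.
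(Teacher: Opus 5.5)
Your Part~1 can be made rigorous along the lines you sketch. Condition on the other coordinates. The average of $D_q\inparen{\mu_x\,\|\,\nu_0^{\otimes2}}$ over the conditional laws $\mu_x$ of $(X_{i_1},X_{i_2})$ is then at most $k\,h_q(\rho)-H_q(X)$, using only marginal sparsity. Pinsker forces a deficiency of at least $2(\pi_2-\rho)^2/\ln q$. This is a weaker constant than the paper's $\gamma^*=\Dq{\rho}{\pi_2}$, which comes from the exact Gibbs tilt of Lemma~\ref{lem:tilt-opt}.

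\textbf{Part~2: the large-support branch.} This is the genuine gap. Near-independence of the coordinates follows only from a \emph{small} deficiency, but you must force a deficiency of $h_q(\rho)-\eta$. A single form, however large its support, forces only about $\Dq{\rho}{1-\frac1q}=1-h_q(\rho)$. Indeed, for large $w$, mixing $\nu_0$ conditioned on $\ip aX=0$ and on $\ip aX\neq0$ with weights $1-\rho$ and $\rho$ essentially attains $k\,h_q(\rho)-\Dq{\rho}{\pi_w}$. That is below $h_q(\rho)$ whenever $h_q(\rho)>\frac12$. So the savings of roughly $h_q(\rho)/\gamma^*$ forms must be added, and ``large support'' gives no control over how they interact. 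Forms $a_j=u+v_j$, with $|\supp u|$ huge and small disjoint $\supp v_j$, all have large support and all share $\ip uX$; Erd\H{o}s--Rado is useless on such huge sets.

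The paper supplies two ingredients here that your plan lacks.
\begin{itemize}
\item The dichotomy of Lemma~\ref{lem:dichotomy}. Either $T_{\mathrm{gp}}$ forms are in $W_{\mathrm{gp}}$-general position, and then their divergences add up to an error $\Psi\le q^{T_{\mathrm{gp}}}\kappa^{W_{\mathrm{gp}}}$ by the character expansion (Lemma~\ref{lem:expansion}, Proposition~\ref{prop:gp}). Or all forms lie in $\cU+\LowW{W_{\mathrm{gp}}}$ with $\dim\cU<T_{\mathrm{gp}}$.
\item The pigeonhole and change of basis of Proposition~\ref{prop:support-reduction}. This makes one form a coordinate and cuts a large subfamily down to supports of size at most $\omega=2W_{\mathrm{gp}}-1$, which depends only on $(q,\rho)$.
\end{itemize}
Only after these does the sunflower lemma apply.

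\textbf{Part~2: singleton petals.} Inside the sunflower, singleton petals save nothing on their own since $\pi_1=\rho$, so they are the decisive case. You leave this case as ``$X_K$ concentrates''. The paper makes it precise by leaving one core coordinate untilted, which gives $(k-1)h_q(\rho)+\log_q\inparen{1+(q-1)\lambda^{T_1}}$ via the Bhattacharyya coefficient $\lambda$ of Lemma~\ref{lem:lambda}. This is where $\eta$ enters, through a sunflower of size $O(\log(1/\eta))$. If your support cutoff instead grew with $\eta$, the factor $s!$ in the sunflower bound would make $N_0$ superpolynomial in $\log(1/\eta)$.

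\textbf{Part~3 fails as proposed.} The form $\ip{e_i-e_{i'}}{X}=Z_i-Z_{i'}$ is nonzero with probability $\rho'(1+\kappa')$, where $\kappa':=1-\frac{q\rho'}{q-1}\ge\kappa$ (Lemma~\ref{lem:activation} at $\rho'$). So sparsity needs $\rho'\le\rho/(1+\kappa)$, which is not a slight reduction. Each of the $s>\sqrt{2N}$ coordinates involved then loses the constant $h_q(\rho)-h_q(\rho')$, while the shift gains at most $H_q(U)\le1$. Hence $H_q(X)\le(k-s)h_q(\rho)+s\,h_q(\rho')+1<(k-1)h_q(\rho)$ once $N$ is large. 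Moreover, forms $e_i-ce_{i'}$ with $c\neq1$, which is where you invoke $q\ge3$, are not sum-zero, so they do see the shift.

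More fundamentally, an excess of $\Omega(1)$ with $N$ arbitrarily large contradicts your own Part~2, which caps the excess by any $\eta>0$ once $N\ge N_0(\eta)$. Any valid example must have excess tending to $0$. The paper's example (Remark~\ref{rem:residual}) is of a different kind. A core coordinate $X_0$ equals $1$ with tiny probability $p$, and on that event all other coordinates vanish. The forms are $ce_0+(1-c)e_j$ for $c\in\F_q\setminus\inset{0,1}$, which is where $q\ge3$ is used. This gives excess at least $p=q^{-C_0(k-1)-1}$, which is exactly the singleton-petal residual.
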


\emph{Theorem~\ref{thm:tool} implies the upper bound in Theorem~\ref{thm:main-upper}.} Given $L+1$ codewords in one ball of relative radius $\rho$, augment their span by the ball's center. This raises the dimension by one, say to $k$ (details in Lemma~\ref{lem:reduce}), and subtracting the center makes each of the $L+1$ shifted codewords $\rho$-sparse (Proposition~\ref{prop:augment}). Some $k$ of them determine the rest, an information set, and these form a random vector $X\in\F_q^k$. Each remaining shifted codeword is a linear form $\ip{a_j}{X}$, also $\rho$-sparse, and no two of the $k+N$ functionals are proportional (Lemma~\ref{lem:reduction}). These are the hypotheses of Theorem~\ref{thm:tool} with $N = L+1-k$, and we call such a pair of a random vector and forms a \emph{configuration} (Definition~\ref{def:config}). The entropy of the $L+1$ codewords is at most $H_q(X)$, and the center accounts for one of the $k$ dimensions, so their entropy per dimension is at most $H_q(X)/(k-1)$. Part~(2) at $\eta = h_q(\rho)/L$ gives $H_q(X)/(k-1)\le h_q(\rho)+\frac{h_q(\rho)}{L(k-1)}\le h_q(\rho)\inparen{1+\frac1L}$ once $N\ge N_0$. For smaller $N\ge1$ one form already saves enough (Corollary~\ref{cor:single-form}), and at $N = 0$ Proposition~\ref{prop:augment} alone gives the bound (Section~\ref{sec:upper}).

\emph{Theorem~\ref{thm:main-upper} implies Theorem~\ref{thm:main-twopoint}.}
Since $\eps = h_q(\rho)/\zeta$, the condition $h_q(\rho)\inparen{1+\frac1L}<h_q(\rho)+\eps$ in \eqref{eq:implicit} says exactly $L>\zeta$. So if $\Vabs(q,L,\rho)$ were exactly $h_q(\rho)\inparen{1+\frac1L}$, then \eqref{eq:implicit} would give $L^* = \ceil\zeta$. By Theorem~\ref{thm:main-upper} it is at most this value and falls short of it by an exponentially small amount, which can only matter at $L = \floor\zeta$, where $h_q(\rho)\inparen{\frac1{\floor\zeta}-\frac1\zeta}$ is proportional to the fractional part of $\zeta$. Hence $L^*\in\inset{\floor\zeta,\ceil\zeta}$, and $L^* = \ceil\zeta$ once the fractional part exceeds $t(\zeta) := \zeta\Delta_{\ceil\zeta}/h_q(\rho)$, where $\Delta_m$ is the gap of Lemma~\ref{lem:delta}.

\subsection{Organization}\label{sec:organization}

Section~\ref{sec:prelims} contains the necessary definitions and the threshold
framework of \cite{MRRSW}, and Section~\ref{sec:overview} is a roadmap of the proof of
Theorem~\ref{thm:main-upper}. Section~\ref{sec:reduction} carries out the reduction from types to configurations sketched after Theorem~\ref{thm:tool}, Section~\ref{sec:upper} proves the upper bound on $\Vabs$ (and
also Theorem~\ref{thm:tool}), and Section~\ref{sec:lower} the lower bound. We
put together the main theorems in Section~\ref{sec:assembly}. Section~\ref{sec:limits} collects further directions, among them the range of $\eps$, and Appendix~\ref{app:limits} shows the limits of Theorem~\ref{thm:tool}. Appendix~\ref{app:constants} tracks the constants introduced in the proofs, and Appendix~\ref{app:deferred} collects the proofs deferred from the body.

\section{Preliminaries}\label{sec:prelims}
Fix a prime power $q$ and a radius
$\rho\in\inparen{0,1-\frac1q}$. All entropies are measured in base $q$.

\subsection{Entropy and divergence}\label{sec:entropies}

\begin{definition}\label{def:entropy}
For a random variable $X$ let
\[
H_q(X) := -\sum_x \PR{X=x}\log_q \PR{X=x},
\]
and for a distribution $\mu$ let $H_q(\mu)$ be the entropy of a sample
from $\mu$. Next,
\[
h_q(a) := a\log_q(q-1) - a\log_q a - (1-a)\log_q(1-a)
\qquad (0\le a\le1)
\]
is the $q$-ary entropy function, the entropy of a coordinate that is
$0$ with probability $1-a$ and otherwise uniform on
$\F_q\setminus\inset0$. We write
$h_q^{\mathrm{bin}}(a) := -a\log_q a-(1-a)\log_q(1-a)$ for the binary
entropy in base $q$, so that
$h_q(a) = h_q^{\mathrm{bin}}(a)+a\log_q(q-1)$. Finally,
\[
\Dq{a}{b} := \frac{1}{\ln q}\inbrak{a\ln\frac ab + (1-a)\ln\frac{1-a}{1-b}}
\qquad (0<a,b<1)
\]
is the base-$q$ Kullback--Leibler divergence
$D\inparen{\Ber(a)\,\|\,\Ber(b)}$.
\end{definition}

Throughout, $h := h_q(\rho)$ is the value at the fixed radius and
$h_q(\cdot)$ the function. On $\inbrak{0,1-\frac1q}$ the function $h_q$
increases strictly, with $h_q\inparen{1-\frac1q} = 1$. The divergence
is nonnegative and vanishes only at $a = b$, and for fixed $a$ it
increases strictly in $b$ on $[a,1)$, since
$\partial_b\Dq ab = (b-a)/\inparen{b(1-b)\ln q}$.

\subsection{Sparsity}\label{sec:sparsity}

\begin{definition}\label{def:sparse}
We say that a random variable $Y$ with values in $\F_q$ is
\emph{$\rho$-sparse} if $\PR{Y\neq0}\le\rho$. A random vector in $\F_q^k$ is \emph{$\rho$-sparse} if each of its coordinates is; its coordinates need not be independent.
\end{definition}

A $\rho$-sparse $Y$ has $H_q(Y)\le h_q(\rho)$. Indeed, let
$p := \PR{Y\neq0}$. Conditioned on $Y\neq0$ at most $q-1$ values are
left, so $H_q(Y)\le h_q^{\mathrm{bin}}(p)+p\log_q(q-1) = h_q(p)$, and
$h_q(p)\le h_q(\rho)$ because $p\le\rho<1-\frac1q$.

Equality holds precisely when $\PR{Y\neq0} = \rho$ and the nonzero
values are equally likely. We write $\nu_0$ for the distribution,
\[
\nu_0(0) = 1-\rho, \qquad \nu_0(a) = \frac{\rho}{q-1}\quad(a\neq0),
\]
and use the same symbol for its product on any $\F_q^k$, so that
$X\sim\nu_0$ means the coordinates of $X$ are i.i.d.\ with this
distribution.

\subsection{Types}\label{sec:types-def}

A code $\cC\le\F_q^n$ fails to be $(\rho,L)$-list-decodable precisely
when it contains $m = L+1$ distinct codewords $c^{(1)},\dots,c^{(m)}$,
all within relative Hamming distance $\rho$ of some $\alpha\in\F_q^n$.
Following \cite{MRRSW}, the empirical distribution over $j\in[n]$ of the
rows $\inparen{c^{(1)}_j,\dots,c^{(m)}_j}\in\F_q^m$, coupled with $\alpha_j$,
determines the rate at which a random code first contains such a set of
codewords.

We depart slightly from \cite{MRRSW} in writing a type in coordinates on
the span of its support rather than as a distribution on $\F_q^m$, since
the configurations of Section~\ref{sec:configs} would become unwieldy in
their convention.

\begin{definition}[Type]\label{def:type}
Let $m\ge 2$ and $r\ge 1$. A \emph{type of length $m$ and rank $r$} is a
pair $\tau = (M,\tilde\tau)$ where
\begin{enumerate}[label=\textup{(\roman*)}]
\item $M = (\xi_1,\dots,\xi_m)$ is an $m$-tuple of vectors spanning
$\F_q^r$. The $\xi_i$ need not be nonzero, and a zero $\xi_i$ arises after the reduction of Section~\ref{sec:augment}. 
\item $\tilde\tau$ is a probability distribution on $\F_q^r$ whose support
spans $\F_q^r$.
\end{enumerate}
For $u\sim\tilde\tau$, the \emph{columns} of $\tau$ are the
$\F_q$-valued random variables $u_i := \ip{\xi_i}{u}$, $i\in[m]$. We require
$\PR{u_i\neq u_j}>0$ for $i\neq j$ (\emph{distinct columns}), so in
particular the $\xi_i$ are distinct. When $\xi_i = 0$ the column $u_i$ is identically $0$, and distinct columns then requires every other column to be nonzero with positive probability, and at most one $\xi_i$ is $0$. The \emph{column code} of $\tau$ is
\[
C = C(\tau) := \inset{\inparen{\ip{\xi_1}{u},\dots,\ip{\xi_m}{u}}\suchthat u\in\F_q^r}
\;\le\;\F_q^m, \qquad \dim C = r,
\]
where the last equality holds because the $\xi_i$ span $\F_q^r$, which
makes
$u\mapsto\inparen{\ip{\xi_1}{u},\dots,\ip{\xi_m}{u}}$ injective. The
\emph{random codeword} of $\tau$ is $x := (u_1,\dots,u_m)\in C$, so
$H_q(x) = H_q(\tilde\tau)$. A matrix $B\in\F_q^{n\times m}$ is \emph{of type
$\tau$} if the empirical distribution of its rows is that of $x$.

$\tau$ is \emph{absolutely $\rho$-clustered} if there is an $\F_q$-valued
random variable $\alpha$, defined on the same probability space as $u$ (a
\emph{coupled center}, not required to be a function of $u$), such that
\[
\PR{u_i\neq \alpha}\;\le\;\rho\qquad\text{for every } i\in[m].
\]
\end{definition}

\begin{lemma}\label{lem:normal-form}
Let $m\ge2$ and let $y\sim\mu$ be a random vector in $\F_q^m$ with
$\PR{y_i\neq y_j}>0$ for $i\neq j$, let $U:=\spn(\supp\mu)$ and
$r:=\dim U$, so that $\mu$ is supported on $U$. Fix a linear
isomorphism $\phi:U\to\F_q^r$ and let $\xi_i\in\F_q^r$ represent the
functional $v\mapsto\inparen{\phi^{-1}v}_i$ on $\F_q^r$, so
that $\ip{\xi_i}{\phi(y)} = y_i$ for every $y\in U$. Then
$\tau := \inparen{(\xi_1,\dots,\xi_m),\ \phi_*\mu}$ is a type of
length $m$ and rank $r$, its columns are jointly distributed as
$(y_1,\dots,y_m)$, and any coupled center for $\mu$ is one for $\tau$.
\end{lemma}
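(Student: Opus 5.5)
The plan is to verify each clause of Definition~\ref{def:type} directly from the construction, since the lemma is a bookkeeping statement about changing coordinates. The first step is to check that the $\xi_i$ are well defined and satisfy the stated identity. For each $i$, the map $v\mapsto(\phi^{-1}v)_i$ is a linear functional on $\F_q^r$, because it is the composition of the linear map $\phi^{-1}$ with the $i$-th coordinate projection. Every linear functional on $\F_q^r$ has the form $v\mapsto\ip{\xi}{v}$ for a unique $\xi\in\F_q^r$, using the standard nondegenerate bilinear form. This gives $\xi_i$. Applying the identity at $v=\phi(y)$ for $y\in U$ yields $\ip{\xi_i}{\phi(y)}=y_i$.

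The second step, and the only one with any content, is to show that $(\xi_1,\dots,\xi_m)$ spans $\F_q^r$. Suppose the span were proper. Then there would be a nonzero $v\in\F_q^r$ with $\ip{\xi_i}{v}=0$ for all $i$. Writing $y=\phi^{-1}(v)\in U$, this says $y_i=0$ for every $i$, so $y=0$, and hence $v=0$, a contradiction. Equivalently, the map $v\mapsto(\ip{\xi_i}{v})_i$ is just $\phi^{-1}$, viewed as a map into $\F_q^m$, and is therefore injective.

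The third step checks the distribution $\tilde\tau:=\phi_*\mu$. Its support is $\phi(\supp\mu)$. This set spans $\phi(U)=\F_q^r$, because $\supp\mu$ spans $U$ and $\phi$ is an isomorphism.

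The fourth step checks the columns. If $u\sim\phi_*\mu$, we realize it as $u=\phi(y)$ with $y\sim\mu$. Then $u_i=\ip{\xi_i}{\phi(y)}=y_i$ almost surely, using that $\mu$ is supported on $U$. So the column vector $(u_1,\dots,u_m)$ is jointly distributed as $(y_1,\dots,y_m)$. The distinct-columns requirement $\PR{u_i\neq u_j}>0$ is then exactly the hypothesis on $y$. The rank is $r$, and the length $m\ge2$ is given.

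The last step handles centers. A coupled center $\alpha$ for $\mu$ lives on the same probability space as $y$ and satisfies $\PR{y_i\neq\alpha}\le\rho$ for each $i$. Under the coupling $u=\phi(y)$, the same $\alpha$ is defined on the probability space of $u$, and $\PR{u_i\neq\alpha}=\PR{y_i\neq\alpha}\le\rho$. So $\alpha$ is a coupled center for $\tau$.

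I expect no real obstacle. The one point needing care is the second step, the spanning of the $\xi_i$, which relies on injectivity of $\phi^{-1}$ as a map into $\F_q^m$. A second point of care is to phrase the coupling by realizing $u$ as $\phi(y)$, so that the center transfers literally rather than merely in distribution.
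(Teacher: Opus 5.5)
Your proposal is correct and follows the paper's proof essentially step for step: the support of $\phi_*\mu$ spans because it is $\phi(\supp\mu)$, the columns and any coupled center transfer by realizing $u=\phi(y)$, and the $\xi_i$ span because a common zero $v$ would give $\phi^{-1}v=0$. The one clause of Definition~\ref{def:type} that neither you nor the paper states explicitly is $r\ge1$, which holds because distinct coordinates rule out $\mu$ being the point mass at $0$.
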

\begin{proof}Deferred to Appendix~\ref{app:deferred}.\renewcommand{\qedsymbol}{}\end{proof}

\begin{definition}\label{def:dist-vocab}
Let $\mu$ be a distribution on $\F_q^m$ and let $y\sim\mu$. Then $\mu$
has \emph{distinct coordinates} if $\PR{y_i\neq y_j}>0$ for $i\neq j$,
and is \emph{absolutely $\rho$-clustered} if some $\F_q$-valued $\alpha$,
defined on the same probability space as $y$, has $\PR{y_i\neq \alpha}\le\rho$
for every $i$.
\end{definition}

The empirical distribution above is therefore an absolutely
$\rho$-clustered distribution with distinct coordinates, and so yields a
type of length $m$ by Lemma~\ref{lem:normal-form}, and conversely the random
codeword of such a type is a random vector of the same kind.

\subsection{Implied distributions and the threshold framework}

Suppose $\cC\le\F_q^n$ contains $m$ distinct codewords forming a matrix $B$
of type $\tau$. By linearity $\cC$ contains every combination of the
columns of $B$, and the combination with coefficients $c_1,\dots,c_m$ is
distributed as $\ip{\sum_ic_i\xi_i}{u}$ for $u\sim\tilde\tau$. So any $d$
combinations for which the vectors $\sum_ic_i\xi_i$ are linearly
independent give a matrix $B'\in\F_q^{n\times d}$ in $\cC$ whose rows are
distributed as $Au$, with $A\in\F_q^{d\times r}$ the matrix having those
vectors as its rows.

\begin{definition}[Implied distribution, cf.~{\cite[Definition~2.6]{MRRSW}}]\label{def:implied}
Let $\tau = (M,\tilde\tau)$ be a type of rank $r$ and let
$A\in\F_q^{d\times r}$ be a matrix of full row rank $d\ge1$. The
distribution of $Au\in\F_q^d$, where $u$ is sampled according to
$\tilde\tau$, is said to be \emph{$\tau$-implied}. Likewise, for a
distribution $\mu$ on $\F_q^m$ and $A\in\F_q^{d\times m}$ of full row rank,
the distribution of $Ay$ with $y\sim\mu$ is \emph{$\mu$-implied}.
\end{definition}

\begin{lemma}\label{lem:implied}
Let $\tau = (M,\tilde\tau)$ be a type of rank $r$, let
$A\in\F_q^{d\times r}$ have full row rank $d\ge 1$, and let
$V:=\Ker A$. Then the distribution of $Au$, for $u\sim\tilde\tau$, is
linearly isomorphic to the \emph{quotient} $\tilde\tau\bmod V$, the
image of $\tilde\tau$ in $\F_q^r/V$, and every $V\lneq\F_q^r$ arises
this way. So the $\tau$-implied distributions, up to linear
isomorphism, are exactly
$\inset{\tilde\tau\bmod V\suchthat V\lneq\F_q^r}$, and the distribution
of $Au$ has spanning support, rank $r-\dim V$ and entropy
$H_q\inparen{\tilde\tau\bmod V}$.
\end{lemma}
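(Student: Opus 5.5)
The lemma is elementary linear algebra, and I would prove its three claims in order: the isomorphism, the surjectivity onto proper subspaces, and the numerical facts. Throughout, let $A\in\F_q^{d\times r}$ have full row rank $d\ge1$ and $V=\Ker A$.

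\emph{The isomorphism.} Since $A$ has full row rank, it is a surjection $\F_q^r\to\F_q^d$. So $\dim V = r-d < r$, which makes $V$ a proper subspace. By the first isomorphism theorem $A$ factors as $A=\bar A\circ\pi$, where $\pi:\F_q^r\to\F_q^r/V$ is the quotient map and $\bar A:\F_q^r/V\to\F_q^d$ is a linear isomorphism. Hence $Au=\bar A(\pi(u))$. The law of $Au$ is therefore the pushforward under $\bar A$ of the law of $\pi(u)$, and the latter is $\tilde\tau\bmod V$ by definition.

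\emph{Every proper subspace arises.} Let $V\lneq\F_q^r$ be given and set $d:=r-\dim V\ge1$. Choose a basis $w_1,\dots,w_d$ of the annihilator $V^\perp=\inset{w\suchthat \ip{w}{v}=0\ \forall v\in V}$, which has dimension $d$. Let $A$ be the matrix with rows $w_1,\dots,w_d$. It has full row rank because the $w_i$ are independent. Its kernel is $(V^\perp)^\perp$, and this equals $V$: the inclusion $V\subseteq(V^\perp)^\perp$ is immediate, and the two spaces have the same dimension. One remark is needed here. Over $\F_q$ the standard bilinear form can be degenerate on a subspace, but it is nondegenerate on all of $\F_q^r$. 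That is all the dimension count $\dim V^\perp=r-\dim V$ uses, so the argument goes through.

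\emph{Rank, support and entropy.} Linear isomorphisms preserve spans and are bijections. The entropy statement follows directly, since $H_q(Au)=H_q(\pi(u))=H_q(\tilde\tau\bmod V)$. For the support, note that $\supp(Au)=A(\supp\tilde\tau)$. Since $\supp\tilde\tau$ spans $\F_q^r$ and $A$ is surjective, $\spn\inparen{A(\supp\tilde\tau)}=A(\F_q^r)=\F_q^d$, so the support of $Au$ spans. Finally, the rank is $d=r-\dim V$.

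None of these steps is a real obstacle. The one point needing a remark is the annihilator dimension count in the second step, where the degeneracy of the standard form on subspaces over finite fields has to be set aside as above.
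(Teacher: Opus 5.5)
Your proposal is correct and follows essentially the same route as the paper: factor $A$ through the quotient $\F_q^r/V$ via the first isomorphism theorem, transport entropy, rank and spanning support across the induced isomorphism, and for the converse realize any $V\lneq\F_q^r$ as the kernel of a surjection onto $\F_q^{r-\dim V}$. The only difference is that you build that surjection explicitly from a basis of $V^\perp$, where the paper simply takes any surjection with kernel $V$. That extra detail is valid but not needed, so your remark about degeneracy of the standard form on subspaces can be skipped entirely.
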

\begin{proof}Deferred to Appendix~\ref{app:deferred}.\renewcommand{\qedsymbol}{}\end{proof}

There are $q^{nH_q(\tilde\tau)}\poly(n)$ matrices of type $\tau$ with $n$
rows, and a random linear code of rate $R$ contains a fixed one of them
with probability $q^{-n(1-R)r}$, so in expectation it contains about
$q^{n\inparen{H_q(\tilde\tau)-(1-R)r}}$ of them
\cite[Section~2]{MRRSW}. The expectation decays for rates $R$ below
$1-H_q(\tilde\tau)/r$ and grows for rates above it. The same count applies to
every quotient of $\tilde\tau$, and a code containing a matrix of type
$\tau$ contains a matrix of each of them, so no matrix of type $\tau$
appears before the quotient of least entropy per dimension does.

\begin{definition}\label{def:rat}
For a type $\tau$ of rank $r$ let
\[
\rat(\tau) := \min_{V\lneq\F_q^r}\frac{H_q(\tilde\tau\bmod V)}{r-\dim V}
\;\in\;[0,1],
\]
the least entropy per dimension among the implied distributions of
$\tau$ (Lemma~\ref{lem:implied}). The upper bound holds because a
distribution on $\F_q^{r-\dim V}$ has entropy at most $r-\dim V$. Let
\[
\Vabs(q,L,\rho) := \sup\,\rat(\tau),
\]
the supremum over all absolutely $\rho$-clustered types $\tau$ of length
$m = L+1$. For a distribution $\mu$ on $\F_q^m$ with distinct coordinates we
write $\rat(\mu)$ for $\rat(\tau)$, where $\tau$ is a type whose columns are
distributed as a sample from $\mu$ (Lemma~\ref{lem:normal-form}). Any two
such $\tau$ differ by a linear change of coordinates, under which $\rat$ is
invariant, so this is well defined.
\end{definition}

Taking $V = 0$ gives
\begin{equation}\label{eq:rat-trivial}
\rat(\tau)\;\le\;\frac{H_q(\tilde\tau)}{r}.
\end{equation}

In the language of \cite{GLMRSW}, $\tau$ is
\emph{$\gamma$-implicitly rare} when some implied distribution has
entropy below $\gamma$ times its rank, and by Lemma~\ref{lem:implied} this
says exactly that $\rat(\tau)<\gamma$.

For a family $\cB_n\subseteq\F_q^{n\times m}$ of matrices with
distinct columns, invariant under permutations of rows and of columns,
\cite{MRRSW} gives the threshold rate of the property ``$\cC$ contains no matrix from
$\cB_n$'' as $\min_\mu\max_{\mu'}\inparen{1-H_q(\mu')/d(\mu')}\pm o(1)$, the
minimum over the row distributions $\mu$ of $\cB_n$ and the maximum over the
$\mu$-implied distributions $\mu'$ of positive rank $d(\mu')$. A code below
that rate by a fixed margin has the property with probability $1-o(1)$, and
one above it by a fixed margin fails it, the $o(1)$ depending on $q$, $m$ and
the margin. Their code is the kernel of a uniformly random parity-check
matrix rather than the image of a uniformly random generator matrix, and the
two ensembles are interchangeable for statements holding with probability
$1-o(1)$ \cite[Footnote~4]{MRRSW}.

\begin{lemma}[Threshold for list-decodability]\label{fact:threshold}
Fix $L\ge1$ and $\rho\in\inparen{0,1-\frac1q}$, let $m := L+1$, and let
$\cC\le\F_q^n$ be a uniformly random $\F_q$-linear code of rate $R$. Then,
with probability $1-o(1)$ as $n\to\infty$:
\begin{enumerate}[label=\textup{(\roman*)}]
\item\label{item:threshold-sub} if $R<1-\Vabs(q,L,\rho)$, then $\cC$ is
$(\rho,L)$-list-decodable;
\item\label{item:threshold-super} if $\mu$ is a distribution on $\F_q^m$
that is absolutely $\rho$-clustered with distinct coordinates, coupled with a center $\alpha$ so that the pair $(y,\alpha)$ has full support on $\F_q^m\times\F_q$, and
$R>1-\rat(\mu)$, then $\cC$ is not $(\rho,L)$-list-decodable.
\end{enumerate}
\end{lemma}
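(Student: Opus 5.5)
This lemma is a specialization of the general threshold theorem of \cite{MRRSW} recalled just above it. We apply that theorem to the family $\cB_n$ of $n\times m$ matrices with distinct columns whose columns all lie within Hamming distance $\rho n$ of a common $\alpha\in\F_q^n$. We first check that $\cB_n$ is invariant under permutations of rows and of columns. Permuting columns permutes the codewords. Permuting rows permutes the coordinates of both the codewords and $\alpha$, so the common center persists. By construction, $\cC$ contains no matrix of $\cB_n$ exactly when $\cC$ is $(\rho,L)$-list-decodable.

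Next we identify the row distributions of $\cB_n$. The empirical row distribution $\mu$ of $B\in\cB_n$, coupled with the empirical distribution of $\alpha_j$, is absolutely $\rho$-clustered with distinct coordinates (Definition~\ref{def:dist-vocab}). Conversely, any such $\mu$ whose probabilities are multiples of $1/n$ is the row distribution of some $B\in\cB_n$. By Lemma~\ref{lem:implied} and Definition~\ref{def:rat}, the inner quantity $\max_{\mu'}\inparen{1-H_q(\mu')/d(\mu')}$ equals $1-\rat(\mu)$. Hence the threshold is $\min_\mu(1-\rat(\mu)) = 1-\sup_\mu \rat(\mu)$.

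For part~\ref{item:threshold-sub}, we must show that this supremum over $n$-realizable $\mu$ is at most $\Vabs(q,L,\rho)$. This holds because every such $\mu$ yields an absolutely $\rho$-clustered type of length $m$ via Lemma~\ref{lem:normal-form}, and $\rat$ is invariant under that change of coordinates. The threshold is therefore at least $1-\Vabs$. A rate $R<1-\Vabs$ lies below it by a fixed margin, which is independent of $n$ since $q$, $m$ and $R$ are fixed. The MRRSW statement then gives list-decodability with probability $1-o(1)$, after passing between the parity-check and generator ensembles as in \cite[Footnote~4]{MRRSW}.

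For part~\ref{item:threshold-super}, we cannot invoke the min-max formula directly, because $\mu$ need not have denominators $n$. I would instead use the second-moment (or MRRSW's "above threshold") argument applied to types close to $\mu$. For each $n$, round $(y,\alpha)$ to an $n$-realizable joint distribution $(y^{(n)},\alpha^{(n)})$ whose total variation distance from $(y,\alpha)$ is $O(1/n)$. The full-support hypothesis is what makes this rounding safe in three ways:
\begin{itemize}
\item It keeps the support, and hence every span $\spn(\supp)$, unchanged, so the ranks of all quotients are preserved.
\item It keeps the coordinates distinct.
\item It allows mass to be moved toward $\{y_i=\alpha\}$, so that $\PR{y^{(n)}_i\neq\alpha^{(n)}}\le\rho$ still holds exactly, for example by first perturbing $(y,\alpha)$ slightly inward and then rounding.
\end{itemize}
Entropy is continuous in total variation, and there are only finitely many quotients $V$, so $\rat(\mu^{(n)})\to\rat(\mu)$. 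Consequently $R>1-\rat(\mu^{(n)})$ by a fixed margin for all large $n$. MRRSW's sufficiency direction then says that $\cC$ contains some matrix whose row distribution is $\mu^{(n)}$ with probability $1-o(1)$, and such a matrix lies in $\cB_n$.

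The main obstacle is this last step: making sure the MRRSW guarantee is uniform over the sequence $\mu^{(n)}$ rather than stated for a single distribution. I would handle it by citing their theorem in the form whose $o(1)$ depends only on $q$, $m$ and the margin, as stated before the lemma. The boundary cases, where $\rho$-clusteredness holds only with equality, are exactly why the perturbation toward the center is needed.
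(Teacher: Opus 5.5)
Your proposal is correct and takes essentially the paper's route: the same family $\cB_n$, the identification of the inner maximum with $1-\rat(\mu)$ via Lemma~\ref{lem:implied}, the bound $\rat(\mu)\le\Vabs(q,L,\rho)$ for part~\ref{item:threshold-sub}, and for part~\ref{item:threshold-super} a rounding of the coupled pair $(y,\alpha)$ to masses in $\frac1n\Z$ that preserves full support, combined with continuity of the finitely many quotient entropies and the $o(1)$ of \cite{MRRSW} that depends only on $q$, $m$ and the margin. The only difference is cosmetic: the paper rounds each atom's mass directly and moves the excess onto the atoms with $y=\alpha\onesvector$, which lowers every $\PR{y_i\neq\alpha}$, so your preliminary inward perturbation is not needed.
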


\begin{proof}
For $\cB_n$ the matrices in $\F_q^{n\times m}$ whose $m$ distinct columns lie in a common Hamming ball of relative radius $\rho$, the property that $\cC$ contains no matrix from $\cB_n$ is exactly $(\rho,L)$-list-decodability \cite[proof of Theorem~1.2]{MRRSW}. Its row distributions are the absolutely $\rho$-clustered distributions with distinct coordinates whose coupling with a center has all masses in $\frac1n\Z$, and we write $\cT_n$ for this set. By Lemma~\ref{lem:implied} the inner maximum is $1-\rat(\mu)$, so the threshold rate is $R_n^* := \min_{\mu\in\cT_n}\inparen{1-\rat(\mu)}$. Part~\ref{item:threshold-sub} follows because $\rat(\mu)\le\Vabs(q,L,\rho)$ for every $\mu\in\cT_n$, so a rate $R<1-\Vabs(q,L,\rho)$ lies below $R_n^*$ by the fixed margin $1-\Vabs(q,L,\rho)-R$ for every $n$. For part~\ref{item:threshold-super}, round the masses of the coupled pair $(y,\alpha)$ to multiples of $\frac1n$, moving the excess onto the atoms with $y = \alpha\onesvector$, which lowers every $\PR{y_i\neq\alpha}$, to get $\mu_n\in\cT_n$. For $n$ large this preserves the (full) support of $\mu$, and then $\rat(\mu_n)\to\rat(\mu)$, since $\rat$ is the minimum of finitely many functions $\mu\mapsto H_q(\mu\bmod V)/(r-\dim V)$, each continuous in the masses. So a rate $R>1-\rat(\mu)$ lies above $R_n^*$ by a fixed margin for all large $n$, as in \cite{GLMRSW}.
\end{proof}

\section{Overview of the proof of \texorpdfstring{Theorem~\ref{thm:main-upper}}{Theorem 1.2}}\label{sec:overview}
For the upper bound we show that $\rat(\tau)\le h\inparen{1+\frac1L}$
for \emph{every} absolutely $\rho$-clustered type $\tau$ of length
$m = L+1$, once $L$ is large in terms of $q$ and $\rho$. For
the lower bound we exhibit a \emph{single} such type
whose $\rat$ comes within $q^{-\Omega(L)}$ of that value.

 In
Section~\ref{sec:reduction} we reduce every absolutely
$\rho$-clustered type to a
\emph{configuration}: a random vector with $\rho$-sparse coordinates,
constrained by linear forms whose evaluations are also $\rho$-sparse.
 By absolute clustering, every coordinate
of the random codeword $x$ equals the coupled center $\alpha$ except with
probability at most $\rho$, so every coordinate of $x-\alpha\onesvector$
is $\rho$-sparse. The shift is a multiple of $\onesvector$, and we adjoin the line $\F_q\onesvector$ to the column code $C$, first arranging that it is not
already there. If $\onesvector\in C$, Lemma~\ref{lem:reduce} passes to a
type of rank one lower and $\rat$ no smaller in which
$\onesvector\notin C$. In either case the resulting rank is the \emph{reduced rank} $r^\circ$
(Definition~\ref{def:reduced-rank}). Augmenting by
$\onesvector$ then gives $C' = C+\F_q\onesvector$, a direct sum of
dimension $k := r^\circ+1$ containing the shifted codeword
$Y := x-\alpha\onesvector$, and forces the $m$ coordinate functionals of
$C'$ to be pairwise non-proportional (Step~2 of the proof of
Proposition~\ref{prop:augment}). Finally we
restrict to an \emph{information set}, $k$ coordinates of $Y$ which
determine the rest. The retained coordinates form a random vector
$X\in\F_q^k$, each discarded coordinate becomes a linear \emph{form}
evaluated on $X$, and the result is a configuration of dimension $k$ and
length $m$ (Definition~\ref{def:config}), whose $m$ vectors are pairwise
non-proportional. This is the setting of Theorem~\ref{thm:tool}.
Lemma~\ref{lem:reduction} gives
\[
\rat(\tau)\;\le\;\frac{H_q\inparen{X\bmod\F_q\onesvector}}{k-1}
\;\le\;\frac{H_q(X)}{k-1},
\]
and we use only the right-hand side in our analysis.

Since each coordinate carries entropy at most $h$,
$H_q(X)\le k\,h$ always. By Remark~\ref{rem:target} it suffices to show $H_q(X)\le k\,h-\frac{hN}{L}$, where $N = L-r^\circ$ is the number of forms, a saving of $\frac{hN}{L}$ below this trivial bound.

In Sections~\ref{sec:tilting}--\ref{sec:low} we bound $H_q(X)$
using all the sparsity constraints at once by an exponential change of
measure, a \emph{tilt}. The tilt is normalised so that the coordinate constraints alone give back the trivial bound $k\,h$, and each additional form then lowers it by at least $\gamma^*$ (Corollary~\ref{cor:single-form}). This suffices
whenever $hN\le\gamma^*L$. For larger $N$, Lemma~\ref{lem:dichotomy}
gives a dichotomy. If some $T_{\mathrm{gp}}\approx h/\gamma^*$
of the forms are in general position, so that
no two or more of them combine to a
vector of small support,
Proposition~\ref{prop:gp} shows by an exact character computation that
their single-form savings add. $T_{\mathrm{gp}}$ savings of
$\gamma^*$ exceed the required $\frac{hN}{L}\le h$. Otherwise a large
subfamily of the forms has, after a change of basis, supports of bounded
size (Proposition~\ref{prop:support-reduction}).
Proposition~\ref{prop:sunflower-bound} then extracts a \emph{sunflower}:
many forms whose supports share a common core, the remainders
(\emph{petals}) pairwise disjoint. Conditioning on the core makes the
petals independent, so their savings add
(Proposition~\ref{prop:common-core}). Petals of size one, which save nothing on their own, are handled by part~(ii) of that proposition. Part~(2) of Theorem~\ref{thm:tool}
follows: for every $\eta>0$, once $N\ge N_0(q,\rho,\eta)$,
\[
H_q(X)\;\le\;(k-1)\,h+\eta.
\]
At $\eta = h/L$ this is $H_q(X)\le k\,h-h+\frac hL\le k\,h-\frac{hN}{L}$, since $N = L-r^\circ\le L-1$.

 In
Section~\ref{sec:vabs-upper} we assemble the upper bound
(Proposition~\ref{prop:vabs-upper}), splitting on $r^\circ$. At $r^\circ = L$ there are no
forms, and Proposition~\ref{prop:augment} gives $\rat(\tau)\le h\inparen{1+\frac1{r^\circ}} = h\inparen{1+\frac1L}$. For $1\le r^\circ\le L-1$
there is at least one form, and Proposition~\ref{prop:low} applies.

For the lower bound, in Section~\ref{sec:lower}, we take the single type
$\sigma_m$, an i.i.d.\ $\rho$-sparse vector plus a uniform multiple of
$\onesvector$. Proposition~\ref{prop:equi} shows by symmetry
and submodularity that the entropy of $\sigma_m$ spreads evenly across
its quotients, Proposition~\ref{prop:sigma-min} computes $\rat(\sigma_m)$
exactly, and it falls short of $h\inparen{1+\frac1L}$ by at most
$q^{-\Omega(L)}$ (Lemma~\ref{lem:delta}).

\section{From codes to configurations}\label{sec:reduction}

\subsection{Reduction to \texorpdfstring{$\onesvector\notin C$}{1 not in C} and augmentation by \texorpdfstring{$\onesvector$}{1}}\label{sec:augment}
Throughout this subsection $\tau = (M,\tilde\tau)$ is an absolutely
$\rho$-clustered type of length $m = L+1$ and rank $r$, with coupled center
$\alpha$, column code $C\le\F_q^m$ and random codeword $x\in C$. We write
$\onesvector = (1,\dots,1)\in\F_q^m$.

If $\onesvector\in C$ we first pass to a \emph{reduced type} of rank one lower (Lemma~\ref{lem:reduce}), and our argument depends on the resulting \emph{reduced rank} $r^\circ$ (Definition~\ref{def:reduced-rank}). We then adjoin $\onesvector$ to the column code (Proposition~\ref{prop:augment}), so that the ball's center becomes a codeword and every coordinate of the shifted codeword $Y = x-\alpha\onesvector$ is $\rho$-sparse. We write $k := r^\circ+1$ throughout.

\begin{lemma}\label{lem:reduce}
Suppose $\onesvector\in C$. Then $r\ge 2$, and there is an absolutely
$\rho$-clustered type $\tau'$ of length $m$ and rank $r-1$ such that
\[
\rat(\tau)\;\le\;\rat(\tau'), \qquad \onesvector\notin C(\tau').
\]
\end{lemma}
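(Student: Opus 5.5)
The plan is to quotient out the all-ones direction by subtracting the first column from every column. Concretely, set $y := x - x_1\onesvector$, where $x=(u_1,\dots,u_m)$ is the random codeword of $\tau$. I will check that $y$ is the random codeword of an absolutely $\rho$-clustered type of rank $r-1$, and that its column code avoids $\onesvector$. Then I will compare $\rat$ through the correspondence between quotients.

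\emph{Step 1: $r\ge2$.} Suppose $r=1$. Then $C=\F_q\onesvector$, because $C$ is one-dimensional and contains $\onesvector$. So every realization of $x$ is a constant vector, and $u_1=u_2$ almost surely. This contradicts the distinct-columns requirement, since $m\ge2$. Hence $r\ge2$.

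\emph{Step 2: the new type.} Since $\onesvector\in C$ and $u\mapsto(\ip{\xi_i}{u})_i$ is injective, there is a unique $u_0\in\F_q^r$ with $\ip{\xi_i}{u_0}=1$ for all $i$. Consider the linear map $A:u\mapsto(\ip{\xi_i-\xi_1}{u})_{i\in[m]}$, so that $y=Au$. Its kernel consists of the $u$ whose codeword is constant, i.e.\ lies in $\F_q\onesvector$. By injectivity, $\Ker A=V_0:=\F_q u_0$. By Lemma~\ref{lem:implied}, the distribution $\mu$ of $y$ is linearly isomorphic to $\tilde\tau\bmod V_0$, has spanning support on its span, and that span has dimension $r-1$.

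The coordinates of $y$ are distinct, because $y_i-y_j=u_i-u_j$. The shifted center $\alpha':=\alpha-u_1$ gives $\PR{y_i\neq\alpha'}=\PR{u_i\neq\alpha}\le\rho$, so $\mu$ is absolutely $\rho$-clustered. Applying Lemma~\ref{lem:normal-form} to $\mu$ then yields a type $\tau'$ of length $m$ and rank $r-1$ with column code $C'=\{c-c_1\onesvector: c\in C\}$. Its first column is identically zero, which Definition~\ref{def:type} allows. Every element of $C'$ has first coordinate $0$, so $\onesvector\notin C'$.

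\emph{Step 3: $\rat(\tau)\le\rat(\tau')$.} Identify $\tilde\tau'$ with $\tilde\tau\bmod V_0$ on $\F_q^r/V_0$. Every proper subspace $W\lneq\F_q^r/V_0$ is $V/V_0$ for a unique $V$ with $V_0\le V\lneq\F_q^r$. For such $V$,
\[
\frac{H_q(\tilde\tau'\bmod W)}{(r-1)-\dim W}=\frac{H_q(\tilde\tau\bmod V)}{r-\dim V},
\]
because $(\F_q^r/V_0)/(V/V_0)\cong\F_q^r/V$ and $(r-1)-\dim W=r-\dim V$. So $\rat(\tau')$ is a minimum over a subfamily of the subspaces appearing in the definition of $\rat(\tau)$, and therefore $\rat(\tau')\ge\rat(\tau)$. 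By Definition~\ref{def:rat}, $\rat$ is invariant under the linear isomorphism supplied by Lemma~\ref{lem:normal-form}.

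I expect no serious obstacle. The one point needing care is that the isomorphism in Lemma~\ref{lem:implied} and the normal-form coordinates of Lemma~\ref{lem:normal-form} are compatible, so that the quotient correspondence in Step 3 is legitimate.
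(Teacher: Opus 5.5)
Your proof is correct and is essentially the paper's. Both arguments quotient $\tilde\tau$ by the line $\F_q u_0$ that the codeword map sends onto $\F_q\onesvector$, shift the center by the discarded coordinate, and observe that each term of the minimum defining $\rat(\tau')$ is a term of the one defining $\rat(\tau)$. Your normalization $y = x-u_1\onesvector$ is exactly the paper's coordinate change $\xi_i = (\eta_i,1)$ with the particular choice $\eta_1 = 0$, and this makes $\onesvector\notin C(\tau')$ immediate rather than needing the paper's dimension count. One cosmetic point: Lemma~\ref{lem:implied} is stated for full-row-rank $A$, so for your $m\times r$ map of rank $r-1$ cite the isomorphism $\F_q^r/\Ker A\cong\mathrm{im}\,A$ directly.
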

\begin{proof}Deferred to Appendix~\ref{app:deferred}.\renewcommand{\qedsymbol}{}\end{proof}

\begin{definition}[Reduced rank]\label{def:reduced-rank}
For a type $\tau$ of rank $r$ let
\[
r^\circ := \begin{cases}
r, & \onesvector\notin C(\tau),\\
r-1, & \onesvector\in C(\tau).
\end{cases}
\]
\end{definition}

Since $\onesvector\notin C(\tau')$, the reduced rank of $\tau'$ is again
$r^\circ$. We refer to replacing $\tau$ by $\tau'$ as \emph{passing to the
reduced type}, and inside such a proof $r$ always denotes $r^\circ$.

\begin{proposition}[Augmentation by $\onesvector$]\label{prop:augment}
Let $\tau$ be an absolutely $\rho$-clustered type of length $m$ and rank
$r$, with reduced rank $r^\circ\ge 1$. Then
\[
\rat(\tau)\;\le\;h_q(\rho)\inparen{1+\frac1{r^\circ}}.
\]
\end{proposition}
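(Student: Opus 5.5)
The plan is to bound $\rat(\tau)$ by the trivial quotient \eqref{eq:rat-trivial}, $\rat(\tau)\le H_q(\tilde\tau)/r$, after first making sure that $\onesvector\notin C$. Concretely, I will show that the random codeword $x$ is a function of a random vector that lives in a space of dimension $r+1$ and has $\rho$-sparse coordinates.

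\emph{Step 1 (reduction).} If $\onesvector\in C(\tau)$, Lemma~\ref{lem:reduce} replaces $\tau$ by an absolutely $\rho$-clustered $\tau'$ of rank $r-1=r^\circ$ with $\rat(\tau)\le\rat(\tau')$ and $\onesvector\notin C(\tau')$. Since $\tau'$ again has reduced rank $r^\circ$, it suffices to prove the bound for $\tau'$. So from now on I assume $\onesvector\notin C$ and $r=r^\circ\ge1$.

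\emph{Step 2 (augmentation and direct sum).} Let $\alpha$ be a coupled center and set $Y:=x-\alpha\onesvector$, a random variable on the joint space of $(u,\alpha)$. By absolute clustering, $\PR{Y_i\neq0}=\PR{u_i\neq\alpha}\le\rho$ for every $i\in[m]$, so every coordinate of $Y$ is $\rho$-sparse. Moreover $Y$ takes values in $C':=C+\F_q\onesvector$. Because $\onesvector\notin C$, we have $C\cap\F_q\onesvector=0$, so $C'=C\oplus\F_q\onesvector$ with $\dim C'=k:=r+1$. The key observation is that the decomposition of $Y$ in this direct sum is unique, so $x$ is the $C$-component of $Y$ and hence a deterministic function of $Y$. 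Therefore $H_q(\tilde\tau)=H_q(x)\le H_q(Y)$.

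\emph{Step 3 (information set).} Since $\dim C'=k$, some set $S\subseteq[m]$ of $k$ coordinates is an information set for $C'$: the restriction map $C'\to\F_q^S$ is injective. Hence $H_q(Y)=H_q(Y_S)\le\sum_{i\in S}H_q(Y_i)\le k\,h_q(\rho)$, using subadditivity and the fact from Section~\ref{sec:sparsity} that a $\rho$-sparse variable has entropy at most $h_q(\rho)$. Combining the steps,
\[
\rat(\tau)\;\le\;\frac{H_q(\tilde\tau)}{r}\;\le\;\frac{(r+1)\,h_q(\rho)}{r}\;=\;h_q(\rho)\inparen{1+\frac1{r^\circ}}.
\]

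The only delicate point is Step 2. The center $\alpha$ need not be a function of $u$, so one might expect $H_q(x)\le H_q(Y)+H_q(\alpha)$ to lose an additive term. That loss is avoided precisely because $\onesvector\notin C$ makes $x$ recoverable from $Y$, and this is why the reduction of Step 1 is needed first.
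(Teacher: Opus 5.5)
Your proposal is correct and follows the paper's proof essentially step for step. You reduce to $\onesvector\notin C$ via Lemma~\ref{lem:reduce} and use the direct sum $C\oplus\F_q\onesvector$ to recover $x$ from $Y=x-\alpha\onesvector$. You then bound $H_q(Y)$ by $(r+1)h_q(\rho)$ through an information set and apply \eqref{eq:rat-trivial}. The paper's extra Step~2 (pairwise non-proportionality of the augmented coordinate functionals) only supplies the spanning fact behind your information set; the non-proportionality itself is reused later in Lemma~\ref{lem:reduction}.
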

\begin{proof}
Passing to the reduced type (Lemma~\ref{lem:reduce}) we
may as well assume $\onesvector\notin C$ and $r = r^\circ$.

\emph{Step 1.} Let $C' := C+\F_q\onesvector$. Since
$\onesvector\notin C$ the sum is direct, so $k:=\dim C' = r+1$, and the map
\[
C\times\F_q\longrightarrow C', \qquad (c,t)\longmapsto c - t\onesvector
\]
is an $\F_q$-linear bijection.

\emph{Step 2.} Let
$\xi_i' := (\xi_i,-1)\in\F_q^r\times\F_q\cong\F_q^k$, and identify $C'$
with $\F_q^k$ via $(u,t)\mapsto x(u)-t\onesvector$, so that the $i$-th
coordinate of the codeword indexed by $(u,t)$ is $\ip{\xi_i'}{(u,t)}$. Two
such functionals are proportional only if the proportionality constant is
$1$ (they share the last coordinate $-1$), that is, only if
$\xi_i = \xi_j$, and hence only if $i=j$. So the $m$ coordinate functionals of $C'$ are
pairwise non-proportional. In particular they are nonzero. They also span $(C')^*$, since every
functional on $C'$ extends to $\F_q^m$, where it is a combination of
coordinates.

\emph{Step 3.} Let $Y:= x-\alpha\onesvector\in C'$. By
Step~1, $Y$ determines the pair $(x,\alpha)$, so
\[
H_q(\tilde\tau) = H_q(x)\;\le\;H_q(x,\alpha) = H_q(Y).
\]
Each coordinate is $Y_i = u_i-\alpha$ with
$\PR{u_i\neq\alpha}\le\rho$, so all $m$ coordinates of $Y$ are
$\rho$-sparse.

\emph{Step 4.} By Step~2 the $m$ coordinate functionals span $(C')^*$,
so some $I\subseteq[m]$ of size $k = \dim C'$ has $\inset{\xi_i'}_{i\in I}$
a basis, and the coordinate projection $C'\to\F_q^I$ is then injective. Such an $I$ is an \emph{information set} of $C'$. Then $Y$ is a
function of $Y_I$, so by subadditivity and $\rho$-sparsity of the
coordinates,
\[
H_q(Y) = H_q(Y_I)\;\le\;\sum_{i\in I}H_q(Y_i)\;\le\;k\,h_q(\rho) = (r+1)\,h_q(\rho).
\]
Since $r = r^\circ$, \eqref{eq:rat-trivial} now gives $\rat(\tau)\le H_q(\tilde\tau)/r\le h_q(\rho)\inparen{1+\frac1{r^\circ}}$.

\end{proof}

\subsection{Reduction to configurations}\label{sec:configs}
After the augmentation of Proposition~\ref{prop:augment}, choosing an information set $I$ of size
$k = r^\circ+1$ and letting $X := Y_I$ gives a
\emph{configuration} from an absolutely $\rho$-clustered type.

\begin{definition}[Configuration]\label{def:config}
Let $k\ge 2$ and $N\ge 0$. A \emph{configuration of dimension $k$ and
length $m = k+N$} consists of
\begin{itemize}
\item a random vector $X\in\F_q^k$;
\item an index set $J$ with $|J| = N$ and \emph{forms}
$a_j\in\F_q^k$ ($j\in J$);
\end{itemize}
such that, writing $e_1,\dots,e_k$ for the standard basis and
$\ell_j(X):=\ip{a_j}{X}$:
\begin{enumerate}
\item\label{item:config-sparse} every coordinate $X_i$ ($i\in[k]$) is
$\rho$-sparse, and every $\ell_j(X)$ ($j\in J$) is $\rho$-sparse;
\item\label{item:config-nonprop} the $m$ vectors
$\Lambda:=\inset{e_i}_{i\in[k]}\cup\inset{a_j}_{j\in J}$ are nonzero and
pairwise non-proportional.
\end{enumerate}
Here $w_j := |\supp a_j|\ge2$, since $a_j$ is nonzero and proportional
to no $e_i$.
\end{definition}

\begin{lemma}\label{lem:reduction}
Let $\tau$ be an absolutely $\rho$-clustered type of length $m$ with
reduced rank $r^\circ =: r\ge 1$, and let $k:=r+1$, $N:=m-k$. Then there
is a configuration of dimension $k$ and length $m$ (so $|J| = N$) with

\[
\rat(\tau)\;\le\;\frac{H_q\inparen{X\bmod\F_q\onesvector}}{k-1}
\;\le\;\frac{H_q(X)}{k-1} = \frac{H_q(X)}{r}.
\]

\end{lemma}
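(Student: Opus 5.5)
We first pass to the reduced type (Lemma~\ref{lem:reduce}), so that $\rat$ does not decrease, $\onesvector\notin C$ and $r = r^\circ$. We then repeat Steps~1--4 of the proof of Proposition~\ref{prop:augment}. This gives $C' = C\oplus\F_q\onesvector$ of dimension $k = r+1$ with pairwise non-proportional coordinate functionals $\xi_i'$, the shifted codeword $Y = x-\alpha\onesvector\in C'$ with all $m$ coordinates $\rho$-sparse, and an information set $I\subseteq[m]$ of size $k$.

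\emph{Building the configuration.} Set $X := Y_I\in\F_q^k$. Since $\inset{\xi_i'}_{i\in I}$ is a basis of $(C')^*$, each $\xi_j'$ with $j\notin I$ is a unique combination $\sum_{i\in I}(a_j)_i\,\xi_i'$. Hence $Y_j = \ip{a_j}{X}$ for some $a_j\in\F_q^k$, and we take $J := [m]\setminus I$, so $|J| = m-k = N$. Under the isomorphism $(C')^*\to\F_q^k$ sending the basis $\xi_i'$ to $e_i$, the $m$ functionals $\xi_i'$ map to $\Lambda = \inset{e_i}\cup\inset{a_j}$. Pairwise non-proportionality and nonzeroness are preserved by a linear isomorphism, which gives item~\ref{item:config-nonprop}. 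Item~\ref{item:config-sparse} is Step~3, and $k\ge2$ since $r\ge1$.

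\emph{The entropy inequalities.} The second inequality is immediate, because a quotient is a function of $X$. For the first, identify $C'\cong\F_q^k$ via $Y\mapsto Y_I$. Under this identification $C$ corresponds to some hyperplane and $\onesvector$ to $\onesvector_I = \onesvector\in\F_q^k$. Consider the map $C'\to C'/\F_q\onesvector\cong C$, which sends $Y = x-\alpha\onesvector$ to $x$, since the sum is direct. Thus $X\bmod\F_q\onesvector$ is in linear bijection with $x$, and $H_q(X\bmod\F_q\onesvector) = H_q(x) = H_q(\tilde\tau)$. Taking $V=0$ in \eqref{eq:rat-trivial} then gives $\rat(\tau)\le H_q(\tilde\tau)/r = H_q(X\bmod\F_q\onesvector)/(k-1)$.

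\emph{Main obstacle.} The only delicate point is bookkeeping rather than substance. We must check that the image of $\onesvector\in C'$ under the coordinate isomorphism is exactly the all-ones vector of $\F_q^k$; this holds because restricting $\onesvector$ to $I$ gives all ones. We must also check that the quotient by this line is identified with $x$, i.e.\ with $\tilde\tau$ via $u\mapsto x(u)$, which is injective. Finally, the passage to the reduced type must preserve absolute clustering and length $m$, and Lemma~\ref{lem:reduce} guarantees both.
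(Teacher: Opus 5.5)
Your proposal is correct and follows the paper's own proof essentially step for step: you pass to the reduced type, reuse Steps~1--4 of Proposition~\ref{prop:augment}, take $X := Y_I$ with forms $a_j$ read off from the unique expansion of $\xi_j'$ in the basis $\inset{\xi_i'}_{i\in I}$, and identify $X\bmod\F_q\onesvector$ with $x$ through the direct sum $C' = C\oplus\F_q\onesvector$. The only cosmetic difference is that you obtain the second inequality by noting that a function of $X$ has entropy at most $H_q(X)$, where the paper uses the chain-rule identity \eqref{eq:chain-ones}; both routes give the same inequality.
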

\begin{proof}
We pass to the reduced type (Definition~\ref{def:reduced-rank}), so
$\onesvector\notin C$ and $\dim C = r$. Form $C' = C+\F_q\onesvector$ of
dimension $k = r+1$ and $Y := x - \alpha\onesvector\in C'$ as in Steps~1--3 of
Proposition~\ref{prop:augment}. By Step~2 of that proof the $m$ coordinate
functionals $\xi_1',\dots,\xi_m'$ of $C'$ are pairwise non-proportional,
nonzero, and span $(C')^*$. By Step~3 all $m$
coordinates of $Y$ are $\rho$-sparse.

Let $I$ be an information set as in Step~4 of that proof, so that
$|I| = k$ and the projection $C'\to\F_q^I$ is injective. Under the
identification of $\F_q^I$ with $\F_q^k$, let $X:=Y_I$, so that $X$
determines $Y$. All $k$ coordinates of $X$ are
$\rho$-sparse, and for $j\notin I$ the unique
expansion $\xi_j' = \sum_{i\in I}a_{ji}\,\xi_i'$ makes
$Y_j = \ip{a_j}{X}$ $\rho$-sparse, with $a_j := (a_{ji})_{i\in I}$.
Taking coordinates in this basis is a linear isomorphism
$(C')^*\to\F_q^k$, so
for $j,j'\notin I$ and $i\in I$,
\[
a_j\parallel a_{j'}\iff \xi_j'\parallel\xi_{j'}', \qquad
a_j\parallel e_i\iff \xi_j'\parallel\xi_i',
\]
and each $a_j$ is nonzero. This is
condition~\ref{item:config-nonprop} of Definition~\ref{def:config}. Finally, $\onesvector\notin C$ gives
$C\cap\F_q\onesvector = 0$, so $x$ is the unique point of $C$ in the class
of $Y$ modulo $\F_q\onesvector$, and the identification above carries
$\onesvector$ to $\onesvector$, so $X\bmod\F_q\onesvector$ and
$Y\bmod\F_q\onesvector$ determine one another. Hence
\[
H_q(\tilde\tau) = H_q(x) = H_q\inparen{X\bmod\F_q\onesvector},
\]
and \eqref{eq:rat-trivial} gives the
first inequality. For the second, by the chain rule
\begin{equation}\label{eq:chain-ones}
H_q(X)\;=\;H_q\inparen{X\bmod\F_q\onesvector}
+H_q\inparen{X\mid X\bmod\F_q\onesvector},
\end{equation}
and the second term lies between $0$ and $1$, since each value of $X\bmod\F_q\onesvector$ has $q$ preimages.
\end{proof}

\begin{remark}\label{rem:coeff-sum}
Every form $a_j$ produced by Lemma~\ref{lem:reduction} has coefficient sum $\sum_{i\in I}a_{ji} = 1$, as does every $e_i$: since $\onesvector\in C'$, evaluating the expansion $\xi_j' = \sum_{i\in I}a_{ji}\xi_i'$ at the codeword $\onesvector$, every coordinate of which is $1$, gives $1 = \sum_{i\in I}a_{ji}$. This is used only in Section~\ref{sec:limits}.
\end{remark}

\begin{remark}\label{rem:target}
Let $m = L+1$ and $k = r+1$. By Lemma~\ref{lem:reduction},
$\rat(\tau)\le h\inparen{1+\frac1L}$ holds as soon as
\[
H_q(X)\;\le\;k\,h - h\,\frac{m-k}{m-1},
\]
the trivial bound $k\,h$ decreased by $h\,\frac{m-k}{m-1}\le h$.
\end{remark}

\section{Entropy bounds for configurations}\label{sec:upper}

\subsection{Calibration and the divergence of a single form}\label{sec:tilting}
Fact~\ref{fact:gibbs} bounds entropy by Gibbs duality, with potentials
that weight each constraint according to whether it vanishes. We call reweighting by such a potential
\emph{tilting}. We calibrate the potentials so that tilting the $k$
coordinate constraints alone reproduces exactly the trivial bound
$k\,h_q(\rho)$, and Lemma~\ref{lem:calibration} identifies the
$\rho$-sparse product measure $\nu_0$ (Section~\ref{sec:entropies}) as the case which achieves equality. Under
$\nu_0$ a form of support $w$ is nonzero with probability $\pi_w$,
strictly increasing in $w$ (Lemma~\ref{lem:activation}). Lemmas~\ref{lem:calibration}--\ref{lem:lambda} are computations, and their proofs are in Appendix~\ref{app:deferred}. The constants $T_{\mathrm{gp}}$, $W_{\mathrm{gp}}$ and $\omega$ used from Section~\ref{sec:gp} on are made explicit in Appendix~\ref{app:constants}.

\begin{fact}[Gibbs duality]\label{fact:gibbs}
Let $\mathcal{X}$ be a finite set, $\mu$ a probability measure on $\mathcal{X}$, and
$\Phi:\mathcal{X}\to\R_{\ge0}$. Then
\[
H_q(\mu)\;\le\;\Eover{\mu}{\Phi} + \log_q Z, \qquad
Z := \sum_{x\in\mathcal{X}}q^{-\Phi(x)}.
\]
\end{fact}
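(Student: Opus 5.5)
The plan is to derive the inequality from the nonnegativity of Kullback--Leibler divergence against the Gibbs measure defined by $\Phi$. Let $\nu$ be the probability measure on $\mathcal{X}$ with $\nu(x) := q^{-\Phi(x)}/Z$. This is well defined because $\mathcal{X}$ is finite and nonempty and each term $q^{-\Phi(x)}$ is positive, so $Z\in(0,\infty)$. Then $\nu(x)>0$ for every $x$, so the divergence of $\mu$ from $\nu$ is finite.

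The key step is the identity, valid for any probability measure $\mu$ on $\mathcal{X}$,
\[
0\;\le\;\sum_{x\in\supp\mu}\mu(x)\log_q\frac{\mu(x)}{\nu(x)}
\;=\;-H_q(\mu)+\sum_{x}\mu(x)\inparen{\Phi(x)+\log_q Z}
\;=\;-H_q(\mu)+\Eover{\mu}{\Phi}+\log_q Z .
\]
Rearranging gives the claim. The nonnegativity on the left is Gibbs' inequality. I would justify it by applying Jensen's inequality to the concave function $\log_q$:
\[
-\sum_{x\in\supp\mu}\mu(x)\log_q\frac{\nu(x)}{\mu(x)}\;\ge\;-\log_q\sum_{x\in\supp\mu}\nu(x)\;\ge\;-\log_q 1=0 .
\]
Equivalently, one can use the elementary bound $\ln t\le t-1$.

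There is no real obstacle here; the only points needing care are bookkeeping. First, sums run over $\supp\mu$, with the convention $0\log 0=0$, so that terms with $\mu(x)=0$ do not contribute. Second, the hypothesis $\Phi\ge0$ is not actually needed for the inequality. It only guarantees $Z\le|\mathcal{X}|$, which matters in later applications but not here. Finally, I would record that equality holds if and only if $\mu=\nu$, that is, when $\mu$ is the tilt $q^{-\Phi}/Z$. This equality case is what the later calibration, Lemma~\ref{lem:calibration}, exploits to identify $\nu_0$ as the extremal distribution.
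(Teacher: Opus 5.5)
Your proof is correct and is exactly the paper's argument: the paper also derives the bound from the nonnegativity of $\sum_x\mu(x)\log_q\frac{\mu(x)}{\nu(x)}$ with $\nu(x)=q^{-\Phi(x)}/Z$. Your Jensen justification and your handling of the support convention fill in the standard details the paper leaves implicit.
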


This is the well-known Gibbs variational principle, and it follows from
$\sum_x\mu(x)\log_q\frac{\mu(x)}{\nu(x)}\ge0$ applied to
$\nu(x) := q^{-\Phi(x)}/Z$.

We use Fact~\ref{fact:gibbs} only with $\mathcal{X} = \F_q^k$ and
\begin{equation}\label{eq:potential}
\Phi(x) = \sum_{p=1}^{P}\vartheta_p\,\one\inbrak{\ip{a^{(p)}}{x}\neq0},
\qquad \vartheta_p\ge0,\ a^{(p)}\in\F_q^k,
\end{equation}
where each $\ip{a^{(p)}}{X}$ is $\rho$-sparse, so that
$\Eover\mu\Phi = \sum_p\vartheta_p\PR{\ip{a^{(p)}}{X}\neq0}\le
\rho\sum_p\vartheta_p$ and
\begin{equation}\label{eq:tilt}
H_q(X)\;\le\;\rho\sum_{p=1}^P\vartheta_p\;+\;\log_q Z(\vartheta),
\qquad
Z(\vartheta) = \sum_{x\in\F_q^k}
q^{-\sum_p\vartheta_p\one\inbrak{\ip{a^{(p)}}{x}\neq0}}.
\end{equation}
Note that \eqref{eq:tilt} uses the $P$ marginal sparsity constraints
\emph{simultaneously}, with no independence assumption. For a tilt $\theta\ge0$ we write $z := q^{-\theta}$ for the factor it contributes. The \emph{coordinate tilt} is $\theta_0 := \log_q\frac{(q-1)(1-\rho)}{\rho}$, with factor $\beta := q^{-\theta_0} = \frac{\rho}{(q-1)(1-\rho)}$. Here $\theta_0>0$ and $\beta<1$ precisely because $\rho<1-\frac1q$.

\begin{lemma}\label{lem:calibration}
We have $1+(q-1)\beta = \frac{1}{1-\rho}$ and
$\rho\theta_0 + \log_q\frac{1}{1-\rho} = h_q(\rho)$. Consequently:
\begin{enumerate}
\item for $x\in\F_q^k$,
$q^{-\theta_0|\supp x|} = \beta^{|\supp x|} = (1-\rho)^{-k}\nu_0(x)$;
\item $\sum_{x\in\F_q^k}\beta^{|\supp x|} = \inparen{1+(q-1)\beta}^k =
(1-\rho)^{-k}$;
\item taking $P = k$, $a^{(i)} = e_i$ and $\vartheta_i = \theta_0$ in
\eqref{eq:tilt} yields exactly $H_q(X)\le k\,h_q(\rho)$, with Gibbs measure
$\nu_0$.
\end{enumerate}
\end{lemma}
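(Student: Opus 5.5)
The plan is to verify the two scalar identities directly from the definitions and then read off items (1)--(3) from them.

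\emph{The identities.} Substituting $\beta = \frac{\rho}{(q-1)(1-\rho)}$ gives
\[
1+(q-1)\beta \;=\; 1+\frac{\rho}{1-\rho} \;=\; \frac{1}{1-\rho}.
\]
For the second identity, expand
\[
\rho\theta_0 \;=\; \rho\log_q(q-1)+\rho\log_q(1-\rho)-\rho\log_q\rho .
\]
Adding $\log_q\frac1{1-\rho} = -\log_q(1-\rho)$ yields
\[
\rho\log_q(q-1)-\rho\log_q\rho-(1-\rho)\log_q(1-\rho),
\]
which is $h_q(\rho)$ by Definition~\ref{def:entropy}.

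\emph{Items (1) and (2).} Let $s := |\supp x|$. The product measure is $\nu_0(x) = (1-\rho)^{k-s}\inparen{\frac{\rho}{q-1}}^{s}$. Factoring out $(1-\rho)^k$ gives
\[
\nu_0(x) \;=\; (1-\rho)^k\inparen{\frac{\rho}{(q-1)(1-\rho)}}^{s} \;=\; (1-\rho)^k\beta^{s}.
\]
Also $q^{-\theta_0 s} = \beta^s$ by the definition of $\beta$, so item~(1) follows. For item~(2), either sum item~(1) over $x$ using $\sum_x\nu_0(x)=1$, or factor the sum coordinatewise: each coordinate contributes $1+(q-1)\beta$. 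The first identity then converts this to $(1-\rho)^{-k}$.

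\emph{Item (3).} With $a^{(i)} = e_i$ and $\vartheta_i = \theta_0$, the potential is $\Phi(x) = \theta_0|\supp x|$, so $Z = \sum_x\beta^{|\supp x|} = (1-\rho)^{-k}$ by item~(2). Then \eqref{eq:tilt} gives
\[
H_q(X)\;\le\; k\rho\theta_0 + k\log_q\tfrac1{1-\rho} \;=\; k\,h_q(\rho)
\]
by the second identity. The Gibbs measure $q^{-\Phi(x)}/Z = \beta^{|\supp x|}(1-\rho)^k$ equals $\nu_0$ by item~(1). Equality in \eqref{eq:tilt} is attained at $X\sim\nu_0$, since there $\PR{X_i\ne0}=\rho$ exactly and the divergence term vanishes. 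This equality case is what justifies ``exactly''.

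There is no real obstacle. The only care needed is sign bookkeeping in the logarithms for the second identity.
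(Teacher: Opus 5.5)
Your proposal is correct and follows the paper's proof essentially step for step: the same substitution for $1+(q-1)\beta$, the same expansion of $\rho\theta_0$, the same factorization $\nu_0(x) = (1-\rho)^k\beta^{|\supp x|}$, and summation for items~2 and~3. Your extra check that $X\sim\nu_0$ attains equality in \eqref{eq:tilt} is true and harmless. The paper's ``exactly'' means only that the right-hand side of \eqref{eq:tilt} evaluates to precisely $k\,h_q(\rho)$.
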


\begin{lemma}\label{lem:activation}
Let $\kappa := 1-\frac{q\rho}{q-1}\in(0,1)$ and $\pi_w := \frac{q-1}{q}\inparen{1-\kappa^w}$ for $w\ge0$. Let $a\in\F_q^k$ with $w:=|\supp a|$, and let $X\sim\nu_0$. Then
for every nontrivial additive character $\chi$ of $\F_q$ and every
$c\in\F_q^\times$,
\[
\Eover{\nu_0}{\chi\inparen{c\ip{a}{X}}} = \kappa^w,
\]
and consequently
\begin{equation}\label{eq:activation-identity}
\PROver{\nu_0}{\ip{a}{X} = v} = \begin{cases}
\frac1q\inparen{1+(q-1)\kappa^w}, & v = 0,\\[2pt]
\frac1q\inparen{1-\kappa^w}, & v\neq0,
\end{cases}
\qquad
\PROver{\nu_0}{\ip{a}{X}\neq0} = \pi_w.
\end{equation}
Moreover $\pi_w$ is strictly increasing in $w\ge0$, $\pi_1 = \rho$,
$\pi_2 = \rho(1+\kappa)$, and $\pi_w-\rho\ge\rho\kappa>0$ for $w\ge2$.
Consequently $\Dq{\rho}{\pi_w}\ge\Dq{\rho}{\pi_2}=:\gamma^*$ for every $w\ge2$.
\end{lemma}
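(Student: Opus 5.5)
The plan is to compute the characteristic function of $\ip{a}{X}$ under $\nu_0$ directly, then read off the distribution by Fourier inversion, and finally verify the monotonicity facts by elementary algebra.

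\emph{Character computation.} Since the coordinates of $X\sim\nu_0$ are independent, for a nontrivial additive character $\chi$ and $c\in\F_q^\times$ we have
\[
\Eover{\nu_0}{\chi\inparen{c\ip aX}}=\prod_{i}\Eover{}{\chi(c a_i X_i)} .
\]
Coordinates with $a_i=0$ contribute the factor $1$. For $a_i\neq0$, put $b:=ca_i\neq0$. Then
\[
\Eover{}{\chi(bX_i)}=(1-\rho)+\frac{\rho}{q-1}\sum_{t\neq0}\chi(bt).
\]
As $t$ ranges over $\F_q^\times$, so does $bt$, and $\sum_{s\in\F_q}\chi(s)=0$ gives $\sum_{t\neq0}\chi(bt)=-1$. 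Hence each such factor equals $1-\rho-\frac{\rho}{q-1}=1-\frac{q\rho}{q-1}=\kappa$. There are exactly $w$ of them, so the expectation is $\kappa^w$.

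\emph{Fourier inversion.} Write $\PR{\ip aX=v}=\frac1q\sum_\psi\Eover{}{\psi(\ip aX-v)}$, summing over all additive characters $\psi$. Every nontrivial $\psi$ has the form $\psi(s)=\chi(cs)$ for a fixed nontrivial $\chi$ and some $c\in\F_q^\times$. This gives
\[
\PR{\ip aX=v}=\frac1q\Bigl(1+\kappa^w\sum_{c\neq0}\chi(-cv)\Bigr).
\]
The inner sum is $q-1$ when $v=0$ and $-1$ otherwise, which yields \eqref{eq:activation-identity}. Consequently $\PR{\ip aX\neq0}=(q-1)\cdot\frac1q(1-\kappa^w)=\pi_w$.

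\emph{Properties of $\pi_w$.} Since $0<\rho<1-\frac1q$, we have $\kappa\in(0,1)$, so $\kappa^w$ is strictly decreasing and $\pi_w$ is strictly increasing.

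\begin{itemize}
\item $\pi_1=\frac{q-1}{q}(1-\kappa)=\frac{q-1}{q}\cdot\frac{q\rho}{q-1}=\rho$.
\item $\pi_2=\frac{q-1}{q}(1-\kappa)(1+\kappa)=\rho(1+\kappa)$.
\item For $w\ge2$, monotonicity gives $\pi_w-\rho\ge\pi_2-\rho=\rho\kappa>0$.
\end{itemize}

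\emph{Divergence bound.} For $w\ge2$ we have $\rho<\pi_2\le\pi_w<\frac{q-1}{q}<1$. By the monotonicity of $b\mapsto\Dq{\rho}{b}$ on $[\rho,1)$ (Section~\ref{sec:entropies}), this gives $\Dq\rho{\pi_w}\ge\Dq\rho{\pi_2}=\gamma^*$, and $\gamma^*>0$ because $\pi_2\ne\rho$.

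There is no real obstacle in this argument. The only point needing care is the identity $\sum_{t\neq0}\chi(bt)=-1$, which relies on $\chi$ being nontrivial and $b\neq0$.
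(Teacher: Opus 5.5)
Your proof is correct and follows the paper's argument essentially step for step. Both use independence to get the coordinatewise character value $\kappa$, Fourier inversion over $c\in\F_q$, and the elementary properties of $\pi_w$ together with the monotonicity of $b\mapsto\Dq{\rho}{b}$ on $[\rho,1)$. The only cosmetic difference is that you get $\pi_w-\rho\ge\rho\kappa$ from $\pi_w\ge\pi_2$, whereas the paper uses the explicit bound $\kappa-\kappa^w\ge\kappa(1-\kappa)$.
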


\begin{lemma}\label{lem:tilt-opt}
For $\pi\in[\rho,1)$ and $\theta\ge0$ let
\begin{equation}\label{eq:single-form-c}
c(\pi,\theta):=\rho\theta + \log_q\inparen{1-\pi\inparen{1-q^{-\theta}}}.
\end{equation}
Then $c(\pi,\theta)$ is strictly convex in $\theta$, and its minimum
over $\theta\ge0$ is
\[
\min_{\theta\ge0}\,c(\pi,\theta)\;=\;-\Dq{\rho}{\pi},
\qquad\text{attained at}\quad
z^* := q^{-\theta^*} = \frac{\rho(1-\pi)}{\pi(1-\rho)}\;\le\;1.
\]
\end{lemma}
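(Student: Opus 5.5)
This is a one-variable calculus computation in the factor $z = q^{-\theta}\in(0,1]$. Write $f(\theta):=1-\pi(1-q^{-\theta}) = (1-\pi)+\pi z$, so that $c(\pi,\theta) = \rho\theta+\log_q f(\theta)$. Since $\pi<1$, $f$ is positive, and $c$ is smooth on $[0,\infty)$.

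\emph{Convexity.} Differentiating, with $z' = -z\ln q$, gives
\[
\partial_\theta c = \rho - \frac{\pi z}{(1-\pi)+\pi z}.
\]
The map $z\mapsto \pi z/((1-\pi)+\pi z)$ is strictly increasing in $z$ when $0<\pi<1$, and $z$ is strictly decreasing in $\theta$. Hence $\partial_\theta c$ is strictly increasing, which is strict convexity. Equivalently, $c$ is $\rho\theta$ plus the log of a two-point moment generating function with nondegenerate weights, so the second derivative is a positive variance.

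\emph{The minimizer.} Set $\partial_\theta c = 0$, that is, $\pi z(1-\rho) = \rho(1-\pi)$, which gives $z^* = \rho(1-\pi)/(\pi(1-\rho))$. The hypothesis $\pi\ge\rho$ gives $\rho(1-\pi)\le\pi(1-\rho)$, so $z^*\le1$ and $\theta^*\ge0$.
\begin{itemize}
\item If $\pi>\rho$, then $\theta^*>0$ is an interior stationary point of a strictly convex function, hence the minimizer.
\item If $\pi=\rho$, the derivative at $\theta=0$ equals $\rho-\pi=0$. By convexity the minimum is attained at $0$, with value $c=\log_q 1=0=-\Dq{\rho}{\rho}$.
\end{itemize}

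\emph{The value.} At $z^*$ we have $(1-\pi)+\pi z^* = (1-\pi)\bigl(1+\rho/(1-\rho)\bigr) = (1-\pi)/(1-\rho)$, and $\theta^* = \log_q\frac{\pi(1-\rho)}{\rho(1-\pi)}$. Substituting,
\[
c = \rho\log_q\frac{\pi}{\rho}+\rho\log_q\frac{1-\rho}{1-\pi}+\log_q\frac{1-\pi}{1-\rho}
= -\rho\log_q\frac{\rho}{\pi}-(1-\rho)\log_q\frac{1-\rho}{1-\pi} = -\Dq{\rho}{\pi}.
\]

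The only point needing care is the boundary case $\pi=\rho$ together with the sign of $\theta^*$, which is where the hypothesis $\pi\ge\rho$ is used; the rest is routine algebra. The case $\pi=0$ cannot occur, because $\rho>0$ and $\pi\ge\rho$.
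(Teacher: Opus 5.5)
Your proof is correct and is essentially the paper's argument. The paper also differentiates, in $s=-\theta$, and locates the same stationary point $z^*\le1$ and evaluates it using $(1-\pi)+\pi z^*=(1-\pi)/(1-\rho)$; it gets strict convexity from the explicit second derivative $\ln q\cdot\pi q^s(1-\pi)/(1-\pi+\pi q^s)^2>0$, where you use monotonicity of $\partial_\theta c$, and it handles $\pi=\rho$ uniformly rather than as a separate case, since the unconstrained stationary point already lies in $\theta\ge0$.
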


An additional $\rho$-sparse form $\ell$ of
support $w$, tilted by $\theta$ on top of the coordinate tilts
$\theta_0$, contributes $\rho\theta$ to the linear term of
\eqref{eq:tilt} and, by Lemma~\ref{lem:activation}, the factor
$\Eover{\nu_0}{z^{\one\inbrak{\ell\neq0}}} = 1-\pi_w(1-z)$ to the
partition function. The right-hand side therefore grows by exactly
$c(\pi_w,\theta)$ \eqref{eq:single-form-c}, and accordingly the factor
applied to every form of support $w\ge1$ is
\begin{equation}\label{eq:zstar}
z^*_w \;:=\; \frac{\rho\inparen{1-\pi_w}}{\pi_w(1-\rho)}\;\in\;(0,1].
\end{equation}
Note that $\theta^*_w = \log_q(1/z^*_w)\ge0$, since
$\pi_w\ge\pi_1 = \rho$ by Lemma~\ref{lem:activation}. By
Lemma~\ref{lem:tilt-opt}, each form tilted in this way lowers the bound
by $\Dq{\rho}{\pi_w}$.

We will also need the following local computation, which introduces the
constant \[\lambda := \inparen{2\sqrt\beta+(q-2)\beta}(1-\rho),\] the
Bhattacharyya coefficient between $\nu_0$ on $\F_q$ and its translate by a
nonzero scalar. It governs the error terms of
Proposition~\ref{prop:common-core}, Lemma~\ref{lem:delta} and
Theorem~\ref{thm:main-upper}.

\begin{lemma}\label{lem:lambda}
Let $\alpha\in\F_q^\times$ and, for $c\in\F_q$, let
\[
S(c) := \sum_{y\in\F_q}\inparen{\sqrt\beta}^{\one\inbrak{y\neq0}}
\inparen{\sqrt\beta}^{\one\inbrak{c+\alpha y\neq0}}.
\]
\begin{enumerate}
\item $S(0) = 1+(q-1)\beta = \frac1{1-\rho}$, and every $c\neq0$ has
$S(c) = 2\sqrt\beta+(q-2)\beta = \frac{\lambda}{1-\rho}$, independently
of $c$ and of $\alpha$.
\item For every $m\ge1$ and every $c\in\F_q^\times$,
\begin{equation}\label{eq:bhatt}
\sum_{x\in\F_q^m}\sqrt{\nu_0(x)\,\nu_0(x+c\onesvector)}\;=\;\lambda^m.
\end{equation}
\item $\lambda<1$.
\end{enumerate}
\end{lemma}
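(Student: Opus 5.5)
The plan is a direct computation from Lemma~\ref{lem:calibration}. Parts~(1) and~(2) are bookkeeping with the identity $\nu_0(x) = (1-\rho)^{k}\beta^{|\supp x|}$. Part~(3) is the equality case of Cauchy--Schwarz.

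\emph{Part (1).} I evaluate $S(c)$ by splitting the sum over $y\in\F_q$ according to which of the two indicators fire.
\begin{itemize}
\item For $c=0$, the term $y=0$ contributes $1$. Each of the $q-1$ values $y\neq0$ has $\alpha y\neq0$, so it contributes $\sqrt\beta\cdot\sqrt\beta=\beta$. Hence $S(0)=1+(q-1)\beta$, which equals $\frac1{1-\rho}$ by Lemma~\ref{lem:calibration}.
\item For $c\neq0$, the term $y=0$ has $c+\alpha y=c\neq0$ and contributes $\sqrt\beta$.
\item The unique $y_0=-c/\alpha$ is nonzero and has $c+\alpha y_0=0$, so it also contributes $\sqrt\beta$.
\item The remaining $q-2$ values of $y$ make both indicators $1$ and contribute $\beta$ each.
\end{itemize}
So $S(c)=2\sqrt\beta+(q-2)\beta$, which does not depend on $c$ or $\alpha$. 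By the definition of $\lambda$ this is $\lambda/(1-\rho)$.

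\emph{Part (2).} By Lemma~\ref{lem:calibration}(1) with $k=m$, we have $\nu_0(x)=(1-\rho)^m\beta^{|\supp x|}=(1-\rho)^m\prod_i\beta^{\one[x_i\neq0]}$. Similarly, $\nu_0(x+c\onesvector)=(1-\rho)^m\prod_i\beta^{\one[x_i+c\neq0]}$. Taking the square root of the product gives
\[
(1-\rho)^m\prod_{i}(\sqrt\beta)^{\one[x_i\neq0]}(\sqrt\beta)^{\one[c+x_i\neq0]}.
\]
This factorizes over the coordinates, so summing over $x\in\F_q^m$ yields $(1-\rho)^m S(c)^m$ with $\alpha=1$. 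By part~(1) this equals $(1-\rho)^m\bigl(\lambda/(1-\rho)\bigr)^m=\lambda^m$.

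\emph{Part (3).} Take $m=1$ in~\eqref{eq:bhatt}. Then $\lambda=\sum_{y}\sqrt{\nu_0(y)\nu_0(y+c)}$ is the Bhattacharyya coefficient of two probability distributions on $\F_q$. By Cauchy--Schwarz it is at most $\bigl(\sum\nu_0(y)\bigr)^{1/2}\bigl(\sum\nu_0(y+c)\bigr)^{1/2}=1$. Equality would force $\nu_0(y)=\nu_0(y+c)$ for all $y$, and in particular $\nu_0(0)=\nu_0(c)$, i.e.\ $1-\rho=\rho/(q-1)$, i.e.\ $\rho=1-\frac1q$. This is excluded by the standing assumption $\rho<1-\frac1q$, so $\lambda<1$.

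There is no real obstacle here. The only point needing care is the case count in part~(1): the unique root $y_0=-c/\alpha$ is nonzero precisely because $c\neq0$, and this is what makes $S(c)$ independent of $\alpha$ and $c$.
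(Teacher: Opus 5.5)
Your proof is correct. Parts~(1) and~(2) follow the paper's computation: the same case count for $S(c)$, and the same factorization of the sum over coordinates. The paper does $m=1$ first, via the substitution $a=\alpha y$, and then tensorizes; you factor directly with $\alpha=1$, which amounts to the same thing.

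Part~(3) is where your route differs.
\begin{itemize}
\item \textbf{Your argument.} You read part~(2) at $m=1$ as the Bhattacharyya coefficient of $\nu_0$ and its translate by $c$. You then use the equality case of Cauchy--Schwarz: the square-root vectors would be proportional, hence equal after normalization. This forces $\nu_0(0)=\nu_0(c)$, i.e.\ $\rho=1-\frac1q$, which is excluded.
\item \textbf{The paper's argument.} It reuses part~(1). Since $(1-\rho)S(0)=1$ and $(1-\rho)S(c)=\lambda$, the claim $\lambda<1$ is equivalent to $S(c)<S(0)$. This reads $2\sqrt\beta+(q-2)\beta<1+(q-1)\beta$, i.e.\ $(1-\sqrt\beta)^2>0$, which holds because $\beta<1$.
\end{itemize}

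Your version is more conceptual: it shows that the only obstruction to $\lambda<1$ is translation invariance of $\nu_0$. The paper's is a one-line computation that also gives the exact gap $1-\lambda=(1-\rho)(1-\sqrt\beta)^2$. That makes explicit how $\lambda\to1$ as $\rho\to1-\frac1q$ (equivalently $\beta\to1$), which your strictness argument does not.
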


\subsection{Families with a common core}\label{sec:common-core}
Both bounds below apply to families of forms $a_t = \mu_t+v_t$ which
share a
\emph{core}: a set $U\subseteq[k]$ of coordinates with
$\supp\mu_t\subseteq U$ for every $t$, and with the sets
$P_t := \supp v_t$ pairwise disjoint and disjoint from $U$. Let
$w_t := |P_t|$. The two bounds differ according to whether
$w_t\ge2$ or $w_t = 1$. Conditioning on the core makes the forms with $w_t\ge2$ independent, so each keeps its single-form saving $\Dq{\rho}{\pi_{w_t}}$. A form with $w_t = 1$ has $\pi_1 = \rho$ and so saves nothing on its own, and part~\ref{item:core-single} instead exploits that, once $\ip{\mu_t}{X}$ is fixed, such a form pins down its own coordinate. 

A form of support $w\ge2$ contributes a two-valued factor once
everything outside its own coordinates is fixed. For a
support size $w\ge2$ and a parameter $z\in(0,1]$, let
\begin{equation}\label{eq:core-factors}
G_w(0) := 1-\pi_w(1-z),\qquad
G_w(1) := 1-(1-z)\,\frac{q-1+\kappa^w}{q};
\end{equation}
by Lemma~\ref{lem:activation}, for $Y\sim\nu_0$, any shift $s\in\F_q$
and any $v\in\F_q^k$ of support $w$,
\begin{equation}\label{eq:two-values}
z+(1-z)\,\PROver{\nu_0}{\ip vY = s}
= \begin{cases} G_w(0), & s = 0,\\ G_w(1), & s\neq0,\end{cases}
\qquad
0\;\le\;G_w(1)\;\le\;G_w(0),
\end{equation}
where the last inequality in \eqref{eq:two-values} holds since
$G_w(0)-G_w(1) = (1-z)\kappa^w\ge0$. At the optimal tilt $z^*_w$ of
\eqref{eq:zstar},
\begin{equation}\label{eq:core-divergence}
\rho\log_q\frac1{z^*_w}+\log_qG_w(0) \;=\; -\Dq{\rho}{\pi_w},
\end{equation}
which is the minimum value $c(\pi_w,\theta^*_w) = -\Dq{\rho}{\pi_w}$
of Lemma~\ref{lem:tilt-opt}, since $G_w(0)$ coincides with the factor
$1-\pi_w(1-z)$ of \eqref{eq:single-form-c}.

\begin{proposition}\label{prop:common-core}
Let $U\subseteq[k]$, suppose that all $k$ coordinates of $X$
are $\rho$-sparse, and let
\[
\ell_t(X) = \ip{\mu_t+v_t}{X},\qquad t = 1,\dots,T,
\]
where $\supp\mu_t\subseteq U$, $v_t\neq0$, each $\ell_t(X)$ is
$\rho$-sparse, and the sets $P_t := \supp v_t$ are pairwise disjoint
and disjoint from $U$. Let
$w_t := |P_t|$, and let $T_{\ge2}$ be the number of $t$ with
$w_t\ge2$ and $T_1$ the number with $w_t = 1$.
\begin{enumerate}[label=\textup{(\roman*)}]
\item\label{item:core-large} (The $t$ with $w_t\ge2$.)
\[
H_q(X)\;\le\;k\,h \;-\; \sum_{t\suchthat w_t\ge2}\Dq{\rho}{\pi_{w_t}}
\;\le\;k\,h-T_{\ge2}\,\gamma^*.
\]
\item\label{item:core-single} (The $t$ with $w_t = 1$, up to a
$\lambda$-power error.) If some $i^*\in U$
has $(\mu_t)_{i^*}\neq0$ for every $t$ with $w_t = 1$, then
\[
H_q(X)\;\le\;(k-1)\,h \;-\; \sum_{t\suchthat w_t\ge2}\Dq{\rho}{\pi_{w_t}}
\;+\;\log_q\inparen{1+(q-1)\lambda^{\,T_1}}.
\]
\end{enumerate}
\end{proposition}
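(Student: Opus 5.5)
The plan is to apply Gibbs duality in the form \eqref{eq:tilt}, with a potential built from the coordinate tilts of Lemma~\ref{lem:calibration} and the optimal form tilts \eqref{eq:zstar}. The partition function is evaluated by summing first over the petal coordinates, conditionally on everything else. Disjointness of the $P_t$ from each other and from $U$ makes the petal sums factor.

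\emph{Part~\ref{item:core-large}.}
\begin{enumerate}
\item Tilt every coordinate $e_i$ by $\theta_0$. Tilt each $\ell_t$ with $w_t\ge2$ by $\theta^*_{w_t}$, i.e.\ with factor $z^*_{w_t}$. Leave the forms with $w_t=1$ untilted.
\item The linear term of \eqref{eq:tilt} is $\rho k\theta_0+\sum_{t:w_t\ge2}\rho\log_q(1/z^*_{w_t})$.
\item In $Z$, fix all coordinates outside $\bigcup_{t:w_t\ge2}P_t$. Each remaining petal then contributes independently
\[
\sum_{x_{P_t}}\beta^{|\supp x_{P_t}|}\,z^{\one[\ip{v_t}{x_{P_t}}\neq -s_t]},
\qquad s_t:=\ip{\mu_t}{x_U}.
\]
\item By Lemma~\ref{lem:calibration}(1) this petal sum equals $(1-\rho)^{-w_t}\inparen{z+(1-z)\PROver{\nu_0}{\ip{v_t}{Y}=-s_t}}$. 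By \eqref{eq:two-values} it is at most $(1-\rho)^{-w_t}G_{w_t}(0)$, uniformly in $s_t$.
\item The sum over the remaining coordinates is $(1-\rho)^{-(k-\sum w_t)}$ by Lemma~\ref{lem:calibration}(2). Hence $Z\le(1-\rho)^{-k}\prod_t G_{w_t}(0)$.
\item Using $\rho\theta_0+\log_q\frac1{1-\rho}=h$ and \eqref{eq:core-divergence}, we get $H_q(X)\le kh-\sum_t\Dq{\rho}{\pi_{w_t}}$. The bound $\le kh-T_{\ge2}\gamma^*$ follows from Lemma~\ref{lem:activation}.
\end{enumerate}

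\emph{Part~\ref{item:core-single}.} Here we must gain a whole $h$. The idea is to leave coordinate $i^*$ untilted and let each singleton form substitute for its own petal coordinate.
\begin{enumerate}
\item Keep the same tilts as in part~\ref{item:core-large}, except for the following changes.
\item Put tilt $0$ on $e_{i^*}$.
\item For each $t$ with $w_t=1$, with $P_t=\{p_t\}$, split $\theta_0$ evenly: tilt $e_{p_t}$ by $\theta_0/2$ and $\ell_t$ by $\theta_0/2$, each with factor $\sqrt\beta$.
\item The linear term is then $\rho(k-1)\theta_0+\sum_{w_t\ge2}\rho\log_q(1/z^*_{w_t})$, since each split pair still contributes $\rho\theta_0$.
\item Sum first over the petals with $w_t\ge2$ exactly as in part~\ref{item:core-large}. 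This yields factors $(1-\rho)^{-w_t}G_{w_t}(0)$, uniformly in everything else.
\item Next sum over each singleton petal coordinate $y=x_{p_t}$. By Lemma~\ref{lem:lambda}(1), with $c=s_t$ and $\alpha=(v_t)_{p_t}$, this sum equals $S(s_t)$, which is $\frac1{1-\rho}$ if $s_t=0$ and $\frac{\lambda}{1-\rho}$ otherwise.
\item Sum over $U\setminus\{i^*\}$ and the untouched coordinates with weight $\beta^{|\supp|}$; this contributes the usual $(1-\rho)^{-1}$ per coordinate.
\end{enumerate}

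\emph{The sum over $x_{i^*}$.} This is the main step. Fix $x_{U\setminus\{i^*\}}$. Since $(\mu_t)_{i^*}\ne0$, each $s_t$ is an affine function of $x_{i^*}$ with nonzero slope, so it vanishes for exactly one of the $q$ values of $x_{i^*}$. Let $n_x$ be the number of $t$ with $s_t=0$ at $x_{i^*}=x$, so $\sum_x n_x=T_1$. The contribution is
\[
(1-\rho)^{-T_1}\sum_{x\in\F_q}\lambda^{T_1-n_x}.
\]
Since $n\mapsto\lambda^{-n}$ is convex and $\lambda<1$ (Lemma~\ref{lem:lambda}(3)), this sum over the simplex $\{n\ge0:\sum_x n_x=T_1\}$ is maximized at a vertex. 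The maximum is $1+(q-1)\lambda^{T_1}$.

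Collecting terms, $Z\le(1-\rho)^{-(k-1)}\prod_{w_t\ge2}G_{w_t}(0)\,\inparen{1+(q-1)\lambda^{T_1}}$. Combined with the linear term, this gives $(k-1)h-\sum_{w_t\ge2}\Dq{\rho}{\pi_{w_t}}+\log_q\inparen{1+(q-1)\lambda^{T_1}}$.

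The one point needing care is the order of summation: first the petals with $w_t\ge2$, then the singleton petals, then $U\setminus\{i^*\}$, then $x_{i^*}$. This order makes each conditional bound uniform in the variables still to be summed, and disjointness of $P_t$ from $U$ and from each other is exactly what permits it.
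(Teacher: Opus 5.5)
Your proposal is correct and is essentially the paper's proof. It uses the same tilts (including the $\theta_0/2$ split between each singleton petal coordinate and its form, and no tilt on $x_{i^*}$), the same petal-by-petal factorization, and the same extremal case in which all singleton forms vanish at a single value of $x_{i^*}$. Bounding every $w_t\ge2$ factor by $G_{w_t}(0)$ up front and maximizing $\sum_{x}\lambda^{T_1-n_x}$ by convexity is a slightly leaner substitute for the paper's termwise comparison of the expanded product, which in the end also discards the $G_{w_t}(1)$ values.

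One correction to your closing remark: the nesting must be petals innermost, then $x_{i^*}$, then $U\setminus\{i^*\}$ and the remaining coordinates outermost. Your main step already does this by fixing $x_{U\setminus\{i^*\}}$. The order matters because the singleton factors $S(s_t)$ depend on $x_{U\setminus\{i^*\}}$, so those coordinates yield the clean factor $(1-\rho)^{-1}$ each only after the $x_{i^*}$-sum has been bounded uniformly.
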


\begin{proof}
Since $\pi_{w_t}\ge\pi_2>\rho$ for $w_t\ge2$, let
$z_t := z^*_{w_t}$ \eqref{eq:zstar} for each such $t$ throughout the
proof.

\emph{Part \ref{item:core-large}.} Here the $t$ with $w_t = 1$ play no
role, and Fact~\ref{fact:gibbs} applies with tilt $\theta_0$ on each of
the $k$ coordinates and $\log_q\frac1{z_t}$ on each remaining form. By
Lemma~\ref{lem:calibration},
\[
Z = \sum_{x\in\F_q^k}\beta^{|\supp x|}
\prod_{t\suchthat w_t\ge2}z_t^{\one\inbrak{\ell_t(x)\neq0}}
= (1-\rho)^{-k}\,
\Eover{\nu_0}{\prod_{t\suchthat w_t\ge2}z_t^{\one\inbrak{\ell_t\neq0}}}.
\]
Condition now on all coordinates outside $\bigsqcup_tP_t$. Note that the shifts
$s_t := \ip{\mu_t}{X}$ are then fixed, because the $\mu_t$ are
supported in $U$ and $U$ avoids every $P_t$. The $P_t$ are disjoint, so
these forms are conditionally independent, and by \eqref{eq:two-values}
the conditional factor at $t$ is
$z_t+(1-z_t)\PROver{\nu_0}{\ip{v_t}{Y_{P_t}} = -s_t}\le G_{w_t}(0)$.
Hence
$Z\le(1-\rho)^{-k}\prod_{t\suchthat w_t\ge2}G_{w_t}(0)$, which with
Lemma~\ref{lem:calibration} and \eqref{eq:core-divergence} gives
part~\ref{item:core-large}, since
$\Dq{\rho}{\pi_w}\ge\gamma^*$ for $w\ge2$ (Lemma~\ref{lem:activation}).

\emph{Part \ref{item:core-single}.} Every coordinate and every form is
tilted, except the one coordinate $x_{i^*}$, over which the partition
function is summed freely. Since $(\mu_t)_{i^*}\neq0$, each singleton form
vanishes at exactly one value of $x_{i^*}$, and the free sum over $x_{i^*}$
produces the error term.

Since $i^*\in U$ and every $P_t$ misses $U$, the coordinate $i^*$ lies
outside $\bigsqcup_tP_t$. This time the tilts are $\theta_0$
on every coordinate of $[k]\setminus\inset{i^*}$ except the
$T_1$ coordinates making up the $P_t$ with $w_t = 1$, which get $\frac{\theta_0}2$;
nothing on $i^*$; $\frac{\theta_0}2$ on each form with $w_t = 1$; and
$\log_q\frac1{z_t}$ on each form with $w_t\ge2$. All tilts are
nonnegative and all tilted constraints are $\rho$-sparse, so
Fact~\ref{fact:gibbs} applies, and the linear term of \eqref{eq:tilt} is
\[
\rho\inbrak{\theta_0\inparen{k-1-T_1}+T_1\cdot\tfrac{\theta_0}2
+T_1\cdot\tfrac{\theta_0}2
+\sum_{t\suchthat w_t\ge2}\log_q\tfrac1{z_t}}
= \rho\,\theta_0\,(k-1)+\rho\sum_{t\suchthat w_t\ge2}\log_q\tfrac1{z_t}.
\]

For the partition function, all coordinates except $x_{i^*}$ and those
in $\bigsqcup_tP_t$ are fixed. Let $s := x_{i^*}$. Given $s$, the sum
over the remaining coordinates factors over the blocks $P_t$, which are
disjoint. Let $g_t(s)$ denote the factor that $t$ contributes when
$x_{i^*} = s$. For $t$ with $w_t = 1$ the coordinate--form pair
contributes exactly the local sum $S(\ip{\mu_t}{x})$ of
Lemma~\ref{lem:lambda}, computed there in closed form, so after the
normalization of Lemma~\ref{lem:calibration}
\[
g_t(s)\in\inset{1,\lambda},\qquad
g_t(s) = 1 \iff \ip{\mu_t}{x} = 0,
\]
and since $(\mu_t)_{i^*}\neq0$, the event $\ip{\mu_t}{x} = 0$ happens at
exactly one value $s = s_t$ of $x_{i^*}$. For $t$ with $w_t\ge2$,
\eqref{eq:two-values} gives
$g_t(s)\in\inset{G_{w_t}(0),G_{w_t}(1)}$. When $(\mu_t)_{i^*}\neq0$ this
factor equals $G_{w_t}(0)$ at exactly one $s$, again written $s_t$, and when
$(\mu_t)_{i^*} = 0$ it is constant in $s$. Bound each constant factor by $G_{w_t}(0)$ and drop it from the product (it does not depend on $s$). This is why the sum in \ref{item:core-single} runs over every $t$ with $w_t\ge2$. Let $\cK$ be the set of $t$ that remain, those with $(\mu_t)_{i^*}\neq0$.
For $t\in\cK$ the factor $g_t$ is two-valued: the larger value at
$s = s_t$, the smaller at the other $q-1$ points. So
$g_t(s) = m_t+\Delta_t\one\inbrak{s=s_t}$ with $m_t,\Delta_t\ge0$. Expanding the
product and exchanging sums,
\[
\sum_{s\in\F_q}\prod_{t\in\cK}\inparen{m_t+\Delta_t\one\inbrak{s=s_t}}
= \sum_{\cK'\subseteq\cK}\inparen{\prod_{t\in\cK'}\Delta_t}
\inparen{\prod_{t\in\cK\setminus\cK'}m_t}\,N_{\cK'},
\qquad
N_{\cK'} := \#\inset{s\suchthat s_t = s\ \forall t\in\cK'},
\]
with $N_\emptyset = q$ and $N_{\cK'}\le1$ for $\cK'\neq\emptyset$. When all
the $s_t$ are equal, that is when the $\ip{\mu_t}{X}$ vanish at one and
the same $s$, every $N_{\cK'}$ with $\cK'\neq\emptyset$ equals $1$, its largest possible value, and the coefficients $\prod_{t\in\cK'}\Delta_t\prod_{t\in\cK\setminus\cK'}m_t$ are nonnegative. Termwise comparison therefore bounds the sum by its
value in that case, namely
$\prod_{t\in\cK}(m_t+\Delta_t)+(q-1)\prod_{t\in\cK}m_t$. The $t\in\cK$ with
$w_t = 1$ contribute $(m_t+\Delta_t,m_t) = (1,\lambda)$, and those with
$w_t\ge2$ contribute $(G_{w_t}(0),G_{w_t}(1))$. Using
$G_{w_t}(1)\le G_{w_t}(0)$,
\begin{align*}
\sum_{s\in\F_q}\prod_{t\in\cK}(\cdots)
&\;\le\;\prod_{t\in\cK,\,w_t\ge2}G_{w_t}(0)
+(q-1)\,\lambda^{T_1}\prod_{t\in\cK,\,w_t\ge2}G_{w_t}(1)\\
&\;\le\;\inparen{\prod_{t\in\cK,\,w_t\ge2}G_{w_t}(0)}
\inparen{1+(q-1)\lambda^{T_1}},
\end{align*}
so, reinstating the dropped forms at $G_{w_t}(0)$, the total is at
most
\[
\Bigl(\textstyle\prod_{t\suchthat w_t\ge2}G_{w_t}(0)\Bigr)
\inparen{1+(q-1)\lambda^{T_1}}.
\]
Every other coordinate contributes $\frac1{1-\rho}$ after summation
(Lemma~\ref{lem:calibration}). We count powers of $(1-\rho)^{-1}$ and
get one for each fully tilted coordinate outside $\bigsqcup_tP_t$, one
for each coordinate--form pair with $w_t = 1$, and $w_t$ for each $t$
with $w_t\ge2$, which is $k-1$ in total, so
\[
Z\;\le\;(1-\rho)^{-(k-1)}
\inparen{\prod_{t\suchthat w_t\ge2}G_{w_t}(0)}
\inparen{1+(q-1)\lambda^{T_1}}.
\]
With Lemma~\ref{lem:calibration} and \eqref{eq:core-divergence}, this gives
part~\ref{item:core-single}.
\end{proof}

\begin{corollary}\label{cor:single-form}
Let $X\in\F_q^k$ have all $k$ coordinates $\rho$-sparse, and let
$a\in\F_q^k$ with $w := |\supp a|\ge2$ be such that $\ip aX$ is also
$\rho$-sparse. Then
\begin{equation}\label{eq:single-form}
H_q(X)\;\le\;k\,h-\Dq{\rho}{\pi_w}\;\le\;k\,h-\gamma^*.
\end{equation}
\end{corollary}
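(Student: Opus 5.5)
The plan is to obtain Corollary~\ref{cor:single-form} as the special case $T = 1$ of Proposition~\ref{prop:common-core}\ref{item:core-large}, with an empty core. Take $U := \emptyset$ and $\mu_1 := 0$, so that $\supp\mu_1\subseteq U$ trivially, and set $v_1 := a$. Then $P_1 = \supp a$ has size $w\ge2$, and it is vacuously disjoint from $U$. There is only one petal, so pairwise disjointness is also vacuous. The form $\ell_1(X) = \ip aX$ is $\rho$-sparse by hypothesis, and all $k$ coordinates of $X$ are $\rho$-sparse. Thus every hypothesis of part~\ref{item:core-large} holds, with $T_{\ge2} = 1$ and $T_1 = 0$.

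Part~\ref{item:core-large} then yields $H_q(X)\le k\,h - \Dq{\rho}{\pi_w}$, which is the first inequality in \eqref{eq:single-form}. The second inequality, $\Dq{\rho}{\pi_w}\ge\gamma^*$ for $w\ge2$, is the last sentence of Lemma~\ref{lem:activation}. That lemma rests on $\pi_w$ increasing in $w$ together with the monotonicity of $b\mapsto\Dq{\rho}{b}$ on $[\rho,1)$.

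As a sanity check, the argument can also be run directly without the proposition. Apply \eqref{eq:tilt} with tilt $\theta_0$ on each coordinate and tilt $\theta$ on $\ip aX$. By Lemma~\ref{lem:calibration} and Lemma~\ref{lem:activation}, the partition function equals $(1-\rho)^{-k}\bigl(1-\pi_w(1-q^{-\theta})\bigr)$. The resulting bound is therefore $k\,h + c(\pi_w,\theta)$, and optimizing $\theta$ via Lemma~\ref{lem:tilt-opt} gives $k\,h-\Dq{\rho}{\pi_w}$. This route needs $\theta^*_w\ge0$, which holds because $\pi_w\ge\rho$ makes $z^*_w\le1$.

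I expect no real obstacle here. The only point to verify is that the degenerate choice $U=\emptyset$ is allowed by the proposition. It is: in that proof the conditioning on coordinates outside $P_1$ simply fixes the shift $s_1 = 0$, and the factor is then exactly $G_w(0)$.
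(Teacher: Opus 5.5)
Your proposal is correct and is the paper's own proof: you apply Proposition~\ref{prop:common-core}\ref{item:core-large} with empty core $U=\emptyset$, $\mu_1=0$, $v_1=a$, note that the disjointness hypotheses are vacuous, and obtain the second inequality from Lemma~\ref{lem:activation}. Your direct tilting check is also valid. It just reproduces the computation behind \eqref{eq:single-form-c} and Lemma~\ref{lem:tilt-opt}.
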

\begin{proof}
We apply Proposition~\ref{prop:common-core}\ref{item:core-large} with core
$U = \emptyset$ and the single form $\ell_1 = \ip aX$, that is, with
$\mu_1 = 0$ and $v_1 = a$. The disjointness hypotheses are vacuous,
$w_1 = w\ge2$, and $\Dq{\rho}{\pi_w}\ge\gamma^*$ by
Lemma~\ref{lem:activation}.
\end{proof}

\begin{corollary}\label{cor:gamma-lt-h}
$\gamma^*<h_q(\rho)$.
\end{corollary}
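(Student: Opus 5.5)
We need to prove $\gamma^* = \Dq{\rho}{\pi_2} < h_q(\rho)$. The most economical route is to exhibit a configuration satisfying the hypotheses of Corollary~\ref{cor:single-form} whose entropy is strictly larger than $k\,h-h$. Since Corollary~\ref{cor:single-form} forces $H_q(X)\le k\,h-\gamma^*$, any such example gives $\gamma^*<h$ at once.

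\emph{The construction.} Take $k=2$, $a=(1,-1)$ (so $w=2$), and let $X=(X_1,X_2)$ with $X_1=X_2$ distributed as $\nu_0$ on $\F_q$. Each coordinate is $\rho$-sparse. The form $\ip aX=X_1-X_2\equiv0$ is $0$-sparse, hence $\rho$-sparse. Corollary~\ref{cor:single-form} therefore applies and gives
\[
h_q(\rho)=H_q(X)\le 2h-\gamma^*,
\]
that is, $\gamma^*\le h$. This gives only the weak inequality, so we need one extra bit of slack.

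\emph{Upgrading to strict inequality.} We perturb the example so that it carries strictly more entropy than $h$. Let $X_2=X_1+E$, where $E$ is independent of $X_1$, nonzero with small probability $\delta$, and uniform on $\F_q^\times$ when nonzero. We must keep $X_2$ and $X_1-X_2=-E$ $\rho$-sparse. The form is fine for $\delta\le\rho$. For $X_2$, the difficulty is that adding $E$ raises $\PR{X_2\neq0}$ above $\rho$ at first order in $\delta$.

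To compensate, lower the sparsity of $X_1$ to $\rho'<\rho$, and choose $\delta$ so that $\PR{X_2\neq0}=\rho$. Since $X_1\sim$ ($\rho'$-sparse, uniform on nonzeros) and $E$ is independent, we have
\[
\PR{X_2\neq0}=\rho'(1-\delta)+\delta(1-\rho')+\rho'\delta\,\tfrac{q-2}{q-1},
\]
which is continuous in $\delta$ and equals $\rho'$ at $\delta=0$. So for $\delta$ of order $\rho-\rho'$ it equals $\rho$.

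The entropy is then $H_q(X)=h_q(\rho')+h_q(\delta)$. As $\rho'\uparrow\rho$ the loss $h_q(\rho)-h_q(\rho')$ is $O(\rho-\rho')$, while the gain $h_q(\delta)$ with $\delta=\Theta(\rho-\rho')$ is of order $\delta\log(1/\delta)$, which is superlinear. Hence $H_q(X)>h$ for $\rho'$ close to $\rho$. Corollary~\ref{cor:single-form} then gives $\gamma^*\le 2h-H_q(X)<h$.

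\emph{Main obstacle.} The only delicate point is verifying that the $\delta\log(1/\delta)$ term dominates the linear loss. The derivative of $h_q$ at $\rho<1-\frac1q$ is finite, while the derivative of $h_q(\delta)$ at $0^+$ is infinite, so this holds. A fallback is a direct analytic comparison of $\Dq{\rho}{\rho(1+\kappa)}$ with $h_q(\rho)$, using $\Dq{\rho}{\pi}\le\log_q\frac1{1-\pi}\cdot(1-\rho)+\rho\log_q\frac\rho\pi$. That route seems messier, so the construction above is preferred.
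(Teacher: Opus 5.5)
Your proof is correct and follows the paper's route: build a two-dimensional configuration with the support-$2$ form $e_1-e_2$ whose entropy strictly exceeds $h_q(\rho)$, and feed it into Corollary~\ref{cor:single-form}. Your compensation step needs one more line than you gave: $\PR{X_2\neq0}=\rho'+\delta\inparen{1-\frac{q\rho'}{q-1}}$ is linear in $\delta$ with positive slope, which is what actually makes $\delta=\Theta(\rho-\rho')$ work.

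The paper's perturbation is simpler. It takes $X_1\sim\nu_0$ and lets $X_2$ be $X_1$ with probability $1-\eta$ and an independent copy of $X_1$ with probability $\eta\le\rho$. Then $X_2\sim\nu_0$ exactly, the form is nonzero with probability at most $\eta$, and $H_q(X)=h_q(\rho)+H_q(X_2\mid X_1)>h_q(\rho)$. So there is no need to lower the sparsity to $\rho'$ or to compare $\delta\log(1/\delta)$ against a linear loss.
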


\begin{proof}
Let $k = 2$ and $X_1\sim\nu_0$, and let $X_2 = X_1$ with probability
$1-\eta$ and an independent copy of $X_1$ with probability $\eta$, for a
fixed $\eta\in(0,\rho]$. Both coordinates are $\rho$-sparse, and the form
$a = e_1-e_2$, of support $2$, has $\PR{\ip aX\neq0}\le\eta\le\rho$.
Corollary~\ref{cor:single-form} implies $H_q(X)\le2h_q(\rho)-\gamma^*$.
On the other hand $H_q(X_2\mid X_1)>0$, since given $X_1 = x_1$ the
distribution of $X_2$ is a non-degenerate mixture, and therefore
$H_q(X) = h_q(\rho)+H_q(X_2\mid X_1)>h_q(\rho)$.
\end{proof}

So a single form never saves a full coordinate, and part~(2) of Theorem~\ref{thm:tool} is not implied by part~(1).

\subsection{Divergences add in general position}\label{sec:gp}
Reaching $H_q(X)\le(k-1)\,h_q(\rho)$ takes roughly $h_q(\rho)/\gamma^*$ forms,
whose single-form divergences would add if the forms were independent
under $\nu_0$. They are not, and the gap is at most a sum $\Psi$, defined below, over the linear combinations of at least two of the
forms, each combination contributing according to how far its support
falls below the union of the supports of the forms involved. The gap is
exponentially small when every such combination has large support, a
condition we call \emph{general position} (Definition~\ref{def:gp}).

Throughout this subsection $X\in\F_q^k$ has all $k$ coordinates
$\rho$-sparse, and $a_1,\dots,a_T\in\F_q^k$ are nonzero vectors with
$w_t := |\supp a_t|\ge1$ such that each $\ell_t(X):=\ip{a_t}{X}$ is
$\rho$-sparse. For $c = (c_1,\dots,c_T)\in\F_q^T$ let
\[
a_c := \sum_{t=1}^T c_ta_t, \qquad w(c) := |\supp a_c|, \qquad
\Sigma(c) := \sum_{t\suchthat c_t\neq0}w_t,
\]
with the convention $\kappa^0 = 1$. For $B\subseteq[T]$ we regard
$c\in\F_q^B$ as an element of $\F_q^T$ by extending it by $0$. Note that
$\supp a_c\subseteq\bigcup_{t\suchthat c_t\neq0}\supp a_t$, so that always
$w(c)\le\Sigma(c)$ and $\kappa^{w(c)}\ge\kappa^{\Sigma(c)}$. Let
\begin{equation}\label{eq:interaction}
\Psi := \sum_{\substack{c\in\F_q^T\\ |\supp c|\ge2}}
\inparen{\kappa^{w(c)}-\kappa^{\Sigma(c)}}\;\ge\;0.
\end{equation}

\begin{lemma}\label{lem:expansion}
For any $z_1,\dots,z_T\in(0,1]$ let
$f_t(x):= z_t^{\one\inbrak{\ell_t(x)\neq0}}$. Then
\begin{align}
\Eover{\nu_0}{\prod_{t=1}^Tf_t} &=
\sum_{B\subseteq[T]}\inparen{\prod_{t\notin B}z_t}
\inparen{\prod_{t\in B}\frac{1-z_t}{q}}\sum_{c\in\F_q^B}\kappa^{w(c)},
\label{eq:expansion-joint}\\
\prod_{t=1}^T\Eover{\nu_0}{f_t} &=
\sum_{B\subseteq[T]}\inparen{\prod_{t\notin B}z_t}
\inparen{\prod_{t\in B}\frac{1-z_t}{q}}\sum_{c\in\F_q^B}\kappa^{\Sigma(c)}.
\label{eq:expansion-prod}
\end{align}
Consequently
\begin{equation}\label{eq:expansion-gap}
0\;\le\;\Eover{\nu_0}{\prod_t f_t} - \prod_t\Eover{\nu_0}{f_t}
\;\le\;\Psi\prod_{t=1}^T\inparen{z_t+\frac{1-z_t}{q}}
\;\le\;\Psi\prod_{t=1}^T\Eover{\nu_0}{f_t}.
\end{equation}
\end{lemma}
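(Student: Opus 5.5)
We will expand each factor $f_t$ by writing $f_t(x) = z_t + (1-z_t)\one\inbrak{\ell_t(x)=0}$. The indicator has the character expansion $\one\inbrak{\ell_t(x)=0} = \frac1q\sum_{c_t\in\F_q}\chi(c_t\ell_t(x))$ for a fixed nontrivial additive character $\chi$. Multiplying over $t$ and choosing, for each $t$, either the $z_t$ term ($t\notin B$) or the indicator term ($t\in B$), we get
\[
\prod_t f_t(x) = \sum_{B\subseteq[T]}\Bigl(\prod_{t\notin B}z_t\Bigr)\Bigl(\prod_{t\in B}\frac{1-z_t}{q}\Bigr)\sum_{c\in\F_q^B}\chi\bigl(\ip{a_c}{x}\bigr),
\]
using linearity $\sum_t c_t\ell_t(x) = \ip{a_c}{x}$. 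Taking expectation under $\nu_0$, Lemma~\ref{lem:activation} gives $\Eover{\nu_0}{\chi(\ip{a_c}{X})} = \kappa^{w(c)}$ for $a_c\neq0$. For $a_c=0$ the expectation is $1=\kappa^0$, which matches the stated convention. This proves \eqref{eq:expansion-joint}.

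For \eqref{eq:expansion-prod} we do the same computation one form at a time: $\Eover{\nu_0}{f_t} = z_t + \frac{1-z_t}{q}\sum_{c_t\in\F_q}\kappa^{w_t\one[c_t\neq0]}$. Expanding the product over $t$ gives the same $B$-sum, with $\kappa^{w(c)}$ replaced by $\prod_{t\in B}\kappa^{w_t\one[c_t\neq 0]} = \kappa^{\Sigma(c)}$.

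Subtracting the two expansions, the terms with $|\supp c|\le1$ cancel, because then $w(c)=\Sigma(c)$: a single nonzero $c_t$ gives $a_c = c_ta_t$ with support $w_t$. Every remaining difference $\kappa^{w(c)}-\kappa^{\Sigma(c)}$ is nonnegative since $w(c)\le\Sigma(c)$ and $\kappa\in(0,1)$. All coefficients are also nonnegative because $z_t\in(0,1]$. This gives the lower bound $0$.

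For the upper bound, each $c$ with $|\supp c|\ge 2$ appears in the sum for every $B\supseteq\supp c$, and in no other $B$. Its total coefficient is therefore
\[
\prod_{t\in\supp c}\frac{1-z_t}{q}\prod_{t\notin\supp c}\Bigl(z_t+\frac{1-z_t}{q}\Bigr),
\]
summing over whether each remaining $t$ lies in $B$. Since $\frac{1-z_t}{q}\le z_t+\frac{1-z_t}{q}$, this coefficient is at most $\prod_{t=1}^T\bigl(z_t+\frac{1-z_t}q\bigr)$. Summing over $c$ gives the bound $\Psi\prod_t\bigl(z_t+\frac{1-z_t}q\bigr)$.

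Finally, $\Eover{\nu_0}{f_t} = 1-\pi_{w_t}(1-z_t)$, and $\pi_{w_t}\le\frac{q-1}{q}$ by the formula in Lemma~\ref{lem:activation}. Hence $\Eover{\nu_0}{f_t}\ge 1-\frac{q-1}{q}(1-z_t) = z_t+\frac{1-z_t}{q}$, which gives the last inequality.

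The only delicate point is the bookkeeping of how each $c$ is counted across the subsets $B$, and the $a_c=0$ case, which the convention $\kappa^0=1$ handles.
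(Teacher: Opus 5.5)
Your proof is correct and follows the paper's argument: the same character expansion of $\one\inbrak{\ell_t=0}$, the same grouping by $B$ for both expansions, cancellation of the $|\supp c|\le 1$ terms, and the last inequality from $\pi_{w_t}\le\frac{q-1}{q}$ (equivalently $\kappa>0$). The only difference is bookkeeping in the middle inequality. The paper bounds each $\Psi_B$ by $\Psi$ and sums the weights over $B$. You instead regroup by $c$ and bound each $c$'s total coefficient by $\prod_t\bigl(z_t+\frac{1-z_t}{q}\bigr)$. Both give the same bound, and yours passes through a slightly sharper intermediate expression.
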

\begin{proof}
\emph{Step 1.} Fix a nontrivial additive character $\chi$ of $\F_q$. Since
$f_t = z_t + (1-z_t)\one\inbrak{\ell_t = 0}$ and
$\one\inbrak{\ell_t(x) = 0} =
\frac1q\sum_{c_t\in\F_q}\chi\inparen{c_t\ell_t(x)}$, expanding the product
over the set $B$ of factors in which the second summand is chosen gives,
pointwise in $x$,
\[
\prod_t f_t(x) = \sum_{B\subseteq[T]}\inparen{\prod_{t\notin B}z_t}
\inparen{\prod_{t\in B}\frac{1-z_t}{q}}
\sum_{c\in\F_q^B}\chi\inparen{\ip{a_c}{x}}.
\]
Taking $\Eover{\nu_0}{\cdot}$ gives \eqref{eq:expansion-joint}: for
$a_c\neq0$, Lemma~\ref{lem:activation} gives
$\Eover{\nu_0}{\chi\inparen{\ip{a_c}{X}}} = \kappa^{w(c)}$, and for
$a_c = 0$ both sides equal $1$.

\emph{Step 2.} The same expansion applied to a single factor gives
$\Eover{\nu_0}{f_t} = z_t + \frac{1-z_t}{q}\inparen{1+(q-1)\kappa^{w_t}}$.
Multiplying out over $t\in[T]$ and grouping by the set $B$ of factors in
which the second summand is chosen yields \eqref{eq:expansion-prod},
once one notes
\[
\prod_{t\in B}\inparen{1+(q-1)\kappa^{w_t}} =
\sum_{c\in\F_q^B}\kappa^{\Sigma(c)},
\]
which follows by expanding the product and grouping the $q^{|B|}$ terms of
the right-hand side according to $\supp c$ (each $S\subseteq B$
contributing $(q-1)^{|S|}\kappa^{\sum_{t\in S}w_t}$).

\emph{Step 3.} We subtract. For $|\supp c|\le1$ we have
$w(c) = \Sigma(c)$: for $c = 0$ both are $0$, and for $c = c_te_t$ with
$c_t\neq0$ both are $w_t$. So those terms cancel identically and the
difference equals
\[
\sum_{B\subseteq[T]}\inparen{\prod_{t\notin B}z_t}
\inparen{\prod_{t\in B}\frac{1-z_t}{q}}\Psi_B, \qquad
\Psi_B := \sum_{\substack{c\in\F_q^B\\ |\supp c|\ge2}}
\inparen{\kappa^{w(c)}-\kappa^{\Sigma(c)}}.
\]
Every bracket is $\ge0$, since $w(c)\le\Sigma(c)$ and $\kappa\in(0,1)$, so
$0\le\Psi_B\le\Psi$, the index set of $\Psi_B$ being a subset of that of
$\Psi$. Bounding each $\Psi_B$ by $\Psi$ and using
\[
\sum_{B\subseteq[T]}\inparen{\prod_{t\notin B}z_t}
\inparen{\prod_{t\in B}\frac{1-z_t}{q}}
= \prod_{t=1}^T\inparen{z_t+\frac{1-z_t}{q}}
\]
gives the first two inequalities of \eqref{eq:expansion-gap}. The last
follows from Step~2, since $\kappa>0$:
\[
\Eover{\nu_0}{f_t} = z_t+\frac{1-z_t}{q}\inparen{1+(q-1)\kappa^{w_t}}
\;\ge\;z_t+\frac{1-z_t}{q}. \qedhere
\]
\end{proof}

\begin{proposition}\label{prop:product}
Let $X\in\F_q^k$ have all $k$ coordinates $\rho$-sparse and let
$a_1,\dots,a_T\in\F_q^k$ be nonzero, with $w_t := |\supp a_t|$ and all
$\ip{a_t}{X}$ $\rho$-sparse. Then
\[
H_q(X)\;\le\;k\,h_q(\rho)\;-\;\sum_{t=1}^T\Dq{\rho}{\pi_{w_t}}
\;+\;\frac{\Psi}{\ln q}.
\]
\end{proposition}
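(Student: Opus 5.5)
The plan is to apply Fact~\ref{fact:gibbs} in the form \eqref{eq:tilt}. The tilts are $\theta_0$ on each of the $k$ coordinates, and $\theta^*_{w_t}=\log_q(1/z_t)$ with $z_t:=z^*_{w_t}$ from \eqref{eq:zstar} on each form $\ell_t$. All tilts are nonnegative, since $\pi_{w_t}\ge\rho$, and every tilted constraint is $\rho$-sparse, so \eqref{eq:tilt} applies. The linear term is $\rho\theta_0 k+\rho\sum_t\log_q(1/z_t)$. By Lemma~\ref{lem:calibration}(1)--(2), the partition function factors as
\[
Z=(1-\rho)^{-k}\,\Eover{\nu_0}{\prod_{t=1}^T f_t},\qquad f_t(x)=z_t^{\one\inbrak{\ell_t(x)\neq0}}.
\]
Using $\rho\theta_0+\log_q\frac1{1-\rho}=h_q(\rho)$, this gives
\[
H_q(X)\le k\,h_q(\rho)+\rho\sum_t\log_q\tfrac1{z_t}+\log_q\Eover{\nu_0}{\textstyle\prod_t f_t}.
\]

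Next I would invoke Lemma~\ref{lem:expansion}, inequality \eqref{eq:expansion-gap}:
\[
\Eover{\nu_0}{\textstyle\prod_t f_t}\le(1+\Psi)\prod_t\Eover{\nu_0}{f_t}.
\]
Since $\log_q(1+\Psi)=\ln(1+\Psi)/\ln q\le\Psi/\ln q$, we obtain
\[
\log_q\Eover{\nu_0}{\textstyle\prod_t f_t}\le\sum_t\log_q\Eover{\nu_0}{f_t}+\frac{\Psi}{\ln q}.
\]
By Lemma~\ref{lem:activation}, $\Eover{\nu_0}{f_t}=1-\pi_{w_t}(1-z_t)$. So the $t$-th form contributes exactly $c(\pi_{w_t},\theta^*_{w_t})$ from \eqref{eq:single-form-c}, which equals $-\Dq{\rho}{\pi_{w_t}}$ by Lemma~\ref{lem:tilt-opt}. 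Summing over $t$ gives the claim.

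The only point needing care is a form with $w_t=1$. There $\pi_1=\rho$, so $z_t=1$ and $\theta=0$, and the contribution is $0=-\Dq{\rho}{\rho}$, which is consistent. There is no real obstacle beyond the multiplicative form of the correlation gap, and \eqref{eq:expansion-gap} already supplies it through its last inequality. The step I would double-check is that the Gibbs potential places the coordinate tilts and the form tilts additively, so that $Z$ really equals $(1-\rho)^{-k}$ times the $\nu_0$-expectation of $\prod_t f_t$.
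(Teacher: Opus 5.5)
Your proposal is correct and matches the paper's proof step for step. It uses the same tilts ($\theta_0$ on the coordinates, $z^*_{w_t}$ on the forms), the factorization $Z=(1-\rho)^{-k}\Eover{\nu_0}{\prod_t f_t}$ from Lemma~\ref{lem:calibration}, the multiplicative bound $(1+\Psi)\prod_t\Eover{\nu_0}{f_t}$ from \eqref{eq:expansion-gap}, the per-form identity $c(\pi_{w_t},\theta_t)=-\Dq{\rho}{\pi_{w_t}}$, and $\log_q(1+\Psi)\le\Psi/\ln q$. Your check that a form with $w_t=1$ gets $z_t=1$ and contributes $0$ is a valid detail that the paper leaves implicit.
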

\begin{proof}
We apply \eqref{eq:tilt} with the $k$ coordinates tilted
by the fixed constant $\theta_0$
and the $T$ forms tilted by $\theta_t := \log_q(1/z_t)$ for $t\in[T]$, where $z_t := z^*_{w_t}$ is
the optimal factor \eqref{eq:zstar}. By Lemma~\ref{lem:calibration}(1),
$Z = (1-\rho)^{-k}\,\Eover{\nu_0}{\prod_tf_t}$ with
$f_t = z_t^{\one\inbrak{\ell_t\neq0}}$.

Now \eqref{eq:expansion-gap} gives
\[
\Eover{\nu_0}{\prod_tf_t}\;\le\;\prod_t\Eover{\nu_0}{f_t}\cdot
\inparen{1+\Psi}.
\]

Finally, by \eqref{eq:tilt} and
Lemma~\ref{lem:calibration},
\[
H_q(X)\le\rho\inparen{k\theta_0+\sum_t\theta_t}+\log_qZ
\;\le\; k\,h+\sum_t\inbrak{\rho\theta_t+\log_q\Eover{\nu_0}{f_t}}
+\log_q\inparen{1+\Psi},
\]
and each bracket equals $c(\pi_{w_t},\theta_t)$
\eqref{eq:single-form-c} by Lemma~\ref{lem:activation}, which is
$-\Dq{\rho}{\pi_{w_t}}$ by Lemma~\ref{lem:tilt-opt}, and
$\log_q(1+u)\le u/\ln q$.
\end{proof}

\begin{definition}[General position]\label{def:gp}
Let $W\ge1$. Vectors $a_1,\dots,a_T\in\F_q^k$ are in \emph{$W$-general
position} if $w(c)\ge W$ for every $c\in\F_q^T$ with $|\supp c|\ge2$,
that is, if every $\F_q$-linear combination of at least two of them has
support of size $\ge W$.
\end{definition}

\begin{proposition}\label{prop:gp}
Let $X\in\F_q^k$ have all $k$ coordinates $\rho$-sparse and suppose
$a_1,\dots,a_{T_{\mathrm{gp}}}\in\F_q^k$, each of support $\ge2$ and with all $\ip{a_t}{X}$ $\rho$-sparse, are in $W_{\mathrm{gp}}$-general position, with $T_{\mathrm{gp}}$ and $W_{\mathrm{gp}}$ as in Appendix~\ref{sec:derived-constants}. Then
\begin{equation}\label{eq:gp-bound}
H_q(X)\;\le\;k\,h_q(\rho)-T_{\mathrm{gp}}\gamma^*+\frac{\gamma^*}{2}
\;\le\;(k-1)\,h_q(\rho)-\frac{\gamma^*}{2}\;<\;(k-1)\,h_q(\rho).
\end{equation}
\end{proposition}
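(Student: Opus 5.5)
The plan is to apply Proposition~\ref{prop:product} to the $T := T_{\mathrm{gp}}$ forms and then use general position to show that the interaction term $\Psi/\ln q$ is at most $\gamma^*/2$. Each form has support $w_t\ge2$, so Lemma~\ref{lem:activation} gives $\Dq{\rho}{\pi_{w_t}}\ge\gamma^*$. Proposition~\ref{prop:product} then yields
\[
H_q(X)\;\le\;k\,h-T_{\mathrm{gp}}\gamma^*+\frac{\Psi}{\ln q}.
\]
The first inequality of \eqref{eq:gp-bound} therefore reduces to proving $\Psi\le\frac{\gamma^*\ln q}{2}$.

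To bound $\Psi$, drop the nonpositive term $-\kappa^{\Sigma(c)}$ and use $W_{\mathrm{gp}}$-general position: every $c$ with $|\supp c|\ge2$ has $w(c)\ge W_{\mathrm{gp}}$. Hence
\[
\Psi\;\le\;\sum_{\substack{c\in\F_q^{T}\\|\supp c|\ge2}}\kappa^{w(c)}\;\le\;q^{T_{\mathrm{gp}}}\,\kappa^{W_{\mathrm{gp}}}.
\]
This bound is deliberately crude: the number of terms $c$ depends only on $T_{\mathrm{gp}}$, which in turn depends only on $(q,\rho)$. We then choose $W_{\mathrm{gp}}$ after $T_{\mathrm{gp}}$, large enough that $q^{T_{\mathrm{gp}}}\kappa^{W_{\mathrm{gp}}}\le\frac{\gamma^*\ln q}{2}$, for instance
\[
W_{\mathrm{gp}}\;\ge\;\frac{T_{\mathrm{gp}}\ln q+\ln\frac{2}{\gamma^*\ln q}}{\ln(1/\kappa)}.
\]
This is the content of the constants fixed in Appendix~\ref{sec:derived-constants}. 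It is consistent because $\kappa\in(0,1)$ by Lemma~\ref{lem:activation}.

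For the second inequality we need $k\,h-T_{\mathrm{gp}}\gamma^*+\gamma^*/2\le(k-1)h-\gamma^*/2$, that is, $T_{\mathrm{gp}}\gamma^*\ge h+\gamma^*$. So we take $T_{\mathrm{gp}}:=\ceil{h/\gamma^*}+1$, which is finite because $\gamma^*>0$. Corollary~\ref{cor:gamma-lt-h} confirms that this is at least $2$, so combinations of at least two forms actually occur. The final strict inequality holds because $\gamma^*>0$.

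The only delicate point is the order of the constants. $T_{\mathrm{gp}}$ must be fixed first, from $h$ and $\gamma^*$ alone. Only then is $W_{\mathrm{gp}}$ chosen to absorb the $q^{T_{\mathrm{gp}}}$ terms of $\Psi$. Nothing else in the argument is an obstacle, since Proposition~\ref{prop:product} already carries the full character computation.
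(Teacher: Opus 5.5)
Your proof is correct and follows the paper's argument. You apply Proposition~\ref{prop:product} with $\Dq{\rho}{\pi_{w_t}}\ge\gamma^*$, bound $\Psi\le q^{T_{\mathrm{gp}}}\kappa^{W_{\mathrm{gp}}}$ crudely via general position, and choose $W_{\mathrm{gp}}$ after $T_{\mathrm{gp}}$ so the error is at most $\gamma^*/2$. The only cosmetic difference is that the statement fixes $T_{\mathrm{gp}}=\floor{h/\gamma^*}+2$ rather than your $\ceil{h/\gamma^*}+1$, but your argument uses only $T_{\mathrm{gp}}\gamma^*\ge h+\gamma^*$, which holds for both.
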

\begin{proof}
Every $c$ with $|\supp c|\ge2$ has $w(c)\ge W_{\mathrm{gp}}$, so each term of
\eqref{eq:interaction} is at most $\kappa^{w(c)}\le\kappa^{W_{\mathrm{gp}}}$, and there are
at most $q^{T_{\mathrm{gp}}}$ terms, so $\Psi\le q^{T_{\mathrm{gp}}}\kappa^{W_{\mathrm{gp}}}$. By
Proposition~\ref{prop:product}, together with
$\Dq{\rho}{\pi_{w_t}}\ge\gamma^*$ (Lemma~\ref{lem:activation}),
\[
H_q(X)\;\le\;k\,h - T_{\mathrm{gp}}\gamma^* +
\frac{q^{T_{\mathrm{gp}}}}{\ln q}\,\kappa^{W_{\mathrm{gp}}}.
\]
The error term is at most $\gamma^*/2$, by the choice of
$W_{\mathrm{gp}}$ (Appendix~\ref{sec:derived-constants}).

Since $T_{\mathrm{gp}} = \floor{h/\gamma^*}+2$ and
$\floor{h/\gamma^*}\ge h/\gamma^*-1$, we have $T_{\mathrm{gp}}\gamma^*-h\ge\gamma^*$,
and therefore $-T_{\mathrm{gp}}\gamma^*+\frac{\gamma^*}2\le-h-\frac{\gamma^*}2$.
\end{proof}

\subsection{General position, or confinement to a subspace}\label{sec:extraction}
Proposition~\ref{prop:gp} requires $T_{\mathrm{gp}}$ forms in $W_{\mathrm{gp}}$-general
position, which a configuration need not contain. A family that does not
contain them is confined instead to a subspace of bounded dimension, up to
adding to each member a vector of support less than $W_{\mathrm{gp}}$.

\begin{definition}\label{def:low}
For $W\ge1$ let
$\LowW{W} := \inset{a\in\F_q^k\suchthat|\supp a|<W}$.
\end{definition}

The set $\LowW{W}$ contains $0$ and is invariant under scaling by
$\F_q^\times$, and the proof of Lemma~\ref{lem:dichotomy} uses both.

\begin{lemma}\label{lem:dichotomy}
Let $T\ge2$, $W\ge1$, and let $\A\subseteq\F_q^k$ be nonempty. Then
$\A$ contains $T$ vectors in $W$-general position, or else
$\A\subseteq \cU+\LowW{W}$ for some subspace $\cU\le\F_q^k$ with
$\dim \cU\le T-1$.
\end{lemma}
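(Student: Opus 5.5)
The plan is a greedy maximality argument; no entropy is involved, only the combinatorics of Definition~\ref{def:gp}. Two simple observations drive it. First, a family consisting of a single vector is vacuously in $W$-general position, because there is no $c$ with $|\supp c|\ge2$. Second, general position is inherited by subfamilies: a combination of at least two members of a subfamily, extended by zeros, is a combination of at least two members of the whole family.

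Since $\A$ is nonempty, pick any $a\in\A$ and start with $S=\inset{a}$. Repeatedly add an element of $\A$ to $S$ as long as the enlarged family stays in $W$-general position. Stop as soon as $|S|=T$, or when no element of $\A$ can be added. If we reach $|S|=T$, these $T$ vectors are in $W$-general position and we are done. Otherwise the procedure stops at a family $S=\inset{a_1,\dots,a_s}$ with $1\le s\le T-1$ that is maximal: adjoining any further vector of $\A$ destroys general position. Take $\cU:=\spn(S)$, so $\dim\cU\le s\le T-1$.

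It remains to show $\A\subseteq\cU+\LowW{W}$. Vectors of $S$ lie in $\cU\subseteq\cU+\LowW{W}$, since $0\in\LowW{W}$. Now take $a\in\A\setminus S$. By maximality, the family $a_1,\dots,a_s,a$ is not in $W$-general position. So there is a coefficient vector $(c_1,\dots,c_s,c_a)$ with at least two nonzero entries such that
\[
\Bigl|\supp\Bigl(c_aa+\textstyle\sum_t c_ta_t\Bigr)\Bigr|<W .
\]
The key step is to show $c_a\neq0$. If $c_a=0$, this would be a combination of at least two members of $S$ with support smaller than $W$, contradicting the general position of $S$. Hence $c_a\neq0$, and $c_aa+\sum_tc_ta_t\in\LowW{W}$. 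Since $\LowW{W}$ is invariant under scaling by $c_a^{-1}\in\F_q^\times$, we get
\[
a\in-\textstyle\sum_tc_a^{-1}c_ta_t+\LowW{W}\subseteq\cU+\LowW{W}.
\]
The case where $a$ repeats a vector of $S$ is already covered, since then $a\in\cU$.

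There is no real obstacle here. The only care needed is in the degenerate cases, namely the vacuous general position of a singleton and the use of $0\in\LowW{W}$, which are exactly the two properties noted before the lemma.
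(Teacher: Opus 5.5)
Your proposal is correct and is essentially the paper's argument: greedily extend a vacuously general-position singleton, and if the process stops before $T$ vectors, note that the witnessing combination must involve the new vector $a$ with a nonzero coefficient. Scaling $\LowW{W}$ by that coefficient's inverse then puts $a$ in $\spn(a_1,\dots,a_s)+\LowW{W}$, with $0\in\LowW{W}$ covering the chosen vectors themselves.
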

\begin{proof}
We start with $s = 1$ and any $a_1\in\A$, which is vacuously in
$W$-general position. Given
$a_1,\dots,a_s\in\A$ in $W$-general position with $s\le T-1$, we ask
whether some $a\in\A$ can be added. We cannot add $a$ exactly
when some combination of at least two members of
$\inset{a_1,\dots,a_s,a}$ has support $<W$. Such a combination cannot
omit $a$, since $a_1,\dots,a_s$ are already in $W$-general position, so
it has the form $c_0a+\sum_{i=1}^sc_ia_i$ with $c_0\neq0$ and
$(c_1,\dots,c_s)\neq0$. Then
\[
a\in c_0^{-1}\inparen{\LowW{W}-\sum_ic_ia_i}\subseteq \cU_s+\LowW{W},
\qquad \cU_s := \spn(a_1,\dots,a_s),
\]
using the scaling invariance of $\LowW{W}$.

If the procedure reaches $s = T$, then $a_1,\dots,a_T$ are in
$W$-general position.
Otherwise it stops at some $s\le T-1$, and by the previous paragraph
every
$a\in\A\setminus\inset{a_1,\dots,a_s}$ lies in $\cU_s+\LowW{W}$. The chosen
vectors themselves lie in $\cU_s\subseteq \cU_s+\LowW{W}$ since
$0\in\LowW{W}$. Thus $\A\subseteq \cU_s+\LowW{W}$ with
$\dim \cU_s\le s\le T-1$.
\end{proof}

\subsection{Reducing the supports by a change of basis}\label{sec:support-reduction}
If the outcome of Lemma~\ref{lem:dichotomy} leaves all the forms in
$\cU+\LowW{W}$ for a subspace $\cU$ of bounded dimension, then a
positive fraction of them share a common $\cU$-part. Changing basis along
one of those forms leaves the entropy of $X$ unchanged, makes that form a
coordinate, and reduces the supports of the others to at most $2W-1$.
Section~\ref{sec:endgame} extracts a sunflower from such a family. The proofs of Lemma~\ref{lem:basis-change} and Proposition~\ref{prop:support-reduction} are in Appendix~\ref{app:deferred}.

\begin{lemma}\label{lem:basis-change}
Let $\Lambda = \inset{e_i}_{i\in[k]}\cup\inset{a_j}_{j\in J}$ be a
configuration (Definition~\ref{def:config}) on $X\in\F_q^k$. Let
$a\in\Lambda$ and $i_0\in\supp(a)$. Then
$\cB := \inset{a}\cup\inset{e_i}_{i\neq i_0}$ is a basis of $\F_q^k$. Let
$\widetilde X\in\F_q^{\widetilde I}$ be the coordinate vector of $X$ in
the dual basis, indexed by
$\widetilde I := \inset{0'}\sqcup\inparen{[k]\setminus\inset{i_0}}$ with
$0'$ indexing the coordinate $\ip aX$, and let
$\inset{\tilde e_\iota}_{\iota\in\widetilde I}$ be the standard basis of
$\F_q^{\widetilde I}$, so that
\[
\widetilde X_{0'} = \ip aX, \qquad \widetilde X_i = X_i\ (i\neq i_0).
\]
Then:
\begin{enumerate}
\item $H_q(\widetilde X) = H_q(X)$, and all $k$ coordinates of
$\widetilde X$ are $\rho$-sparse;
\item for any $J'\subseteq J$ with $a\notin\inset{a_j}_{j\in J'}$, the
family $\inset{\tilde e_\iota}_{\iota\in\widetilde I}\cup
\inset{a_j}_{j\in J'}$ is again a configuration on $\widetilde X$ (the
$a_j$ being re-expressed in the new coordinates);
\item if $\mu\in\F_q^k$ satisfies $\mu = c\,a+d$ with $c\in\F_q$ and
$\supp d\subseteq[k]\setminus\inset{i_0}$, then in the new coordinates
$\mu = c\,\tilde e_{0'}+\sum_{i\neq i_0}d_i\tilde e_i$, equivalently
$\ip{\mu}{X} = c\,\widetilde X_{0'}+\sum_{i\neq i_0}d_i\widetilde X_i$.
In particular, writing $\supp_{\mathrm{new}}$ for support with respect
to $\cB$,
\[
\supp_{\mathrm{new}}(\mu)\subseteq\inset{0'}\cup\supp(d).
\]
\end{enumerate}
\end{lemma}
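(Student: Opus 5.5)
The plan is to treat the statement as a change of basis in $\F_q^k$, viewed on the dual (functional) side. Everything reduces to one observation: the coordinates of $X$ with respect to a basis $\cB$ of functionals are just the values $\ip bX$ for $b\in\cB$.

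First, $\cB$ is a basis. Write the $k$ vectors of $\cB$ in the standard basis. The vectors $e_i$, $i\neq i_0$, span the coordinate hyperplane $\inset{x\suchthat x_{i_0}=0}$, and $a$ has $a_{i_0}\neq0$ because $i_0\in\supp a$. So $a$ lies outside that hyperplane, and the $k$ vectors are linearly independent.

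Define $\widetilde X$ by $\widetilde X_{0'}=\ip aX$ and $\widetilde X_i=X_i$ for $i\neq i_0$. Then $X\mapsto\widetilde X$ is the linear map whose matrix has rows $a$ and $e_i$ ($i\neq i_0$), which is invertible. Hence $\widetilde X$ and $X$ determine each other and $H_q(\widetilde X)=H_q(X)$.

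For sparsity of $\widetilde X$, each $X_i$ with $i\neq i_0$ is $\rho$-sparse by condition~\ref{item:config-sparse}. The new coordinate $\ip aX$ is also $\rho$-sparse, since $a\in\Lambda$: it is either some $e_i$, in which case $i_0=i$ and $\ip aX=X_{i_0}$, or some $a_j$, whose evaluation is $\rho$-sparse by hypothesis. This proves part~(1).

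For part~(3), the key identity is that any functional $\mu$ expands in $\cB$, and since $\ip{\cdot}{X}$ is linear in the functional, the expansion coefficients become coefficients on $\widetilde X$. Concretely, if $\mu=c\,a+d$ with $d_{i_0}=0$, then
\[
\ip\mu X=c\ip aX+\sum_{i\neq i_0}d_iX_i=c\,\widetilde X_{0'}+\sum_{i\neq i_0}d_i\widetilde X_i,
\]
so the new coordinate vector of $\mu$ is $c\,\tilde e_{0'}+\sum_{i\neq i_0}d_i\tilde e_i$. Its support lies in $\inset{0'}\cup\supp d$, and it equals that set exactly when $c\neq0$.

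For part~(2), let $\Phi$ denote re-expression in the new coordinates, $\mu\mapsto$ its $\cB$-coordinate vector, so that $\ip\mu X=\ip{\Phi\mu}{\widetilde X}$. The map $\Phi$ is a linear isomorphism $\F_q^k\to\F_q^{\widetilde I}$. Then:
\begin{itemize}
\item $\Phi$ sends $a$ to $\tilde e_{0'}$ and $e_i$ to $\tilde e_i$ for $i\neq i_0$.
\item Sparsity carries over: for $j\in J'$, $\ip{\Phi a_j}{\widetilde X}=\ip{a_j}X$ is still $\rho$-sparse, and the coordinates of $\widetilde X$ are $\rho$-sparse by part~(1).
\item Nonzeroness and pairwise non-proportionality are preserved by a linear isomorphism. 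So the vectors in $\Phi\bigl(\inset{a}\cup\inset{e_i}_{i\neq i_0}\cup\inset{a_j}_{j\in J'}\bigr)$ are nonzero and pairwise non-proportional, since this set is a subset of $\Lambda$.
\end{itemize}

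The one point needing care is that the new family differs from the old in two ways.
\begin{itemize}
\item $e_{i_0}$ has been dropped from the coordinate set. This is harmless, because removing members of a family keeps it pairwise non-proportional.
\item The new coordinate vector $\tilde e_{0'}$ must not coincide with or be proportional to any retained form $\Phi a_j$. That would require $a_j\parallel a$. Both lie in $\Lambda$, and $a\notin\inset{a_j}_{j\in J'}$, so this is ruled out by non-proportionality in $\Lambda$.
\end{itemize}

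The index count also works: the new configuration has dimension $k$ with $|J'|$ forms. Assuming $k\ge2$ as in Definition~\ref{def:config}, the family $\inset{\tilde e_\iota}\cup\inset{\Phi a_j}_{j\in J'}$ satisfies both conditions of the definition, which completes part~(2).
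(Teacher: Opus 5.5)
Your proof is correct and follows the paper's argument essentially step for step. The shared steps are these: $a_{i_0}\neq0$ places $a$ outside $\spn\inset{e_i}_{i\neq i_0}$, so $X\mapsto\widetilde X$ is a linear bijection preserving entropy; the new coordinate functionals are members of $\Lambda$ and hence $\rho$-sparse; part~(3) is the expansion of $\mu$ in $\cB$; and part~(2) holds because the retained vectors are distinct members of $\Lambda$ (using $a\notin\inset{a_j}_{j\in J'}$), and non-proportionality is invariant under the linear isomorphism $\Phi$.
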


\begin{proposition}[Support reduction]\label{prop:support-reduction}
Let $W\ge2$, let $\Lambda = \inset{e_i}_{i\in[k]}\cup\inset{a_j}_{j\in J}$
be a configuration on $X\in\F_q^k$ with $|J| = N\ge1$, and suppose
\[
\inset{a_j}_{j\in J}\subseteq \cU+\LowW{W}
\qquad\text{for a subspace } \cU\le\F_q^k,\ \dim \cU = \sigma.
\]
Then, after a change of basis of $\F_q^k$ which replaces $X$ by a
random vector $\widetilde X$ with $H_q(\widetilde X) = H_q(X)$ and
produces again a configuration (Lemma~\ref{lem:basis-change}), there is a
subfamily $J_0$ of forms with
\[
|J_0|\;\ge\;\frac{N}{q^{\sigma+W}}-1 \qquad\text{and}\qquad
|\supp_{\mathrm{new}} a_j|\;\le\;2W-1 \quad (j\in J_0),
\]
all $\ip{a_j}{\widetilde X}$ ($j\in J_0$) still $\rho$-sparse, and all
$k$ coordinates of $\widetilde X$ still $\rho$-sparse.
\end{proposition}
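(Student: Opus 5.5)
The plan is to pigeonhole twice and then apply Lemma~\ref{lem:basis-change} once. First I would find many forms that share the same $\cU$-part. Then I would fix a single pivot coordinate at which all of them agree, and change basis along one of them.

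\emph{Step 1 (common $\cU$-part).} For each $j\in J$ fix a decomposition $a_j = u_j+\ell_j$ with $u_j\in\cU$ and $|\supp\ell_j|<W$. There are $q^\sigma$ possible values of $u_j$. By pigeonhole some $u\in\cU$ has $u_j = u$ for at least $N/q^\sigma$ indices $j$; call this set $J_1$. For $j,j'\in J_1$ we then have $a_j-a_{j'} = \ell_j-\ell_{j'}$, whose support has size at most $2W-2$. The $\cU$-part has cancelled.

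\emph{Step 2 (a common pivot coordinate).} To change basis along a form $a^*$ with pivot $i_0\in\supp a^*$ and still use Lemma~\ref{lem:basis-change}(3) in the form $a_j = 1\cdot a^*+d_j$, I need $d_j := a_j-a^*$ to vanish at $i_0$. In other words, every form kept must agree with $a^*$ at $i_0$. I would obtain this by a second pigeonhole over a bounded window of coordinates, which is where the remaining factor $q^{W}$ is spent.

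Concretely, fix some $j_1\in J_1$ and let $S := \supp a_{j_1}\cap\inparen{\supp u\cup\supp\ell_{j_1}}$. Since every $a_j$ with $j\in J_1$ equals $u+\ell_j$, I would pigeonhole the $a_j$, $j\in J_1$, by their restriction to a set of fewer than $W$ coordinates chosen to contain a point of $\supp a_{j_1}$. This gives a class $J_2$ of size at least $|J_1|/q^{W}$ on which all the $a_j$ agree on that set. The aim is to have, inside $J_2$, a pivot form $a^*$ and a coordinate $i_0$ with $a^*_{i0}\neq0$ and $(a_j)_{i_0} = a^*_{i_0}$ for all $j\in J_2$.

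If the common value at the chosen coordinate turns out to be $0$, I would pick $i_0$ elsewhere. It should be possible to choose it in $\supp u$ if $u\neq0$, and otherwise in $\supp\ell_{a^*}$ after restricting the pigeonhole to that support. I expect this choice of $i_0$, with a common nonzero value, to be the main obstacle. The danger is that a careless choice either loses more than a factor $q^W$, or leaves a form whose $i_0$-coefficient differs from $a^*_{i_0}$. Such a form would acquire, through $e_{i_0}=\inparen{a^*-\sum_{i\ne i_0}a^*_ie_i}/a^*_{i_0}$, a new support as large as $\supp a^*$.

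\emph{Step 3 (basis change and bookkeeping).} Apply Lemma~\ref{lem:basis-change} with $a = a^*$ and the chosen $i_0$, and set $J_0 := J_2\setminus\inset{j^*}$, removing the pivot form itself; this is the $-1$ in the bound. For $j\in J_0$ write $a_j = a^*+d_j$. Here $d_j$ vanishes at $i_0$ and $|\supp d_j|\le 2W-2$, so part~(3) gives $|\supp_{\mathrm{new}}a_j|\le 1+2W-2 = 2W-1$. Parts~(1) and~(2) of the lemma give $H_q(\widetilde X) = H_q(X)$, the $\rho$-sparsity of all $k$ new coordinates, and that the result is again a configuration. The values $\ip{a_j}{\widetilde X}$ are the same random variables as before, so they remain $\rho$-sparse. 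The count is $|J_0|\ge |J_1|/q^{W}-1\ge N/q^{\sigma+W}-1$.
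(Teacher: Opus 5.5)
Your Steps~1 and~3 match the paper's argument. But Step~2, which you flag as the main obstacle, is where the proof actually happens, and as written it is not closed.

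The largest class $J_2$ of your second pigeonhole need not contain $j_1$, so nothing makes the common value at the chosen point of $\supp a_{j_1}$ nonzero. Nothing in the pigeonhole rules out, say, $\ell_j=-u$ on the whole chosen set for every $j\in J_2$, in which case all these forms vanish there. The fallback of picking $i_0$ elsewhere has no support either. The forms of $J_2$ are only known to agree on the pigeonholed set, and a coordinate where they may disagree with the pivot is exactly the failure you describe.

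The missing ideas are a case split on $|\supp u|$, and a pigeonhole set of \emph{exactly} $W$ coordinates inside $\supp u$, with the pivot chosen only afterwards.

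If $|\supp u|\le W-1$, no change of basis is needed. Every $j\in J_1$ has $|\supp a_j|\le|\supp u|+|\supp\ell_j|\le 2W-2$, so $J_0:=J_1$ works; this also covers $u=0$.

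If $|\supp u|\ge W$, fix $A_1\subseteq\supp u$ with $|A_1|=W$. Pigeonhole the restrictions $\ell_j|_{A_1}$ over $q^W$ values to get $J_2$ with $|J_2|\ge|J_1|/q^W$. Now take any $j^*\in J_2$.
\begin{itemize}
\item Since $|\supp\ell_{j^*}|\le W-1<|A_1|$, some $i_0\in A_1$ has $(\ell_{j^*})_{i_0}=0$, hence $(a_{j^*})_{i_0}=u_{i_0}\neq0$.
\item Every $j\in J_2$ agrees with $j^*$ at $i_0$, because the $\ell_j$ coincide on $A_1$.
\item Thus $d_j=\ell_j-\ell_{j^*}$ vanishes at $i_0$ and has support at most $2W-2$.
\end{itemize}
Your Step~3 then applies verbatim with $a^*=a_{j^*}$ and $J_0:=J_2\setminus\inset{j^*}$.

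The point of taking $|A_1|=W$ is that no single $\ell_j$ can cover $A_1$. So each form individually equals $u$, and is therefore nonzero, somewhere on $A_1$, and the pivot coordinate can be read off from the pivot form itself.
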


\subsection{Extracting a sunflower}\label{sec:endgame}
Support reduction leaves a large subfamily of forms with supports of
size at most $\omega$. Any sufficiently large family of small sets
contains a large \emph{sunflower} \cite{ER60}, whose core and petals are
the common core and the disjoint parts required by
Proposition~\ref{prop:common-core}. In
Remark~\ref{rem:residual} we exhibit configurations with
$H_q(X)>(k-1)\,h$ when $q\ge3$, which is part~(3) of Theorem~\ref{thm:tool}.

\begin{fact}[Sunflower lemma, \cite{ER60}]\label{fact:sunflower}
Let $T\ge2$, $\omega\ge0$, and let $\cF$ be a family of distinct sets
(the empty set is permitted), each of size at most $\omega$. If
$|\cF|>\omega!\,(T-1)^\omega$ then $\cF$ contains a
\emph{$T$-sunflower}: sets $S_1,\dots,S_T$ and a core $Y$ with
$S_i\cap S_j = Y$ for all $i\neq j$.
\end{fact}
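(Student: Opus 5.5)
The statement is the Erd\H{o}s--Rado lemma, and I would prove it by induction on $\omega$. For $\omega=0$ the only set of size at most $0$ is $\emptyset$. A family of distinct such sets therefore has at most $1<0!\,(T-1)^0+1$ members, so the hypothesis $|\cF|>1$ can never hold and the claim is vacuous.

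For the inductive step with $\omega\ge1$, choose a maximal subfamily $\mathcal D\subseteq\cF$ of pairwise disjoint sets.
\begin{itemize}
\item If $|\mathcal D|\ge T$, any $T$ members of $\mathcal D$ form a $T$-sunflower with empty core $Y=\emptyset$.
\item Otherwise $|\mathcal D|\le T-1$. Let $U$ be the union of the members of $\mathcal D$. By maximality every member of $\cF$ outside $\mathcal D$ meets $U$. Since $\emptyset$ is disjoint from everything, maximality also puts $\emptyset$ into $\mathcal D$ whenever $\emptyset\in\cF$, and then every nonempty member of $\cF$ meets $U$.
\end{itemize}

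The pigeonhole count in the second case needs care about $\emptyset$ to keep the inequality strict.
\begin{itemize}
\item If $\emptyset\notin\cF$, then $|U|\le(T-1)\omega$ and all $|\cF|$ sets meet $U$. Some $x\in U$ therefore lies in at least
\[
\frac{|\cF|}{(T-1)\omega}\;>\;(\omega-1)!\,(T-1)^{\omega-1}
\]
sets.
\item If $\emptyset\in\cF$, then $\mathcal D$ has at most $T-2$ nonempty members, so $|U|\le(T-2)\omega$. The $|\cF|-1$ nonempty sets all meet $U$, and some $x$ lies in at least $(|\cF|-1)/((T-2)\omega)$ of them. This is again strictly more than $(\omega-1)!\,(T-1)^{\omega-1}$, because $|\cF|-1\ge\omega!\,(T-1)^\omega$ and $T-2<T-1$. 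When $T=2$ this case cannot arise: then $\mathcal D=\{\emptyset\}$, and any second set of $\cF$ could be added to $\mathcal D$, contradicting maximality.
\end{itemize}

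Now take the sets of $\cF$ that contain $x$ and delete $x$ from each. The results are still distinct, have size at most $\omega-1$, and there are more than $(\omega-1)!\,(T-1)^{\omega-1}$ of them. The inductive hypothesis gives a $T$-sunflower $S_1',\dots,S_T'$ with core $Y'$. Putting $x$ back, the sets $S_i=S_i'\cup\{x\}$ satisfy $S_i\cap S_j=Y'\cup\{x\}$ for $i\neq j$, which is a $T$-sunflower in $\cF$ with core $Y'\cup\{x\}$.

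The only step needing care is this boundary bookkeeping: distinctness is what makes the base case vacuous, and the possible empty set must be handled separately so that the pigeonhole bound remains strict.
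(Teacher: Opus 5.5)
Your proof is correct and is the standard Erd\H{o}s--Rado induction (maximal pairwise disjoint subfamily, then pigeonhole on its union and recursion on a popular element), which is exactly the argument the paper invokes by citing Jukna rather than writing it out. Your separate treatment of $\emptyset\in\cF$ (the $(T-2)\omega$ bound, and the observation that $T=2$ forces the disjoint case) is genuinely needed to keep the pigeonhole count strict in the ``size at most $\omega$, empty set permitted'' form the paper states.
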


The standard inductive proof gives this version (see e.g.\ \cite[Chapter~6]{Jukna11}).

For $L\ge1$ let
\begin{equation}\label{eq:target-count}
T(L) := 1+\max\inset{4\ceil{\frac{h}{\gamma^*}},\;
\ceil{4\log_{1/\lambda}\frac{(q-1)L}{h\ln q}}},
\end{equation}
which is of size $O_{q,\rho}(\log L)$.
The proof below uses the first entry,
$\frac{T(L)-1}{2}\ge2\ceil{\frac{h}{\gamma^*}}$, when at least half the
petals have size at least $2$, and the second,
$\frac{T(L)-1}{2}\ge2\log_{1/\lambda}\frac{(q-1)L}{h\ln q}$, when at
least half are singletons.

\begin{proposition}\label{prop:sunflower-bound}
Let $\Lambda = \inset{e_i}_{i\in[k]}\cup\inset{a_j}_{j\in J}$ be a
configuration on $X\in\F_q^k$, let $L\ge1$ and $\omega\ge0$ be integers, and suppose $J_0\subseteq J$ satisfies
\[
|\supp a_j|\;\le\;\omega\quad(j\in J_0)
\qquad\text{and}\qquad
|J_0|\;>\;(q-1)^\omega\,\omega!\,\bigl(T(L)-1\bigr)^\omega.
\]
Then $H_q(X)\;\le\;(k-1)\,h + \dfrac{h}{L}$.
\end{proposition}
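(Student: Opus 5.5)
Apply the sunflower lemma to the supports, turn the sunflower into the common-core hypothesis of Proposition~\ref{prop:common-core}, and split according to whether most petals have size $\ge2$ or size $1$.

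\emph{Extracting the sunflower.} Distinct forms may share a support, but a support $S$ of size $s\le\omega$ carries at most $(q-1)^{s}\le(q-1)^\omega$ vectors with support exactly $S$. No two forms in a configuration are proportional, and each $a_j$ has support $\ge2$ (Definition~\ref{def:config}). So the family $\cF$ of distinct supports $\inset{\supp a_j}_{j\in J_0}$ has $|\cF|\ge|J_0|/(q-1)^\omega>\omega!\,(T(L)-1)^\omega$. Fact~\ref{fact:sunflower} then yields a $T(L)$-sunflower $S_1,\dots,S_{T(L)}$ with core $Y$. Pick one form $a_t$ with each support $S_t$, and write $a_t=\mu_t+v_t$ with $\supp\mu_t\subseteq U:=Y$ and $P_t:=\supp v_t=S_t\setminus Y$. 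The petals are pairwise disjoint and disjoint from $U$. At most one petal is empty, since the $S_t$ are distinct; discard it, leaving $T(L)-1$ forms with $v_t\neq0$. Either at least $(T(L)-1)/2$ of them have $w_t\ge2$, or at least $(T(L)-1)/2$ have $w_t=1$.

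\emph{Case A: many petals of size $\ge2$.} Keep only those forms and apply Proposition~\ref{prop:common-core}\ref{item:core-large}. This gives $H_q(X)\le kh-T_{\ge2}\gamma^*$ with $T_{\ge2}\ge2\ceil{h/\gamma^*}$, hence $H_q(X)\le kh-2h\le(k-1)h$, which is stronger than required.

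\emph{Case B: many singleton petals.} This is the main obstacle, because part~\ref{item:core-single} needs a single coordinate $i^*\in U$ with $(\mu_t)_{i^*}\neq0$ for every retained singleton form. First, $\mu_t\neq0$: otherwise $a_t=v_t$ has support $1$, contradicting $w_j\ge2$. So $U\neq\emptyset$, and since $S_t=Y\sqcup P_t$ with $\supp\mu_t\subseteq Y$ and $|S_t|=|Y|+1$, we get $\supp\mu_t=Y$ exactly. Hence any $i^*\in Y$ works for all singleton forms at once. Keep just the singleton forms (so $T_{\ge2}=0$) and apply Proposition~\ref{prop:common-core}\ref{item:core-single}:
\[
H_q(X)\le(k-1)h+\log_q\inparen{1+(q-1)\lambda^{T_1}}\le(k-1)h+\frac{(q-1)\lambda^{T_1}}{\ln q}.
\]
Since $T_1\ge(T(L)-1)/2\ge2\log_{1/\lambda}\frac{(q-1)L}{h\ln q}\ge\log_{1/\lambda}\frac{(q-1)L}{h\ln q}$, we have $\lambda^{T_1}\le\frac{h\ln q}{(q-1)L}$, and the error term is at most $h/L$.

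Combining the two cases proves the bound. The delicate points are the counting of distinct supports, which must account for multiplicity under proportionality, and the observation that in Case B every $\mu_t$ has support exactly the core, which is what supplies a common $i^*$.
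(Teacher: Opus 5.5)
Your proposal is correct and follows the paper's proof essentially step for step. You bound the multiplicity of each support by $(q-1)^\omega$ to count distinct supports, extract a $T(L)$-sunflower, and discard the possibly empty petal. You then use Proposition~\ref{prop:common-core}\ref{item:core-large} when half the petals have size $\ge2$, and part~\ref{item:core-single} with any $i^*$ in the core (justified by $\mu_t\neq0$ and $\supp\mu_t=Y$) when half are singletons. The only cosmetic difference is that you use just $T_1\ge\log_{1/\lambda}\frac{(q-1)L}{h\ln q}$ to bound the error by $h/L$ directly, whereas the paper uses twice that to get the smaller $\frac{h^2\ln q}{(q-1)L^2}$ before comparing with $h/L$.
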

\begin{proof}
A form with support
$S$ has a nonzero entry at each coordinate of $S$, so at most
$(q-1)^{|S|}\le(q-1)^\omega$ distinct forms share one support, and the
number of \emph{distinct supports} in $J_0$ exceeds
$\omega!\,(T(L)-1)^\omega$. By Fact~\ref{fact:sunflower} there is a
$T(L)$-sunflower of supports $S_1,\dots,S_{T(L)}$ with core $Y$. Let $P_t := S_t\setminus Y$ (the \emph{petals}). Since the $S_t$ are distinct, at most one $P_t$
is empty. We choose one form for each nonempty $P_t$ and decompose it as
$\mu_t+v_t$ with $\supp\mu_t\subseteq Y$ and
$\supp v_t = P_t\neq\emptyset$.

The pairwise intersections of the sunflower equal $Y$, so $Y\subseteq S_t$
for every member, and each $\mu_t$ has support exactly $S_t\cap Y = Y$. The $P_t$ are pairwise
disjoint and disjoint from $Y$, so Proposition~\ref{prop:common-core}
applies with core $U := Y$ and $w_t := |P_t|$. Of the at least $T(L)-1$
nonempty $P_t$, at least half have size $\ge2$, or at least half are
singletons.

If at least
$\frac{T(L)-1}2\ge2\ceil{\frac{h}{\gamma^*}}\ge\frac{2h}{\gamma^*}$ of the
nonempty $P_t$ have size at least $2$, Proposition~\ref{prop:common-core}\ref{item:core-large}
gives $H_q(X)\le k\,h-2h\le(k-1)h$.

Otherwise at least $\frac{T(L)-1}2$ of the nonempty $P_t$ are singletons. Let
$T_1$ be their number. For such a $t$ we have $\mu_t\neq0$, since $\mu_t = 0$ would make
$\ell_t$ a nonzero multiple of a coordinate, which is
excluded by
Definition~\ref{def:config}\eqref{item:config-nonprop}. In particular
$Y\neq\emptyset$, and any $i^*\in Y$ has $(\mu_t)_{i^*}\neq0$ for every such $t$. We keep
only these forms and apply
Proposition~\ref{prop:common-core}\ref{item:core-single}. Let
$g_L := \log_{1/\lambda}\frac{(q-1)L}{h\ln q}\ge0$, where the argument
of the logarithm is at least $\frac{q-1}{\ln q}>1$. The definition
\eqref{eq:target-count} gives $T_1\ge\frac{T(L)-1}2\ge2g_L$, and thus
\[
H_q(X)\;\le\;(k-1)h+\frac{q-1}{\ln q}\,\lambda^{2g_L}
\;=\;(k-1)h+\frac{h^2\ln q}{(q-1)L^2}
\;\le\;(k-1)h+\frac{h}{L},
\]
since $h\ln q\le(q-1)L$.
\end{proof}

\begin{remark}\label{rem:residual}
The bound $H_q(X)\le(k-1)h$ for configurations is false for
$q\ge3$, so the exponentially small term for singleton petals cannot be removed. Indeed, take $k\ge2$, coordinates indexed
$0,1,\dots,k-1$, and
\[
\Lambda = \inset{e_i}_{0\le i\le k-1}\,\cup\,
\inset{c\,e_0+(1-c)\,e_j\suchthat 1\le j\le k-1,\
c\in\F_q\setminus\inset{0,1}},
\]
whose vectors are pairwise non-proportional. Its forms all have core $\inset0$ and a singleton petal, so it is the configuration produced
when every $P_t$ is a singleton. Fix $p\in(0,\rho/2]$,
let $\rho' := \frac{\rho-p}{1-p}$, and let $X_0 = 1$ with probability
$p$ and $0$ otherwise. Given $X_0 = 0$ let $X_1,\dots,X_{k-1}$ be
i.i.d.\ $\rho'$-sparse with uniform nonzero part, and given $X_0 = 1$
let them all vanish. Then $\PR{X_0\neq0} = p\le\rho$,
$\PR{X_j\neq0} = (1-p)\rho'\le\rho$, and
$\PR{cX_0+(1-c)X_j\neq0} = p+(1-p)\rho' = \rho$ (on $\inset{X_0 = 1}$
the form equals $c\neq0$), so every functional in $\Lambda$ is
$\rho$-sparse.
On the other hand,
\[
H_q(X) = h_q^{\mathrm{bin}}(p)+(1-p)(k-1)\,h_q(\rho')
\;\ge\;p\log_q\tfrac1p+(k-1)\inparen{h-C_0p},
\]
where $C_0 = C_0(q,\rho)$ bounds the absolute value of the
derivative in $p$ of $(1-p)\,h_q\inparen{\frac{\rho-p}{1-p}}$ on
$[0,\rho/2]$. We choose $p = q^{-C_0(k-1)-1}$, which is $\le\rho/2$
for all large enough $k$, and get $H_q(X)-(k-1)h\ge p>0$. At $q = 2$, however, the
coefficient set $\F_q\setminus\inset{0,1}$ is empty and no such
configuration exists. This is consistent with the bounds known there
\cite{LW18,GLMRSW}.
\end{remark}

\subsection{Proof of \texorpdfstring{Theorem~\ref{thm:tool}}{the tool theorem}}\label{sec:low}
For $1\le r^\circ\le L-1$ the configuration of
Lemma~\ref{lem:reduction} has $N = L-r^\circ\ge1$ forms, and by
Remark~\ref{rem:target} a saving of $\frac{hN}{L}$ below $k\,h$ suffices.
One form saves $\gamma^*$, which suffices while $hN\le\gamma^*L$, and beyond
that we use part~(2) of Theorem~\ref{thm:tool} at $\eta = h/L$. We choose the cutoffs $N_0$ and $L_{\mathrm{low}}$ in the statements below
by comparing the count of small-support families with no large
sunflower, which grows polylogarithmically in $L$, with the number of
forms, which grows linearly.

\ToolTheorem*

\begin{proof}
The hypotheses say precisely that
$\Lambda := \inset{e_i}_{i\in[k]}\cup\inset{a_j}_{j\in[N]}$ is a
configuration of dimension $k$ on $X$ (Definition~\ref{def:config}).
In parts~(1) and~(2) we have $N\ge1$, and then $k\ge2$, since a nonzero
$a_1$ proportional to no $e_i$ is impossible in $\F_q^1$. Part~(1) is then Corollary~\ref{cor:single-form}, and
part~(3) is Remark~\ref{rem:residual}, whose configuration has
$N = (q-2)(k-1)$ forms with $k$ arbitrarily large.

For part~(2), let $L_\eta := \ceil{h/\eta}\ge1$, so that
$h/L_\eta\le\eta$, and
let $N_0 = N_0(q,\rho,\eta)$ be the least integer with
\begin{equation}\label{eq:choice-N0}
\frac{N_0}{q^{T_{\mathrm{gp}}-1+W_{\mathrm{gp}}}}-1\;>\;
(q-1)^\omega\,\omega!\,\bigl(T(L_\eta)-1\bigr)^\omega ;
\end{equation}
since $T(L_\eta) = O_{q,\rho}(\log L_\eta)$ by \eqref{eq:target-count},
$N_0$ is polynomial in $\log(1/\eta)$ of degree $\omega$. Suppose $N\ge N_0$. In particular
$N\ge1$, so Lemma~\ref{lem:dichotomy} applies to
$\A := \inset{a_j}_{j\in[N]}$ with $T := T_{\mathrm{gp}}$ and $W := W_{\mathrm{gp}}$.

Suppose first that some $T_{\mathrm{gp}}$ of the $a_j$ are in $W_{\mathrm{gp}}$-general
position. Each has support $\ge2$, and each $\ip{a_j}{X}$ is
$\rho$-sparse. Proposition~\ref{prop:gp} gives
$H_q(X)\le(k-1)\,h\le(k-1)\,h+\eta$.

Otherwise there is $\cU\le\F_q^k$ with
$\sigma := \dim\cU\le T_{\mathrm{gp}}-1$ and
$\inset{a_j}_{j\in[N]}\subseteq \cU+\LowW{W_{\mathrm{gp}}}$ (Definition~\ref{def:low}).
Since $W_{\mathrm{gp}}\ge2$,
Proposition~\ref{prop:support-reduction} applies with $W := W_{\mathrm{gp}}$ and
gives, after a change of basis, a subfamily $J_0$ with
\[
|J_0|\;\ge\;\frac{N}{q^{\sigma+W_{\mathrm{gp}}}}-1
\;\ge\;\frac{N_0}{q^{T_{\mathrm{gp}}-1+W_{\mathrm{gp}}}}-1
\qquad\text{and}\qquad
|\supp a_j|\;\le\;2W_{\mathrm{gp}}-1 = \omega\quad(j\in J_0),
\]
the supports being taken in the new coordinates. By
\eqref{eq:choice-N0}, Proposition~\ref{prop:sunflower-bound} applies to $J_0$ and
gives $H_q(X)\le(k-1)\,h+h/L_\eta\le(k-1)\,h+\eta$.
\end{proof}

\begin{proposition}\label{prop:low}
Let $q$ be a prime power and $\rho\in\inparen{0,1-\frac1q}$, and let
$L_{\mathrm{low}} = L_{\mathrm{low}}(q,\rho)$ be the least integer such that every $L\ge L_{\mathrm{low}}$ satisfies
\begin{equation}\label{eq:choice-L0prime}
\frac{\gamma^*L}{h\,q^{T_{\mathrm{gp}}-1+W_{\mathrm{gp}}}}-1\;>\;
(q-1)^\omega\,\omega!\,\bigl(T(L)-1\bigr)^\omega ;
\end{equation}
it is finite because $T(L) = O_{q,\rho}(\log L)$ by \eqref{eq:target-count}, so the right-hand side is $O_{q,\rho}\inparen{(\log L)^\omega}$ while the left-hand side grows linearly in $L$. Then for every
$L\ge L_{\mathrm{low}}$ and every absolutely $\rho$-clustered type $\tau$
of length $m = L+1$ whose reduced rank satisfies $1\le r^\circ\le L-1$,
\[
\rat(\tau)\;\le\;h_q(\rho)\inparen{1+\frac1L}.
\]
\end{proposition}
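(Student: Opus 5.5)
We reduce to a configuration and split according to how many forms it has. Let $\tau$ be such a type and $r := r^\circ$. Lemma~\ref{lem:reduction} produces a configuration of dimension $k = r+1$ and length $m = L+1$, with $N = L-r\ge1$ forms. By Remark~\ref{rem:target} it suffices to prove
\[
H_q(X)\;\le\;k\,h-\frac{hN}{L}.
\]
We split on whether $hN\le\gamma^*L$.

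\emph{Case $hN\le\gamma^*L$.} Since $N\ge1$, there is a form $a_j$. Its support has size $w_j\ge2$ and $\ip{a_j}{X}$ is $\rho$-sparse. Corollary~\ref{cor:single-form} gives $H_q(X)\le k\,h-\gamma^*\le k\,h-\frac{hN}{L}$, which is what we need.

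\emph{Case $hN>\gamma^*L$.} Then $N>\gamma^*L/h$, and we replicate the proof of part~(2) of Theorem~\ref{thm:tool} at the specific value $L$ rather than at $L_\eta$.

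First apply Lemma~\ref{lem:dichotomy} to the forms with $T := T_{\mathrm{gp}}$ and $W := W_{\mathrm{gp}}$.

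If some $T_{\mathrm{gp}}$ of the forms are in $W_{\mathrm{gp}}$-general position, Proposition~\ref{prop:gp} gives $H_q(X)<(k-1)h$. This suffices, because $N\le L-1<L$ gives $\frac{hN}{L}\le h$, so $(k-1)h\le k\,h-\frac{hN}{L}$.

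Otherwise all forms lie in $\cU+\LowW{W_{\mathrm{gp}}}$ with $\dim\cU\le T_{\mathrm{gp}}-1$. Proposition~\ref{prop:support-reduction} then yields, after an entropy-preserving change of basis that still gives a configuration, a subfamily $J_0$ with supports of size at most $\omega = 2W_{\mathrm{gp}}-1$ and
\[
|J_0|\;\ge\;\frac{N}{q^{T_{\mathrm{gp}}-1+W_{\mathrm{gp}}}}-1\;>\;\frac{\gamma^*L}{h\,q^{T_{\mathrm{gp}}-1+W_{\mathrm{gp}}}}-1.
\]
Since $L\ge L_{\mathrm{low}}$, \eqref{eq:choice-L0prime} shows that this exceeds $(q-1)^\omega\omega!\,(T(L)-1)^\omega$. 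Proposition~\ref{prop:sunflower-bound}, applied with this same $L$, gives
\[
H_q(X)\le(k-1)h+\frac hL.
\]
It remains to check that $(k-1)h+\frac hL\le k\,h-\frac{hN}{L}$. This is equivalent to $\frac{h(N+1)}{L}\le h$, i.e.\ $N+1\le L$, which holds because $N = L-r\le L-1$ when $r\ge1$.

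\emph{Technical points.} The main place needing care is that the change of basis in Proposition~\ref{prop:support-reduction} preserves both $H_q(X)$ and the configuration hypotheses. That proposition and Lemma~\ref{lem:basis-change} guarantee this, so the final bound transfers back to the original $X$. We also need the strict inequality in \eqref{eq:choice-L0prime} to chain correctly with the non-strict bound on $|J_0|$. It does, since $N>\gamma^*L/h$ is strict.

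Finally, $\rat(\tau)\le H_q(X)/r$ by Lemma~\ref{lem:reduction}, and Remark~\ref{rem:target} converts the entropy bound into $\rat(\tau)\le h\inparen{1+\frac1L}$.
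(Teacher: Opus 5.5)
Your proof is correct and follows the paper's argument: the same reduction via Lemma~\ref{lem:reduction} and Remark~\ref{rem:target}, the same split on whether $hN\le\gamma^*L$, and the same closing inequality $N+1\le L$. The only difference is in the case $hN>\gamma^*L$. The paper cites Theorem~\ref{thm:tool}(2) at $\eta = h/L$ (so $L_\eta = L$) and verifies $N\ge N_0(q,\rho,h/L)$ using the monotonicity of \eqref{eq:choice-N0}. You instead unroll that proof (dichotomy, general position, support reduction, sunflower) directly at $L$, which avoids the detour through $N_0$.
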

\begin{proof}Deferred to Appendix~\ref{app:deferred}.\renewcommand{\qedsymbol}{}\end{proof}

\subsection{Proof of the upper bound on \texorpdfstring{$\Vabs$}{V}}\label{sec:vabs-upper}
\begin{proposition}\label{prop:vabs-upper}
Let $q$ be a prime power and $\rho\in\inparen{0,1-\frac1q}$. Then for
every integer
$L\ge\max\inset{q,\;L_{\mathrm{low}}(q,\rho)}$, where
$L_{\mathrm{low}}$ is the list-size cutoff of
Proposition~\ref{prop:low},
\[
\Vabs(q,L,\rho)\;\le\;h_q(\rho)\inparen{1+\frac 1L}.
\]
\end{proposition}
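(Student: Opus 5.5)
Since $\Vabs(q,L,\rho)$ is the supremum of $\rat(\tau)$ over absolutely $\rho$-clustered types $\tau$ of length $m=L+1$, we fix one such $\tau$, of rank $r$ and reduced rank $r^\circ$ (Definition~\ref{def:reduced-rank}), and show $\rat(\tau)\le h\inparen{1+\frac1L}$. We split on the value of $r^\circ$. The first task is to pin down its range. If $\onesvector\notin C$ then $r^\circ=r\le m-1=L$, except when $C=\F_q^m$; but $C=\F_q^m$ contains $\onesvector$, so that case falls under $\onesvector\in C$. If $\onesvector\in C$ then $r^\circ=r-1\le m-1=L$. Lemma~\ref{lem:reduce} gives $r\ge2$ in the latter case, so $r^\circ\ge1$ always.

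\emph{Case $r^\circ=L$.} Proposition~\ref{prop:augment} gives $\rat(\tau)\le h\inparen{1+\frac1{r^\circ}}=h\inparen{1+\frac1L}$ directly.

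\emph{Case $1\le r^\circ\le L-1$.} Here $L\ge L_{\mathrm{low}}$, so Proposition~\ref{prop:low} applies verbatim and gives the same bound.

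Taking the supremum over $\tau$ finishes the proof. I do not see where $L\ge q$ is needed; presumably it is a technical convenience in the deferred proof of Proposition~\ref{prop:low} or in the constants. I would include it to match the statement and check whether the edge case $r^\circ$ near $L$ needs it.

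The main obstacle is the boundary bookkeeping. One must confirm that $r^\circ\le L$ always holds. In particular, in the case $C=\F_q^m$ the reduced rank is $m-1=L$, so the case $r^\circ=L$ covers it, provided Lemma~\ref{lem:reduce} and Proposition~\ref{prop:augment} handle it. Every other step is a citation of an earlier result.
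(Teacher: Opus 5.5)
Your proof is correct and matches the paper's argument: you check $1\le r^\circ\le L$ in the same way, then split into $r^\circ=L$, handled by Proposition~\ref{prop:augment}, and $1\le r^\circ\le L-1$, handled by Proposition~\ref{prop:low}. The one difference is the lower end: the paper rules out $r^\circ=0$ using $L\ge q$ together with $m\le q^r$, whereas you use $r\ge1$ from the definition of a type and $r\ge2$ when $\onesvector\in C$ from Lemma~\ref{lem:reduce}, which avoids that hypothesis entirely, so $L\ge q$ is in fact redundant here rather than being used inside Proposition~\ref{prop:low} as you guessed.
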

\begin{proof}
Let $\tau$ be an absolutely $\rho$-clustered type of length $m = L+1$,
of rank $r$ and reduced rank $r^\circ$
(Definition~\ref{def:reduced-rank}). It suffices to show
$\rat(\tau)\le h\inparen{1+\frac1L}$ and then take the supremum over
$\tau$.

We first check $1\le r^\circ\le L$. If $r^\circ = 0$ then $r\le1$,
and a type of rank $r$ has $m\le q^r$
(the $m$ entries of $M$ are distinct elements of $\F_q^r$, since $\xi_i = \xi_j$ would force $u_i = u_j$ almost surely, against distinct columns), so
$L+1\le q$, which contradicts $L\ge q$. For the upper end, if $r\le m-1$
then $r^\circ\le r\le m-1$, and if $r = m$ then
$C = \F_q^m\ni\onesvector$ and $r^\circ = r-1 = m-1$. Hence
$r^\circ\le m-1 = L$ always.

\emph{Case 1 ($r^\circ = L$).} Proposition~\ref{prop:augment} gives
\[
\rat(\tau)\;\le\;h_q(\rho)\inparen{1+\frac1{r^\circ}}
\;=\;h_q(\rho)\inparen{1+\frac1L}.
\]

\emph{Case 2 ($1\le r^\circ\le L-1$).}
Proposition~\ref{prop:low} applies, because $L\ge L_{\mathrm{low}}(q,\rho)$,
and gives $\rat(\tau)\le h\inparen{1+\frac1L}$.

\end{proof}

\section{Proof of the lower bound on \texorpdfstring{$\Vabs$}{V}}\label{sec:lower}
We prove the matching lower bound on $\Vabs$, to within an exponentially
small gap, with a single explicit type: the distribution of
\cite[Definition~4.2]{GLMRSW}, an i.i.d.\ $\rho$-sparse vector plus a
uniform multiple of $\onesvector$. Its $\rat$ is bounded from below to
precision $\Theta(1/m^2)$ in \cite[Lemma~4.4]{GLMRSW}. Here we compute it exactly (Proposition~\ref{prop:sigma-min}) and show that it lies within $q^{-\Omega(m)}$ of $h_q(\rho)\inparen{1+\frac1{m-1}}$ (Lemma~\ref{lem:delta}). The proofs of Lemmas~\ref{lem:delta}, \ref{fact:submod} and~\ref{lem:scalar} are in Appendix~\ref{app:deferred}.

\subsection{The type \texorpdfstring{$\sigma_m$}{sigma\_m}}

\begin{definition}\label{def:sigma}
Let $m\ge2$, $r := m$ and $M := (e_1,\dots,e_m)$.
Let $u\sim\nu_0$ on $\F_q^m$ and let $\alpha$ be uniform on $\F_q$,
independent of $u$. Let $\sigma_m$ be the type $(M,\tilde\sigma_m)$
where $\tilde\sigma_m$ is the distribution of
\[
Y := u+\alpha\onesvector\ \in\F_q^m.
\]
\end{definition}

$\sigma_m$ is a type in the sense of Definition~\ref{def:type},
absolutely $\rho$-clustered with coupled center $\alpha$. The
$e_i$ are distinct and span $\F_q^m$, and the support of $\tilde\sigma_m$ is all of $\F_q^m$. Its $i$-th column is $Y_i = u_i+\alpha$, so
$\PR{Y_i\neq\alpha} = \PR{u_i\neq0} = \rho$, and
$\PR{Y_i\neq Y_j} = \PR{u_i\neq u_j}>0$ for $i\neq j$: its columns
are distinct.

\begin{remark}\label{rem:sigma-case1}

The support of $\tilde\sigma_m$ is all of $\F_q^m$, so $r = m$, $\onesvector\in C$ and $r^\circ = m-1 = L$. In Proposition~\ref{prop:vabs-upper} this is Case~1, with $k = m$ and $N = 0$: Proposition~\ref{prop:augment} alone gives $\rat(\sigma_m)\le h_q(\rho)\inparen{1+\frac1L}$, and Theorem~\ref{thm:tool} is not needed. The upper bound is thus elementary on $\sigma_m$, and Section~\ref{sec:upper} concerns only types with forms.
\end{remark}

\subsection{The gap \texorpdfstring{$\Delta_m$}{Delta\_m} is exponentially small}

\begin{lemma}\label{lem:delta}
Let $\Delta_m := H_q(u)-H_q\inparen{u\bmod\F_q\onesvector}$. Then
\[
0\;\le\;\Delta_m\;\le\;C_\lambda(q,\rho)\,m\,\lambda^m,
\qquad
C_\lambda(q,\rho) := (q-1)\inparen{\log_q\tfrac{q-1}{\rho}+\log_qe+\log_q(q-1)},
\]
and
\begin{equation}\label{eq:sigma-quotient}
\frac{H_q\inparen{\tilde\sigma_m\bmod\F_q\onesvector}}{m-1}
\;=\;h_q(\rho)\inparen{1+\frac1{m-1}}-\frac{\Delta_m}{m-1}.
\end{equation}
\end{lemma}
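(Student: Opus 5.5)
The lemma has three parts: the identity \eqref{eq:sigma-quotient}, the sign $\Delta_m\ge0$, and the exponential upper bound. I would prove them in that order.

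\emph{The identity.} Since $Y = u+\alpha\onesvector$, the class of $Y$ modulo $\F_q\onesvector$ coincides with the class of $u$. Hence $H_q(\tilde\sigma_m\bmod\F_q\onesvector) = H_q(u\bmod\F_q\onesvector)$. As $u\sim\nu_0$ on $\F_q^m$ we have $H_q(u) = m\,h_q(\rho)$, so
\[
H_q(u\bmod\F_q\onesvector) = m\,h - \Delta_m = (m-1)h + h - \Delta_m .
\]
Dividing by $m-1$ gives \eqref{eq:sigma-quotient}.

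\emph{Nonnegativity.} The chain rule, exactly as in \eqref{eq:chain-ones}, gives
\[
\Delta_m = H_q\inparen{u\mid u\bmod\F_q\onesvector}\ge0 .
\]

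\emph{The upper bound.} This is the main step. Write $\Delta_m$ as the conditional entropy of the position of $u$ within its coset $\inset{u+c\onesvector\suchthat c\in\F_q}$. Given the coset, the posterior probability of the shift $c$ is proportional to $\nu_0(u+c\onesvector)$. So
\[
\Delta_m = \EE\inbrak{H_q\inparen{\text{posterior}}},
\]
where the posterior is concentrated on the true shift $c=0$. I would bound the entropy of a distribution with top mass $1-\epsilon$ in two steps. First, its entropy is at most $h^{\mathrm{bin}}_q(\epsilon)+\epsilon\log_q(q-1)$. Second, bounding further, it is at most
\[
\epsilon\inparen{\log_q\tfrac1\epsilon+\log_q e+\log_q(q-1)} .
\]
Here $\epsilon = \sum_{c\neq0}\nu_0(u+c\onesvector)/\sum_{c}\nu_0(u+c\onesvector)\le\sum_{c\ne0}\nu_0(u+c\onesvector)/\nu_0(u)$. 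The key estimate is
\[
\EE\,\epsilon\;\le\;\sum_{c\neq0}\EE\min\inparen{1,\tfrac{\nu_0(u+c\onesvector)}{\nu_0(u)}}\;\le\;\sum_{c\ne0}\EE\sqrt{\tfrac{\nu_0(u+c\onesvector)}{\nu_0(u)}} = (q-1)\lambda^m,
\]
using $\min(1,t)\le\sqrt t$ and the Bhattacharyya identity \eqref{eq:bhatt} of Lemma~\ref{lem:lambda}.

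The remaining obstacle is the $\epsilon\log(1/\epsilon)$ term. Concavity (Jensen) alone would give a factor $\log(1/\lambda^m)\sim m$, which produces the $m$ in the bound. More precisely, I would use pointwise a bound on $\log_q(1/\epsilon)$ in terms of the likelihood ratios. Each nonzero ratio $\nu_0(u+c\onesvector)/\nu_0(u)$ is at least $(\rho/((q-1)(1-\rho)))^m$ in the worst case, and at least $(\rho/(q-1))^m$ after clipping. This gives $\log_q(1/\epsilon)\le m\log_q\frac{q-1}{\rho}$ on the event $\epsilon>0$. Combining, and using $m\ge1$ to absorb the other constants, yields
\[
\Delta_m\le(q-1)\lambda^m\inparen{m\log_q\tfrac{q-1}{\rho}+\log_qe+\log_q(q-1)}\le C_\lambda\,m\,\lambda^m .
\]
I expect the bookkeeping of this last step, namely which cruder bound on $\log(1/\epsilon)$ matches the stated $C_\lambda$, to be the only delicate point.
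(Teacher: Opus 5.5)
Your proposal is correct and follows the paper's proof step for step: conditional entropy within cosets of $\F_q\onesvector$, the grouping bound $h_q^{\mathrm{bin}}(\epsilon)+\epsilon\log_q(q-1)$ with $h_q^{\mathrm{bin}}(\epsilon)\le\epsilon\log_q\frac1\epsilon+\epsilon\log_q e$, the step $\min\le\sqrt{\cdot}$ feeding into \eqref{eq:bhatt}, and the pointwise bound $\log_q\frac1\epsilon\le m\log_q\frac{q-1}{\rho}$. That last bound is most cleanly justified as the paper does it: the off-mass is at least one atom, of mass $\ge(\rho/(q-1))^m$, and the coset mass is at most $1$; your ``after clipping'' phrasing does not make this clear. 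The only real difference is that you take $\epsilon$ to be the posterior mass off the realized $u$ and average over $u\sim\nu_0$, whereas the paper takes the mass off the heaviest representative of each coset. The grouping bound holds for any atom, not only the top one as your wording suggests, so both versions give $\EE\,\epsilon\le(q-1)\lambda^m$ and the same constant $C_\lambda$.
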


\subsection{Equidistribution of entropy across quotients}

Throughout this subsection $X$ is a random variable taking values in a
finite-dimensional $\F_q$-vector space and
\[
g(V) := H_q\inparen{X\bmod V}.
\]

\begin{lemma}\label{fact:submod}
For any $X$, the function $g$ is non-increasing and submodular:
\[
g(V_1\cap V_2)+g(V_1+V_2)\;\le\;g(V_1)+g(V_2).
\]
\end{lemma}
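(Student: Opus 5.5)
We prove both properties by writing each value of $g$ as a joint entropy of $X$ with a linear image of $X$, and then use standard facts about joint entropies. Fix a subspace $V$ and let $\pi_V$ denote the quotient map. Since $X\bmod V$ is a deterministic function of $X$, we have $g(V)=H_q(\pi_V(X))$.

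\emph{Monotonicity.} Let $V_1\le V_2$. The map $\pi_{V_2}$ factors through $\pi_{V_1}$, so $X\bmod V_2$ is a function of $X\bmod V_1$. Entropy does not increase under a deterministic map, hence $g(V_2)\le g(V_1)$.

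\emph{Submodularity.} Set $A:=X\bmod V_1$ and $B:=X\bmod V_2$. We establish two facts.
\begin{enumerate}
\item The pair $(A,B)$ determines $X\bmod(V_1\cap V_2)$, and conversely. The converse is monotonicity again. For the forward direction, consider the linear map $\F_q^k\to \F_q^k/V_1\times\F_q^k/V_2$. Its kernel is $V_1\cap V_2$, so it induces an injection of $\F_q^k/(V_1\cap V_2)$. Therefore $H_q(A,B)=g(V_1\cap V_2)$.
\item The variable $C:=X\bmod(V_1+V_2)$ is a function of $A$ and also a function of $B$, by monotonicity. Hence $g(V_1+V_2)=H_q(C)\le I(A;B)$. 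This uses the fact that any common function $C$ of $A$ and $B$ satisfies $H(C)=I(C;C)\le I(A;B)$, by data processing applied on both sides.
\end{enumerate}
Combining the two, and writing the mutual information as $I(A;B)=H(A)+H(B)-H(A,B)$, we get
\[
g(V_1+V_2)\le g(V_1)+g(V_2)-g(V_1\cap V_2),
\]
which is the claimed inequality.

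The only step that needs any care is the injectivity claim in the first fact. It reduces to the identity $\Ker\pi_{V_1}\cap\Ker\pi_{V_2}=V_1\cap V_2$, so no real obstacle arises. One could instead argue with the chain rule $H(A)+H(B)\ge H(A,B)+H(C)$; this is the same inequality, stated as $H(C)\le I(A;B)$ because $C$ is determined by each of $A$ and $B$ separately.
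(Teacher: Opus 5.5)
Your proof is correct and follows the paper's argument. You identify $H_q(X\bmod V_1,X\bmod V_2)$ with $g(V_1\cap V_2)$ through the kernel $V_1\cap V_2$, and you use that $X\bmod(V_1+V_2)$ is a function of each quotient separately. The only difference is cosmetic: you phrase the last step as $H(C)\le I(A;B)$ by data processing, while the paper writes $H(X_2\mid X_1)\le H(X_2\mid X_{12})=H(X_2)-H(X_{12})$, which is the same inequality.
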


Let $\cW := \F_q^m/\F_q\onesvector$, of dimension $m-1$, and
$\bar u := u\bmod\F_q\onesvector$ for $u\sim\nu_0$. Let
\[
\Gamma_m := H_q(\bar u) = m\,h_q(\rho)-\Delta_m,
\qquad
Q_m := \frac{\Gamma_m}{m-1},
\qquad
\vphi(\bar V) := \Gamma_m-H_q\inparen{\bar u\bmod\bar V}
\quad(\bar V\le\cW).
\]
By Lemma~\ref{fact:submod}, $\vphi$ is non-decreasing and supermodular,
with $\vphi(0) = 0$ and $\vphi(\cW) = \Gamma_m = (m-1)Q_m$. Also
$Q_m\le1$, since $\bar u$ takes values in a set of size $q^{m-1}$. The
distribution of $u$ is invariant under the coordinate permutations
$S_m$, which fix $\onesvector$ and hence act on $\cW$, so $\vphi$ is
$S_m$-invariant. Let $\cY\le\cW$ be the image of
$\onesvector^\perp = \inset{x\suchthat\sum_jx_j = 0}$. Since
$\onesvector\in\onesvector^\perp$ exactly when
$\operatorname{char}\F_q$ divides $m$,
\[
\dim\cY = \begin{cases}
m-1, & \operatorname{char}\F_q\nmid m\quad(\text{so }\cY = \cW),\\
m-2, & \operatorname{char}\F_q\mid m.
\end{cases}
\]

\begin{lemma}\label{lem:orbit}
For every nonzero $\bar V\le\cW$, the span
$Z := \sum_{\pi\in S_m}\pi\bar V$ of its $S_m$-orbit contains $\cY$.
Consequently $Z\in\inset{\cY,\cW}$.
\end{lemma}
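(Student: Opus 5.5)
We work in $\F_q^m$ and pull back. Let $V\le\F_q^m$ be the preimage of $\bar V$, so $V\supsetneq\F_q\onesvector$, and pick $x\in V\setminus\F_q\onesvector$. Set $\tilde Z := \sum_{\pi\in S_m}\pi V$, which contains $\onesvector$ and is the preimage of $Z$. It suffices to show $\onesvector^\perp\subseteq\tilde Z$. Then $Z\supseteq\cY$, and since $\dim\cW-\dim\cY\le1$, any subspace between $\cY$ and $\cW$ is one of the two, which gives the consequence.

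To show $\onesvector^\perp\subseteq\tilde Z$, we produce differences of standard basis vectors. Since $x\notin\F_q\onesvector$, $x$ is not constant, so there are indices $i\ne j$ with $x_i\ne x_j$. Let $\pi$ be the transposition $(i\,j)$. Then
\[
x-\pi x=(x_i-x_j)(e_i-e_j),
\]
and this vector lies in $\tilde Z$. Dividing by the nonzero scalar $x_i-x_j$ gives $e_i-e_j\in\tilde Z$. Applying further permutations, all $e_a-e_b$ lie in $\tilde Z$, since $S_m$ acts transitively on ordered pairs of distinct indices and $\tilde Z$ is $S_m$-invariant.

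The vectors $e_a-e_b$ span $\onesvector^\perp=\inset{y\suchthat\sum_j y_j=0}$, because each such $y$ equals $\sum_{a<m}y_a(e_a-e_m)$. Hence $\onesvector^\perp\subseteq\tilde Z$, and taking images modulo $\F_q\onesvector$ gives $\cY\subseteq Z$.

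The argument has no real obstacle. The only points needing care are the lift from $\cW$ to $\F_q^m$, and the fact that the orbit span is taken in $\cW$ while we computed in $\F_q^m$; this is fine because $\onesvector$ is fixed by $S_m$, so the quotient map is equivariant.
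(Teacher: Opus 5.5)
Your proof is correct and follows the paper's argument: lift to $\F_q^m$, apply a transposition to a non-constant lift to extract $e_i-e_j$, spread this by transitivity to all of $\onesvector^\perp$, and finish with the codimension-at-most-one gap between $\cY$ and $\cW$. The one extra step, identifying the preimage of $Z$ with $\sum_\pi\pi V$ via equivariance of the quotient map, is a detail the paper leaves implicit.
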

\begin{proof}
Let $\widehat Z\le\F_q^m$ be the preimage of $Z$. It is a subspace,
contains $\onesvector$, and is $S_m$-stable. Pick $\bar v\neq0$ in
$\bar V$ and a lift $v\in\widehat Z$. As $\bar v\neq0$, $v$ is not a
multiple of $\onesvector$, so $v_i\neq v_j$ for some $i\neq j$. Let
$(ij)$ denote the transposition. Then
\[
v-(ij)\!\cdot\!v = (v_i-v_j)(e_i-e_j)\;\in\;\widehat Z,
\]
and $v_i-v_j\in\F_q^\times$ is invertible, so $e_i-e_j\in\widehat Z$.
Since $S_m$ is transitive on ordered pairs of distinct indices, we get
$e_a-e_b\in\widehat Z$ for all $a\neq b$, and these span
$\onesvector^\perp$. Hence $\cY\subseteq Z$. Finally
$\cY\subseteq Z\subseteq\cW$ and $\dim\cW-\dim\cY\le1$.
\end{proof}

\begin{lemma}\label{lem:scalar}
Suppose $\operatorname{char}\F_q\mid m$. Then
$H_q\inparen{\bar u\bmod\cY} = h_q(\pi_m)$, where $\pi_m := \frac{q-1}{q}\inparen{1-\kappa^m}$, and
\[
\vphi(\cY)\le(\dim\cY)\,Q_m
\quad\Longleftrightarrow\quad
h_q(\pi_m)\ge Q_m,
\]
which holds for every
\[
m\;\ge\;m_{\mathrm{lb}} = m_{\mathrm{lb}}(q,\rho) := \max\inset{\ceil{1+\frac{2h}{1-h}},\;
\ceil{\log_{1/\kappa}\frac{2}{1-h}}}.
\]
\end{lemma}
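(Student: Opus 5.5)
We prove the three claims of the lemma in turn: identify $\bar u\bmod\cY$ with a scalar, rewrite the inequality $\vphi(\cY)\le(\dim\cY)Q_m$, and then check the resulting inequality numerically.

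\emph{Step 1: the quotient $\cW/\cY$.} Since $\onesvector^\perp$ is the kernel of the functional $s(x):=\sum_jx_j$, the quotient $\F_q^m/\onesvector^\perp$ is identified with $\F_q$ via $s$. When $\operatorname{char}\F_q\mid m$ we have $s(\onesvector)=m=0$, so $\onesvector\in\onesvector^\perp$. Hence $\cW/\cY=\F_q^m/\onesvector^\perp$, and $\bar u\bmod\cY$ is, up to this isomorphism, just $s(u)=\ip{\onesvector}{u}$. The vector $\onesvector$ has support $m$, so Lemma~\ref{lem:activation} gives $\PR{s(u)\neq0}=\pi_m$ with the nonzero values equally likely, each with probability $\frac1q(1-\kappa^m)$. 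Therefore $H_q(\bar u\bmod\cY)=h_q(\pi_m)$.

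\emph{Step 2: the equivalence.} By definition $\vphi(\cY)=\Gamma_m-h_q(\pi_m)=(m-1)Q_m-h_q(\pi_m)$. Since $\dim\cY=m-2$, the inequality $\vphi(\cY)\le(m-2)Q_m$ reads $(m-1)Q_m-h_q(\pi_m)\le(m-2)Q_m$, which is exactly $h_q(\pi_m)\ge Q_m$.

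\emph{Step 3: the inequality for $m\ge m_{\mathrm{lb}}$.} This is the only real estimate. We show the two bounds
\[
Q_m\le\frac{1+h}{2}\qquad\text{and}\qquad h_q(\pi_m)\ge\frac{1+h}{2}.
\]
For the first, $Q_m=\Gamma_m/(m-1)\le mh/(m-1)=h+h/(m-1)$ because $\Delta_m\ge0$ (Lemma~\ref{lem:delta}). The first entry of $m_{\mathrm{lb}}$ gives $m-1\ge 2h/(1-h)$, so $h/(m-1)\le(1-h)/2$.

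For the second, the second entry gives $\kappa^m\le(1-h)/2$. This entry is positive since $h<1$, and it is where the base $1/\kappa$ enters. Since $\pi_m=\frac{q-1}q(1-\kappa^m)$ and $h_q$ is increasing on $[0,1-\frac1q]$, it suffices to lower-bound $h_q\bigl(\frac{q-1}q(1-\delta)\bigr)$ for $\delta:=\kappa^m$. Write $p:=\frac{q-1}q(1-\delta)$. The distribution of $s(u)$ has $\PR{0}=\frac1q(1+(q-1)\delta)$ and each nonzero value has probability $\frac1q(1-\delta)$, so all masses are at most $\frac1q(1+(q-1)\delta)$. The min-entropy bound then gives
\[
h_q(p)\;\ge\;1-\log_q\bigl(1+(q-1)\delta\bigr).
\]
This is the step I expect to be the main obstacle: the crude bound $\log_q(1+x)\le x/\ln q$ gives $h_q(p)\ge1-(q-1)\delta/\ln q$, which is not $\ge 1-\delta$ for large $q$, so the min-entropy route alone may fall short.

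I would therefore try instead a direct concavity argument. The function $h_q$ is concave with $h_q(0)=0$ and $h_q(1-\frac1q)=1$, so $h_q(p)\ge p/(1-\frac1q)=1-\delta$. With $\delta\le(1-h)/2$ this gives $h_q(\pi_m)\ge 1-\frac{1-h}{2}=\frac{1+h}{2}\ge Q_m$, which is exactly the required bound.
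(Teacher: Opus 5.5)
Your proof is correct and follows the paper's argument essentially step for step. You identify $\bar u\bmod\cY$ with $\ip{\onesvector}{u}$ via Lemma~\ref{lem:activation}, read off the equivalence from $\dim\cY=m-2$, and close with the concavity chord bound $h_q(\pi_m)\ge 1-\kappa^m$ against $Q_m\le h+\frac{h}{m-1}$, with both sides compared at $\frac{1+h}{2}$. The abandoned min-entropy attempt in Step~3 is unnecessary and can simply be deleted.
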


\begin{proposition}\label{prop:equi}
Let $m\ge m_{\mathrm{lb}}(q,\rho)$. Then for every $\bar V\le\cW$,
\[
\vphi(\bar V)\;\le\;(\dim\bar V)\,Q_m.
\]
\end{proposition}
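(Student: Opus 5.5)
The plan is to use the symmetry of $\vphi$ together with supermodularity to spread the ``loss'' $\vphi(\bar V)$ evenly over the dimensions of $\bar V$. Fix a nonzero $\bar V\le\cW$ of dimension $d$ (the case $\bar V=0$ is trivial, since $\vphi(0)=0$). Because $\vphi$ is $S_m$-invariant, every translate $\pi\bar V$ has $\vphi(\pi\bar V)=\vphi(\bar V)$. The idea is to build, from translates of $\bar V$, a chain of subspaces reaching $Z:=\sum_\pi\pi\bar V$. Supermodularity then controls how $\vphi$ grows along the chain.

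Concretely, I would choose translates $\pi_1\bar V=\bar V,\pi_2\bar V,\dots,\pi_s\bar V$ greedily. Set $A_i:=\pi_1\bar V+\dots+\pi_i\bar V$, and add a new translate only when it strictly enlarges $A_{i-1}$, stopping once $A_s=Z$. Write $B_i:=A_{i-1}\cap\pi_i\bar V$. Supermodularity gives
\[
\vphi(A_i)\;\ge\;\vphi(A_{i-1})+\vphi(\pi_i\bar V)-\vphi(B_i).
\]
The trouble is the subtracted term $\vphi(B_i)$, which is nonzero in general. I would therefore run an induction on $d$: assume the claim $\vphi(\bar V')\le(\dim\bar V')Q_m$ holds for all proper subspaces of smaller dimension. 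Then $\vphi(B_i)\le(\dim B_i)Q_m$, and hence
\[
\vphi(A_i)-\vphi(A_{i-1})\;\ge\;\vphi(\bar V)-(\dim B_i)\,Q_m .
\]
Suppose, for contradiction, that $\vphi(\bar V)>dQ_m$. Since $\dim\pi_i\bar V-\dim B_i=\dim A_i-\dim A_{i-1}$, each step increases $\vphi$ by strictly more than $Q_m(\dim A_i-\dim A_{i-1})$. The base step $A_1=\bar V$ has the same property. Summing over the chain gives $\vphi(Z)>(\dim Z)Q_m$. The case $B_i=\bar V$ never occurs, because each chosen translate enlarges the sum; so every $B_i$ is a proper subspace and the inductive hypothesis applies to it.

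It then remains to rule out $\vphi(Z)>(\dim Z)Q_m$. By Lemma~\ref{lem:orbit}, $Z\in\inset{\cY,\cW}$. If $Z=\cW$, then $\vphi(\cW)=(m-1)Q_m$ exactly, which is a contradiction. If $Z=\cY\neq\cW$, then $\operatorname{char}\F_q\mid m$, and Lemma~\ref{lem:scalar} gives $\vphi(\cY)\le(\dim\cY)Q_m$ for $m\ge m_{\mathrm{lb}}$, again a contradiction. Note that the induction hypothesis is needed for every dimension below $d$, and the contradiction argument supplies the claim at dimension $d$ itself. So the induction on $d$ closes, with base $d=0$.

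The main obstacle is making the chain argument rigorous. The intersections $B_i$ must be proper subspaces of $\pi_i\bar V$ so that induction applies; this holds because each new translate strictly enlarges $A_{i-1}$. The dimension accounting, $\dim A_s=\sum_i(\dim\pi_i\bar V-\dim B_i)$, must also be correct. I expect the strict-inequality bookkeeping to be routine once the chain is set up this way. The only real input beyond formal submodularity is the computation at $\cY$, which Lemma~\ref{lem:scalar} already provides.
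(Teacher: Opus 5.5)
Your proposal is correct and follows the paper's proof essentially step for step. Both use strong induction on $\dim\bar V$, a greedy chain of $S_m$-translates spanning $Z$, and supermodularity, with the induction hypothesis applied to the proper intersections. Both finish with the bound $\vphi(Z)\le(\dim Z)Q_m$ at $Z\in\{\cY,\cW\}$ from Lemma~\ref{lem:orbit} and Lemma~\ref{lem:scalar}. The only difference is cosmetic: you argue by contradiction, whereas the paper sums the supermodularity inequalities directly and divides by the number of steps $N\ge1$.
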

\begin{proof}
The proof is by strong induction on $t := \dim\bar V$. The case $t = 0$ is
$\vphi(0) = 0$. Let $t\ge1$ and suppose that the bound holds for every
subspace of dimension $<t$. Let $Z := \sum_{\pi\in S_m}\pi\bar V$. By
Lemma~\ref{lem:orbit}, $Z$ is $\cY$ or $\cW$, and in either case
\begin{equation}\label{eq:zbound}
\vphi(Z)\;\le\;(\dim Z)\,Q_m,
\end{equation}
by Lemma~\ref{lem:scalar} in the first case and by
$\vphi(\cW) = (m-1)Q_m$ in the second. When
$\operatorname{char}\F_q\nmid m$ the two cases coincide, since then
$\cY = \cW$.

Let $U_0 := 0$. Given $U_{l-1}\subsetneq Z$, the translates $\pi\bar V$
span $Z$, so some $\bar V_l := \pi_l\bar V$ is not contained in
$U_{l-1}$, and we let $U_l := U_{l-1}+\bar V_l$. Each step raises the
dimension, so the process stops with $U_N = Z$ for some $N\ge1$. Let
$K_l := U_{l-1}\cap\bar V_l$ and $d_l := \dim K_l$. Since
$\bar V_l\not\subseteq U_{l-1}$ we have $K_l\subsetneq\bar V_l$, so
\begin{equation}\label{eq:kl}
d_l\;\le\;t-1\;<\;t,
\end{equation}
and the induction hypothesis applies to each $K_l$. Counting dimensions,
$\dim U_l = \dim U_{l-1}+t-d_l$, and by telescoping
\begin{equation}\label{eq:telescope}
\sum_{l=1}^Nd_l\;=\;Nt-\dim Z.
\end{equation}
By supermodularity applied to $(U_{l-1},\bar V_l)$, together with
$S_m$-invariance ($\vphi(\bar V_l) = \vphi(\bar V)$),
\[
\vphi(U_l)\;\ge\;\vphi(U_{l-1})+\vphi(\bar V)-\vphi(K_l).
\]
We sum over $l$ and telescope, then use \eqref{eq:kl} with the
induction hypothesis and \eqref{eq:telescope}, and obtain
\[
\vphi(Z)\;\ge\;N\vphi(\bar V)-\sum_l\vphi(K_l)
\;\ge\;N\vphi(\bar V)-\inparen{\sum_ld_l}Q_m
= N\vphi(\bar V)-(Nt-\dim Z)\,Q_m.
\]
We combine this with \eqref{eq:zbound} and get
$(\dim Z)Q_m\ge N\vphi(\bar V)-NtQ_m+(\dim Z)Q_m$, i.e.\ $\vphi(\bar V)\le t\,Q_m$
since $N\ge1$.
\end{proof}

\subsection{The minimum is attained at \texorpdfstring{$\F_q\onesvector$}{F\_q 1}}

\begin{proposition}\label{prop:sigma-min}
For all $m\ge m_{\mathrm{lb}}(q,\rho)$,
\[
\rat(\sigma_m)\;=\;Q_m
\;=\;h_q(\rho)\inparen{1+\frac1{m-1}}-\frac{\Delta_m}{m-1},
\]
and the minimum in Definition~\ref{def:rat} is attained at
$V = \F_q\onesvector$.
\end{proposition}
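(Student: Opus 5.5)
We need to show that $\rat(\sigma_m)=\min_{V\lneq\F_q^m} H_q(\tilde\sigma_m\bmod V)/(m-\dim V)$ equals $Q_m$, with the minimum attained at $V=\F_q\onesvector$. The plan is to rewrite every quotient of $\tilde\sigma_m$ as a quotient of $\bar u$ and then apply Proposition~\ref{prop:equi}. The upper bound is immediate from the choice $V=\F_q\onesvector$. Since $Y=u+\alpha\onesvector$, we have $Y\bmod\F_q\onesvector=\bar u$. So $H_q(\tilde\sigma_m\bmod\F_q\onesvector)/(m-1)=\Gamma_m/(m-1)=Q_m$, and this matches the closed form in \eqref{eq:sigma-quotient} of Lemma~\ref{lem:delta}. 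Hence $\rat(\sigma_m)\le Q_m$.

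For the lower bound, fix $V\lneq\F_q^m$ and split into two cases according to whether $\onesvector\in V$.

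\emph{Case $\onesvector\in V$.} Write $\bar V:=V/\F_q\onesvector\le\cW$, a proper subspace of dimension $\dim V-1$. Then $Y\bmod V=\bar u\bmod\bar V$, so $H_q(\tilde\sigma_m\bmod V)=\Gamma_m-\vphi(\bar V)$. Proposition~\ref{prop:equi} gives $\vphi(\bar V)\le(\dim\bar V)Q_m$, hence
\[
H_q(\tilde\sigma_m\bmod V)\ge \Gamma_m-(\dim\bar V)Q_m=(m-1-\dim\bar V)Q_m=(m-\dim V)Q_m,
\]
which is the required inequality.

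\emph{Case $\onesvector\notin V$.} Put $V':=V+\F_q\onesvector$. Because $g$ is non-increasing (Lemma~\ref{fact:submod}) and the codimension drops by exactly one,
\[
H_q(Y\bmod V)\ge H_q(Y\bmod V').
\]
The difference $H_q(Y\bmod V)-H_q(Y\bmod V')=H_q(Y\bmod V\mid Y\bmod V')$ needs a lower bound. Its natural candidate is $H_q(\alpha)=1$, since $\alpha$ is uniform and independent of $u$. Concretely, conditioned on $u$, the class $Y\bmod V=u+\alpha\onesvector\bmod V$ runs through the $q$ distinct points of a coset of $\F_q\onesvector$ in $\F_q^m/V$; these points are distinct because $\onesvector\notin V$. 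I would argue that
\[
H_q(Y\bmod V)\ge H_q(Y\bmod V\mid u)=H_q(\alpha)=1
\]
is one usable bound, and that the sharper bound $H_q(Y\bmod V)\ge H_q(Y\bmod V')+1$ follows from the same reasoning. For the sharper bound, observe that given $Y\bmod V'=\bar u\bmod\bar V'$, the fiber has $q$ points. Because $\alpha$ is independent of $u$, uniform, and shifts along that fiber, the conditional distribution on the fiber is a uniform mixture of translates, which is itself uniform. Therefore the conditional entropy equals exactly $1$.

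With this, the first case applied to $V'$ gives
\[
H_q(Y\bmod V)\ge (m-\dim V')Q_m+1=(m-\dim V-1)Q_m+1\ge(m-\dim V)Q_m,
\]
using $Q_m\le1$.

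The step I expect to be delicate is the uniformity-on-fibers claim in the second case. To check it, I would identify $\F_q^m/V\cong(\F_q^m/V')\times\F_q$ by choosing a complement and noting that $\alpha$ adds uniformly to the $\F_q$-component, independently of everything else. Combining both cases, every $V$ satisfies $H_q(\tilde\sigma_m\bmod V)/(m-\dim V)\ge Q_m$, with equality at $V=\F_q\onesvector$. This proves the proposition.
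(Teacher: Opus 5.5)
Your proof is correct and follows the paper's argument: the same two cases, Proposition~\ref{prop:equi} when $\onesvector\in V$, and when $\onesvector\notin V$ the identity $H_q(Y\bmod V)=H_q(Y\bmod V')+1$ combined with $Q_m\le1$. The differences are cosmetic. You obtain $H_q(Y\bmod V\mid Y\bmod V')=1$ by showing the conditional law on each fiber is uniform, whereas the paper sandwiches it between $H_q(P\mid P',u)=H_q(\alpha)=1$ and the fiber size $q$; and the edge case $V'=\F_q^m$, which the paper treats explicitly, is already covered by your first case, since Proposition~\ref{prop:equi} also holds for $\bar V=\cW$.
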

\begin{proof}
The value at $V = \F_q\onesvector$ is $Q_m$ by Lemma~\ref{lem:delta}.
Let $V\lneq\F_q^m$ and $D := m-\dim V\ge1$. It suffices to show
$H_q\inparen{\tilde\sigma_m\bmod V}\ge D\,Q_m$.

If $\onesvector\in V$ then $\tilde\sigma_m\bmod V = \bar u\bmod\bar V$
for $\bar V := V/\F_q\onesvector$, of codimension $D$ in $\cW$, and
Proposition~\ref{prop:equi} gives
\[
H_q\inparen{\bar u\bmod\bar V} = \Gamma_m-\vphi(\bar V)
\;\ge\;\Gamma_m-(m-1-D)\,Q_m\;=\;D\,Q_m.
\]
If $\onesvector\notin V$, let $V' := V+\F_q\onesvector$, of codimension
$D-1$. By the previous paragraph (or, when $D = 1$, because
$V' = \F_q^m$ and both sides vanish),
$H_q\inparen{\tilde\sigma_m\bmod V'}\ge(D-1)Q_m$. Let $P := Y\bmod V$ and
$P' := Y\bmod V'$. Since $V\le V'$, $P$ determines $P'$, so
$H_q(P)-H_q(P') = H_q(P\mid P')$. Now
$P' = (u+\alpha\onesvector)\bmod V' = u\bmod V'$ is a function of $u$ alone,
so $H_q(P\mid P')\ge H_q(P\mid P',u) = H_q(P\mid u) = H_q(\alpha) = 1$, the
last equality because the $q$ cosets $(u+\alpha\onesvector)+V$ are distinct
for fixed $u$. In the other direction each value of $P'$ has $|V'/V| = q$
preimages, so $H_q(P\mid P')\le1$. Hence, with $Q_m\le1$,
\[
H_q\inparen{\tilde\sigma_m\bmod V}
= H_q\inparen{\tilde\sigma_m\bmod V'}+1
\;\ge\;(D-1)\,Q_m+1\;\ge\;D\,Q_m.\qedhere
\]
\end{proof}

\section{Formal statements of \texorpdfstring{Theorems~\ref{thm:main-twopoint} and~\ref{thm:main-upper}}{the main theorems}}\label{sec:assembly}
The upper bound on $\Vabs$ is Proposition~\ref{prop:vabs-upper} and the lower bound follows from Proposition~\ref{prop:sigma-min} and Lemma~\ref{lem:delta}, and together they prove Theorem~\ref{thm:main-upper}. They also prove Theorem~\ref{thm:main-twopoint},
through Lemma~\ref{fact:threshold}. The proofs of Corollary~\ref{cor:assembly} and Proposition~\ref{prop:twopoint} are in Appendix~\ref{app:deferred}.

\begin{corollary}[Theorem~\ref{thm:main-upper}, explicit form]\label{cor:assembly}
Let $q$ be a prime power and $\rho\in\inparen{0,1-\frac1q}$. Let
\[
L_{\Vabs}(q,\rho) := \max\inset{q,\;L_{\mathrm{low}}(q,\rho),\;
m_{\mathrm{lb}}(q,\rho)-1},
\]
where $L_{\mathrm{low}}$ is the list-size cutoff of
Proposition~\ref{prop:low} and $m_{\mathrm{lb}}$ the length cutoff of
Lemma~\ref{lem:scalar}. Then for every integer $L\ge L_{\Vabs}(q,\rho)$,
\[
h_q(\rho)\inparen{1+\frac1L}-C_\lambda(q,\rho)\,\frac{(L+1)\lambda^{L+1}}{L}
\;\le\;\Vabs(q,L,\rho)\;\le\;h_q(\rho)\inparen{1+\frac 1L}.
\]
\end{corollary}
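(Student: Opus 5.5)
The plan is to assemble the two halves already proved, checking only that the cutoffs and the error term line up. Fix $L\ge L_{\Vabs}(q,\rho)$ and set $m := L+1$.

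\emph{Upper bound.} Since $L\ge\max\inset{q,L_{\mathrm{low}}(q,\rho)}$, Proposition~\ref{prop:vabs-upper} applies directly and gives $\Vabs(q,L,\rho)\le h_q(\rho)\inparen{1+\frac1L}$.

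\emph{Lower bound.} Since $L\ge m_{\mathrm{lb}}(q,\rho)-1$, we have $m\ge m_{\mathrm{lb}}(q,\rho)$. By Definition~\ref{def:sigma} and the discussion after it, $\sigma_m$ is an absolutely $\rho$-clustered type of length $m = L+1$. It is therefore admissible in the supremum defining $\Vabs$, so $\Vabs(q,L,\rho)\ge\rat(\sigma_m)$. Proposition~\ref{prop:sigma-min} evaluates this exactly, with $m-1 = L$:
\[
\rat(\sigma_m) \;=\; h_q(\rho)\inparen{1+\frac1L}-\frac{\Delta_{L+1}}{L}.
\]
Lemma~\ref{lem:delta} bounds $\Delta_{L+1}\le C_\lambda(q,\rho)(L+1)\lambda^{L+1}$. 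Dividing by $L$ gives exactly the claimed error term $C_\lambda\frac{(L+1)\lambda^{L+1}}{L}$.

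There is no genuine obstacle here, since all the difficulty lies in the cited propositions. The only points to check are the following.
\begin{itemize}
\item Each cutoff in $L_{\Vabs}$ matches the hypothesis of the result that uses it: $q$ and $L_{\mathrm{low}}$ for Proposition~\ref{prop:vabs-upper}, and $m_{\mathrm{lb}}-1$ for Proposition~\ref{prop:sigma-min} through $m = L+1$.
\item The length convention is consistent. $\Vabs(q,L,\rho)$ ranges over types of length $L+1$, which is the length of $\sigma_{L+1}$.
\item $\sigma_m$ satisfies the distinct-columns requirement of Definition~\ref{def:type}. This was verified after Definition~\ref{def:sigma}.
\end{itemize}
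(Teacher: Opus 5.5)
Your proposal is correct and follows the paper's own proof essentially line for line. The upper bound comes from Proposition~\ref{prop:vabs-upper} using $L\ge\max\{q,L_{\mathrm{low}}\}$. The lower bound comes from the admissible type $\sigma_{L+1}$: $L+1\ge m_{\mathrm{lb}}$, so Proposition~\ref{prop:sigma-min} gives its exact $\rat$, and Lemma~\ref{lem:delta} bounds $\Delta_{L+1}$.
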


\begin{proposition}[Theorem~\ref{thm:main-twopoint}, explicit form]\label{prop:twopoint}
Let $q$ be a prime power and $\rho\in\inparen{0,1-\frac1q}$. Let
$L_{\mathrm{list}} = L_{\mathrm{list}}(q,\rho)$ be the least integer with
$L_{\mathrm{list}}\ge\max\inset{L_{\Vabs}(q,\rho),\,m_{\mathrm{lb}}(q,\rho),\,3}$ and
\begin{equation}\label{eq:L1-condition}
2\,C_\lambda(q,\rho)\,\ell^2\lambda^\ell\;<\;h_q(\rho)
\qquad\text{for all }\ell\ge L_{\mathrm{list}}
\end{equation}
which by Lemma~\ref{lem:delta} gives $\Delta_\ell<h_q(\rho)/(2\ell)$, and
which some $L_{\mathrm{list}}$ satisfies since $\lambda<1$. Let
$\eps_0 := h_q(\rho)/L_{\mathrm{list}}$. Fix
$0<\eps<\eps_0$, let $\zeta := h_q(\rho)/\eps\ge L_{\mathrm{list}}$ and
$m^+ := \floor \zeta+1$, and let $\cC\le\F_q^n$ be a uniformly random linear
code of rate $R = 1-h_q(\rho)-\eps$. Then, with probability $1-o(1)$ as
$n\to\infty$,
\[
L^*(\cC,\rho)\;\in\;\inset{\floor \zeta,\ \floor \zeta+1},
\]
and moreover $L^*(\cC,\rho) = \floor \zeta+1$ whenever
\begin{equation}\label{eq:tx}
\fracpart{\zeta}\;>\;t(\zeta) := \frac{\zeta\,\Delta_{m^+}}{h_q(\rho)},
\qquad\text{where}\qquad
t(\zeta)\;\le\;\frac{C_\lambda(q,\rho)\,\zeta(\zeta+1)}{h_q(\rho)}\,\lambda^{\zeta}
\;=\;q^{-\Omega_{q,\rho}(\zeta)},
\end{equation}
writing $\fracpart{\zeta}$ for the fractional part of $\zeta$.
\end{proposition}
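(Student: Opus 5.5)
We combine the two-sided estimate of Corollary~\ref{cor:assembly} with the threshold characterization, Lemma~\ref{fact:threshold}. Write $h := h_q(\rho)$, $R = 1-h-\eps$ and $\zeta = h/\eps\ge L_{\mathrm{list}}$; then $R<1-\Vabs(q,L,\rho)$ is equivalent to $\Vabs(q,L,\rho)<h+\eps$.

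\emph{Upper bound on $L^*$.} We take $L := m^+ = \floor\zeta+1$, which is at least $L_{\Vabs}$. Corollary~\ref{cor:assembly} gives $\Vabs(q,L,\rho)\le h(1+\frac1L)$. Since $L>\zeta$, this equals $h+h/L<h+h/\zeta = h+\eps$. The margin is fixed (independent of $n$), so Lemma~\ref{fact:threshold}\ref{item:threshold-sub} shows that $\cC$ is $(\rho,\floor\zeta+1)$-list-decodable with probability $1-o(1)$. Hence $L^*\le\floor\zeta+1$.

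\emph{Lower bound on $L^*$.} We want to show $\cC$ fails $(\rho,L)$-list-decodability for the relevant $L$. It suffices to show this for the largest such $L$, because list-decodability at $L$ implies it at every $L'\ge L$. For this we use part~\ref{item:threshold-super} of Lemma~\ref{fact:threshold} with $\mu = \tilde\sigma_m$, coupled with its center $\alpha$. The pair $(u+\alpha\onesvector,\alpha)$ has full support on $\F_q^m\times\F_q$, since $\nu_0$ has full support and $\alpha$ is uniform, and its coordinates are distinct. Proposition~\ref{prop:sigma-min} (valid since $m\ge m_{\mathrm{lb}}$) gives
\[
\rat(\sigma_m) = h\inparen{1+\tfrac1{m-1}}-\tfrac{\Delta_m}{m-1}.
\]
So $\cC$ fails $(\rho,m-1)$-list-decodability as soon as $\rat(\sigma_m)>h+\eps$.

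\emph{Case $L = m-1 = \floor\zeta-1$.} Here $m = \floor\zeta\ge L_{\mathrm{list}}$, so \eqref{eq:L1-condition} gives $\Delta_m<h/(2m)$. Then
\[
\rat(\sigma_m)-h-\eps\ge \frac{h}{m-1}-\frac{h}{\zeta}-\frac{h}{2m(m-1)}.
\]
Using $\zeta<m+1$, the right-hand side is at least
\[
h\inparen{\frac1{m-1}-\frac1{m+1}-\frac1{2m(m-1)}} = h\,\frac{4m-(m+1)}{2m(m-1)(m+1)}>0.
\]
So $L^*\ge\floor\zeta$, and we get $L^*\in\{\floor\zeta,\floor\zeta+1\}$.

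\emph{Case $L = \floor\zeta$.} Here $m = m^+$. The condition $\rat(\sigma_{m^+})>h+\eps$ reads $\frac{h-\Delta_{m^+}}{\floor\zeta}>\frac h\zeta$. Rearranging, this is $h\,\fracpart\zeta>\zeta\Delta_{m^+}$, which is exactly $\fracpart\zeta>t(\zeta)$. When this holds we get $L^*\ge\floor\zeta+1$, hence equality.

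\emph{Bound on $t(\zeta)$.} By Lemma~\ref{lem:delta}, $\Delta_{m^+}\le C_\lambda\,m^+\lambda^{m^+}$. Since $m^+\le\zeta+1$ and $\lambda^{m^+}\le\lambda^{\zeta}$, we obtain $t(\zeta)\le C_\lambda\zeta(\zeta+1)\lambda^\zeta/h$. As $\lambda<1$, this is $q^{-\Omega_{q,\rho}(\zeta)}$.

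The main obstacle is the edge case of part~\ref{item:threshold-super}: Lemma~\ref{fact:threshold} needs a fixed positive margin, whereas the inequalities above are strict only for fixed $\eps$. This is fine, because $\eps$ is fixed before $n\to\infty$, so both strict inequalities give constant margins. When $\fracpart\zeta=t(\zeta)$ exactly, we only claim the two-point statement, which needs no margin at $m^+$.
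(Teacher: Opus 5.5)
Your proof has the same structure as the paper's: Corollary~\ref{cor:assembly} with Lemma~\ref{fact:threshold}\ref{item:threshold-sub} at $\floor\zeta+1$, then $\sigma_{\floor\zeta}$ and $\sigma_{m^+}$ with Lemma~\ref{fact:threshold}\ref{item:threshold-super}, then Lemma~\ref{lem:delta} for $t(\zeta)$. Your case $L=\floor\zeta$ and your bound on $t(\zeta)$ are correct. However, the positivity check in the case $L=\floor\zeta-1$ uses an inequality in the wrong direction.

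In
\[
\frac{h}{m-1}-\frac{h}{\zeta}-\frac{h}{2m(m-1)}
\]
the quantity $\zeta$ enters only through $-h/\zeta$. So $\zeta<m+1$ gives $-h/\zeta<-h/(m+1)$, which is an \emph{upper} bound on the expression. Your claim that the expression is at least $h\bigl(\frac1{m-1}-\frac1{m+1}-\frac1{2m(m-1)}\bigr)$ is equivalent to $\zeta\ge m+1$. That fails for every admissible $\zeta$, since $m=\floor\zeta$.

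The fix is to use the other bound, $\zeta\ge m=\floor\zeta$. This gives $\frac{h}{m-1}-\frac{h}{\zeta}\ge\frac{h}{m(m-1)}$. With $\Delta_m<h/(2m)$, the gap $\rat(\sigma_m)-h-\eps$ is then strictly larger than $\frac{h}{m(m-1)}-\frac{h}{2m(m-1)}=\frac{h}{2m(m-1)}>0$.

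Alternatively, follow the paper and write the gap as $\frac1{m-1}\bigl(\frac{h(1+\fracpart\zeta)}{\zeta}-\Delta_m\bigr)$. Here $\zeta$ sits in the denominator of the positive term, so $\zeta<m+1\le 2m$ is used in the correct direction to reduce to $\Delta_m<h/(2m)$. With either repair your proof is complete.
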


Corollary~\ref{cor:assembly} also bounds the list size at every rate below capacity.

\begin{corollary}[All rates below capacity]\label{cor:all-eps}
Let $q$ be a prime power, $\rho\in\inparen{0,1-\frac1q}$ and $0<\eps<1-h_q(\rho)$, and let $\cC\le\F_q^n$ be a uniformly random linear code of rate $R = 1-h_q(\rho)-\eps$. Then, with probability $1-o(1)$ as $n\to\infty$,
\[
L^*(\cC,\rho)\;\le\;\frac{h_q(\rho)}{\eps}+O_{q,\rho}(1).
\]
\end{corollary}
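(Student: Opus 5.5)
We need to prove Corollary~\ref{cor:all-eps}: for every $0<\eps<1-h_q(\rho)$, $L^*\le h/\eps+O_{q,\rho}(1)$ with probability $1-o(1)$. The plan is to combine the upper bound of Corollary~\ref{cor:assembly} with part~\ref{item:threshold-sub} of Lemma~\ref{fact:threshold}, choosing a single list size $L$ that lies above both $L_{\Vabs}$ and the natural threshold $\zeta = h/\eps$.

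\emph{Choice of $L$.} Set
\[
L := \max\inset{L_{\Vabs}(q,\rho),\ \floor{\zeta}+1},\qquad \zeta := h_q(\rho)/\eps,
\]
so that $L\le \zeta+L_{\Vabs}(q,\rho)+1$, and $L_{\Vabs}+1$ is the $O_{q,\rho}(1)$ constant. Since $L\ge L_{\Vabs}$, Corollary~\ref{cor:assembly} gives $\Vabs(q,L,\rho)\le h_q(\rho)(1+\frac1L)$. Since $L>\zeta$, this is $<h_q(\rho)+h_q(\rho)/\zeta = h_q(\rho)+\eps$.

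\emph{Applying the threshold.} The rate satisfies
\[
R = 1-h_q(\rho)-\eps < 1-\Vabs(q,L,\rho).
\]
The margin $h_q(\rho)+\eps-\Vabs(q,L,\rho)$ is strictly positive and fixed, because $q,\rho,\eps,L$ are fixed as $n\to\infty$. Lemma~\ref{fact:threshold}\ref{item:threshold-sub} therefore shows that $\cC$ is $(\rho,L)$-list-decodable with probability $1-o(1)$. Hence
\[
L^*(\cC,\rho)\le L\le h_q(\rho)/\eps+L_{\Vabs}(q,\rho)+1.
\]

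\emph{Remaining details.} There is essentially no obstacle. The only care needed is that Lemma~\ref{fact:threshold} is applied at a single fixed $L$, which depends on $\eps$ but not on $n$, so its $o(1)$ is legitimate. We also need the range $0<\eps<1-h_q(\rho)$, which ensures $R>0$ so the code ensemble is meaningful; no smallness of $\eps$ is used. If $\eps$ is so large that $\zeta<L_{\Vabs}$, the bound $L^*\le L_{\Vabs}$ is still absorbed in the constant. This is where the argument differs from Proposition~\ref{prop:twopoint}: only the upper half of Corollary~\ref{cor:assembly} is needed, so the lower-bound cutoff $m_{\mathrm{lb}}$ and the exponentially small gap play no role, beyond being folded into $L_{\Vabs}$.
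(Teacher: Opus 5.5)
Your proposal is correct and is essentially the paper's proof: the same choice $L=\max\inset{\floor\zeta+1,\,L_{\Vabs}(q,\rho)}$, the upper half of Corollary~\ref{cor:assembly}, and Lemma~\ref{fact:threshold}\ref{item:threshold-sub} at this single fixed $L$. The only difference is cosmetic: the paper notes $L\le\zeta+L_{\Vabs}$, since $L_{\Vabs}\ge q\ge2$ absorbs the $+1$ in $\floor\zeta+1\le\zeta+1$, so your additive constant $L_{\Vabs}+1$ is slightly looser but equally valid.
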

\begin{proof}
Let $\zeta := h_q(\rho)/\eps$ and $L := \max\inset{\floor\zeta+1,\,L_{\Vabs}(q,\rho)}$. Since $L\ge L_{\Vabs}$, Corollary~\ref{cor:assembly} gives $\Vabs(q,L,\rho)\le h_q(\rho)\inparen{1+\frac1L}$, and since $L>\zeta$ we have $h_q(\rho)/L<\eps$. Hence $R = 1-h_q(\rho)-\eps<1-\Vabs(q,L,\rho)$, and Lemma~\ref{fact:threshold}\ref{item:threshold-sub} shows that $\cC$ is $(\rho,L)$-list-decodable with probability $1-o(1)$. Finally $L\le\zeta+L_{\Vabs}(q,\rho)$, so the constant is $L_{\Vabs}(q,\rho)$.
\end{proof}

\section{Further directions}\label{sec:limits}
Several authors have studied average-radius list decoding \cite{GN14,GLMRSW}, in which the $L+1$ codewords are required only to lie at \emph{average} relative distance $\rho$ from the center. In Proposition~\ref{prop:augment} we need each codeword to separately lie within relative distance $\rho$ of the center, so our arguments do not immediately extend to this notion of list decoding, and generalizing them would be a natural next step. We also hope our techniques
will be of use in the ongoing development of the \cite{MRRSW} framework
\cite{GM24,LMS24}. The rest of this section concerns the main problem left open by this work, the range of $\eps$ in Theorem~\ref{thm:main-twopoint}: we isolate the lossy step of the upper bound and state a conjecture that would avoid it (Conjecture~\ref{conj:range}).

For each fixed $\rho$ both of our bounds hold once $\eps$ is small, and
Theorem~\ref{thm:main-twopoint} covers the whole interval
$\rho\in\inparen{0,1-\frac1q}$. The question is how small $\eps$ must be for each bound to apply at full precision,
and here the two bounds differ sharply. Since the rate is positive only when $\eps<1-h_q(\rho)$, every $\zeta = \frac{h_q(\rho)}{\eps}$ under consideration exceeds $\frac{h_q(\rho)}{1-h_q(\rho)}$.

Our lower bound on $\Vabs$ holds for every
$L+1\ge m_{\mathrm{lb}}(q,\rho) =
\max\inset{\ceil{1+\frac{2h_q(\rho)}{1-h_q(\rho)}},\,
\ceil{\log_{1/\kappa}\frac{2}{1-h_q(\rho)}}}$, with
$\kappa := 1-\frac{q\rho}{q-1}$ (Corollary~\ref{cor:assembly}). Near
$1-\frac1q$ the first term dominates, and the bound holds for every
$\eps\le(1-h_q(\rho))/c$ with $c$ an absolute constant. Our upper bound, by contrast, gives the exact list size only for $\eps$ below $e^{-\Theta_q(\kappa^{-2})}$. The cutoff $L_{\mathrm{low}}$ of Proposition~\ref{prop:low} is dominated by the factors $q^{T_{\mathrm{gp}}}$ and $\omega!$ in \eqref{eq:choice-L0prime}, and since $\gamma^* = \Theta_q(\kappa^2)$, both $T_{\mathrm{gp}}$ and $\omega$ are of order $\kappa^{-2}$ up to logarithmic factors (Appendix~\ref{app:constants}), so $L_{\mathrm{low}} = e^{\Theta_q(\kappa^{-2})}$. Near $1-\frac1q$, where $1-h_q(\rho) = \Theta_q(\kappa^2)$, the lower bound thus begins at lists of size $\Theta_q(\kappa^{-2})$ and the exact upper bound exponentially later. In between, Corollary~\ref{cor:all-eps} gives only $L^*\le L_{\Vabs} = e^{\Theta_q(\kappa^{-2})}$, exponentially larger than $\zeta$, while \cite{Woo13} shows that at rates $\Omega_q(\kappa^2)$, a constant fraction of capacity, the list size is $O(\kappa^{-2})$ with constants independent of $\rho$. We now discuss how one might close this gap.

Recall that the minimum in Definition~\ref{def:rat} runs over the proper
subspaces $V\lneq\F_q^r$, with one term for each quotient of the type. A type appears in the code no earlier than its rarest quotient, so to bound $\Vabs$ from above we may pick, for each type, any one term of the minimum. We always pick the same one: when $\onesvector\in C$ there is a vector $u_0$ with $\ip{\xi_i}{u_0} = 1$ for every $i$ and we take $V = \F_qu_0$, and otherwise we take $V = 0$.

This is the right term: for $\sigma_m$ it attains the minimum (Proposition~\ref{prop:sigma-min}), and for every type of length $L+1$ with $L\ge L_{\Vabs}$ its value exceeds $\Vabs(q,L,\rho)$ by at most $q^{-\Omega(L)}$ (Corollary~\ref{cor:assembly}). The difficulty is in bounding it. Lemma~\ref{lem:reduction} passes to a configuration $X$ of dimension $k = r+1$ and gives
\[
\rat(\tau)\;\le\;\frac{H_q\inparen{X\bmod\F_q\onesvector}}{k-1}
\;\le\;\frac{H_q(X)}{k-1},
\]
 and we only ever use the right-hand side. For $\sigma_m$ nothing is lost, since Case~1 of Proposition~\ref{prop:vabs-upper} needs no more than $H_q(X)\le k\,h_q(\rho)$. For other configurations $H_q(X)$ can exceed $H_q\inparen{X\bmod\F_q\onesvector}$ by up to a full $q$-ary unit (Appendix~\ref{app:limits}), and this is what restricts the range of $\eps$. Any bound proved this way must handle every
configuration of length $m = L+1$, and in Appendix~\ref{app:limits} we
give configurations with
\[
H_q(X)\;>\;k-1
\qquad\text{for every
$2\le k\le\kappa^{-1}$ and every $k+1\le m\le q^{k-1}$},
\]
on which the weaker bound exceeds even the trivial $\rat(\tau)\le1$. For
$(L+1)\inparen{1-h_q(\rho)}>1$ a configuration of length $L+1$ blocks the
bound at that $L$, so even with the constants of Section~\ref{sec:upper} made optimal, this route bounds $\Vabs(q,L,\rho)$ only for $L\ge q^{\kappa^{-1}-2}$
(Lemma~\ref{lem:limits-unrestricted}), and since $\zeta = h_q(\rho)/\eps$, no
tightening of our approach gives $\eps_0$ larger than
$e^{-\Omega_q(\kappa^{-1})}$.

Every form that Lemma~\ref{lem:reduction} produces has coefficient sum $1$, as does every $e_i$ (Remark~\ref{rem:coeff-sum}), so $\ip{a}{X+c\onesvector} = \ip{a}{X}+c$ for every $a\in\Lambda$,
and the quotient $X\bmod\F_q\onesvector$ forgets only a shift common to
every evaluation. Recall from \eqref{eq:chain-ones} that
\[
H_q(X)\;=\;H_q\inparen{X\bmod\F_q\onesvector}
+H_q\inparen{X\mid X\bmod\F_q\onesvector},
\]
with the second term at most $1$, so the numerators of the two bounds in
Lemma~\ref{lem:reduction} differ by at most one $q$-ary unit. We believe that bounding $H_q\inparen{X\bmod\F_q\onesvector}$ directly, rather than passing to $H_q(X)$, would give a range of $\eps$ comparable to that of the lower bound.

\begin{conjecture}\label{conj:range}
There is $C_q$ such that the following holds for every
$\rho\in\inparen{0,1-\frac1q}$. Let $X$ be a configuration of dimension $k$
and length $m$ in which every vector of $\Lambda$ has coefficient sum $1$,
and suppose $m\inparen{1-h_q(\rho)}\ge C_q$. Then
\[
H_q\inparen{X\bmod\F_q\onesvector}\;\le\;(k-1)\,h_q(\rho)\,\frac{m}{m-1}.
\]
\end{conjecture}

For $m\inparen{1-h_q(\rho)}\le1$ the right side is at least $k-1$, so the conjecture is only meaningful from $m\inparen{1-h_q(\rho)}>1$ on. Neither side of the conjecture can be improved: $\sigma_m$ shows the right side is tight up to $q^{-\Omega(L)}$, and Lemma~\ref{lem:limits-unrestricted} shows that $H_q\inparen{X\bmod\F_q\onesvector}$ cannot be replaced by $H_q(X)$. Indeed, the configurations there have $H_q(X)>k-1\ge(k-1)\,h_q(\rho)\frac{m}{m-1}$ once $m\inparen{1-h_q(\rho)}\ge1$. For these configurations the second term in \eqref{eq:chain-ones} equals $1-\kappa$, while $H_q\inparen{X\bmod\F_q\onesvector}\le(k-1)\,h_q(\rho)$ once $m\inparen{1-h_q(\rho)}\ge C_q$ (Appendix~\ref{app:limits}).
By Lemma~\ref{lem:reduction} the conjecture gives
\[
\Vabs(q,L,\rho)\;\le\;h_q(\rho)\inparen{1+\frac1L}
\qquad\text{whenever}\quad
(L+1)\inparen{1-h_q(\rho)}\;\ge\;C_q,
\]
which determines $\Vabs$ to within $q^{-\Omega(L)}$ and extends Theorem~\ref{thm:main-twopoint} to every $\eps\le(1-h_q(\rho))/(2C_q)$ near $1-\frac1q$, the range in which our lower bound already holds.

\section*{Acknowledgements}

We thank Anirudh Krishna and Paris Syminelakis for editorial comments on the paper.

\section*{AI Disclosure}

Claude Opus 5 was used for proof checking, numerical verifications and
editing the manuscript.
The author is responsible for the
correctness of the arguments and the originality of the work.

\bibliographystyle{alpha}
\bibliography{refs}

\appendix

\section{Deferred proofs}\label{app:deferred}

\begin{proof}[Proof of Lemma~\ref{lem:normal-form}]
$\supp\phi_*\mu = \phi(\supp\mu)$ spans $\F_q^r$, since $\supp\mu$ spans
$U$. Writing $u := \phi(y)$, the columns of $\tau$ are
$\ip{\xi_i}{u} = y_i$, so they are jointly distributed as $y$, and
jointly with any coupled center as well. In particular
$\PR{u_i\neq u_j}>0$.
Finally the $\xi_i$ span $\F_q^r$: a nonzero $v$ with $\ip{\xi_i}{v} = \inparen{\phi^{-1}v}_i = 0$ for all $i$ would have $\phi^{-1}v = 0$, against injectivity of $\phi^{-1}$.
\end{proof}

\begin{proof}[Proof of Lemma~\ref{lem:implied}]
$A$ induces an isomorphism $\F_q^r/V\xrightarrow{\ \sim\ }\mathrm{im}\,A$
carrying $\tilde\tau\bmod V$ to the distribution of $Au$, so the two
have the same entropy and the same rank, the latter being
$\dim\inparen{\F_q^r/V} = r-\dim V = d$. The support of
$\tilde\tau\bmod V$ spans because $\supp\tilde\tau$ spans $\F_q^r$.
Conversely, given $V\lneq\F_q^r$, take $A$ to be the matrix of any
surjection $\F_q^r\to\F_q^{r-\dim V}$ with kernel $V$.
\end{proof}

\begin{proof}[Proof of Lemma~\ref{lem:reduce}]
If $r = 1$ then $C = \F_q\onesvector$, so every codeword is a constant
vector and $u_i = u_j$ almost surely for all $i,j$, contradicting distinct
columns, so $r\ge2$.

Since $\onesvector\in C$ there is $u_0\in\F_q^r$ with
$\ip{\xi_i}{u_0} = 1$ for all $i\in[m]$, so the $\xi_i$ lie in one affine
hyperplane of $\F_q^r$ missing $0$. Changing coordinates on $\F_q^r$ (which transforms $\tilde\tau$ by the corresponding linear map and leaves $\rat$ unchanged) we may
assume $M\subseteq\inset{\xi\in\F_q^r\suchthat\xi_r=1}$. Let $\xi_i =
(\eta_i,1)$ with $\eta_i\in\F_q^{r-1}$, and let $\pi:\F_q^r\to\F_q^{r-1}$,
$\pi(u) = (u_1,\dots,u_{r-1})$. Let $M':=(\eta_1,\dots,\eta_m)$ and
$\tilde\tau' := \pi_*\tilde\tau$, $\tau' := (M',\tilde\tau')$.

The $\eta_i$ span $\F_q^{r-1}$, since
$\spn\inset{\eta_i}_{i\in[m]} = \pi\inparen{\spn\inset{\xi_i}_{i\in[m]}}
= \F_q^{r-1}$, and
$\supp\tilde\tau'$ spans as the image of a spanning set. One $\eta_i$ may
be $0$.

For $u\sim\tilde\tau$ and
$u':=\pi(u)$ we have $\ip{\eta_i}{u'} = \ip{\xi_i}{u} - u_r$. Hence,
putting $\alpha' := \alpha - u_r$ (a random variable on the same probability space,
coupled to $u'$),
\[
\PR{\ip{\eta_i}{u'}\neq \alpha'} = \PR{\ip{\xi_i}{u}\neq \alpha}\;\le\;\rho
\qquad (i\in[m]),
\]
so $\tau'$ is absolutely $\rho$-clustered. Likewise
$\ip{\eta_i}{u'} - \ip{\eta_j}{u'} = \ip{\xi_i}{u}-\ip{\xi_j}{u}$, so the
columns of $\tau'$ remain pairwise distinct.

The map $V'\mapsto V:=\pi^{-1}(V')$ is a bijection onto the subspaces of
$\F_q^r$ containing $\Ker\pi$, and under it
$\tilde\tau'\bmod V' = \tilde\tau\bmod V$ and
$(r-1)-\dim V' = r-\dim V$. Every term of the minimum defining
$\rat(\tau')$ is therefore a term of the one defining $\rat(\tau)$, and
$\rat(\tau)\le\rat(\tau')$.

Finally, from
$\ip{\xi_i}{u} = \ip{\eta_i}{\pi(u)}+u_r$ we get
$C = C(\tau')+\F_q\onesvector$. Now $\dim C = r$ and $\dim C(\tau') = r-1$
(as $M'$ spans $\F_q^{r-1}$), so the sum is direct and
$\onesvector\notin C(\tau')$.
\end{proof}

\begin{proof}[Proof of Lemma~\ref{lem:calibration}]
$1+(q-1)\beta = 1+\frac{\rho}{1-\rho} = \frac1{1-\rho}$. Next
\[
\rho\theta_0 + \log_q\frac{1}{1-\rho}
= \rho\log_q(q-1) + \rho\log_q(1-\rho) - \rho\log_q\rho -\log_q(1-\rho)
= h_q(\rho).
\]
For~1, $\nu_0(x) = (1-\rho)^{k-|\supp x|}
\inparen{\frac{\rho}{q-1}}^{|\supp x|}$ and
$\beta = \frac{\rho}{(q-1)(1-\rho)}$, so
$(1-\rho)^{-k}\nu_0(x) = \beta^{|\supp x|}$. Item~2 follows by summing, and
item~3 from $\rho k\theta_0 + \log_q(1-\rho)^{-k} = k\,h_q(\rho)$.
\end{proof}

\begin{proof}[Proof of Lemma~\ref{lem:activation}]
For $c\neq0$ and a single coordinate,
\[
\Eover{\nu_0}{\chi(ca_iX_i)} = (1-\rho) +
\frac{\rho}{q-1}\sum_{y\neq0}\chi(ca_iy) = 1-\rho-\frac{\rho}{q-1} =
\kappa
\]
if $a_i\neq0$ (using $\sum_{y\in\F_q}\chi(by) = 0$ for $b\neq0$), and
$=1$ if $a_i = 0$. By independence the product over $i$ is
$\kappa^w$. Fourier inversion,
$\PR{\ip a X = v} =
\frac1q\sum_{c\in\F_q}\chi(-cv)\,\EE\inbrak{\chi\inparen{c\ip a X}}$,
gives \eqref{eq:activation-identity}, using $\sum_{c\neq0}\chi(-cv) = q-1$ for $v = 0$
and $=-1$ otherwise. Then
$\PR{\ip a X\neq0} = 1-\frac1q\inparen{1+(q-1)\kappa^w} =
\frac{q-1}q\inparen{1-\kappa^w} = \pi_w$. Since $\kappa\in(0,1)$, $\pi_w$
increases strictly. Also $\pi_1 = \frac{q-1}q(1-\kappa) = \rho$ by definition of
$\kappa$, and $\pi_2 = \frac{q-1}q\inparen{1-\kappa^2} = \rho(1+\kappa)$.
For $w\ge2$,
$\pi_w-\rho = \frac{q-1}q\inparen{\kappa-\kappa^w}\ge
\frac{q-1}q\kappa(1-\kappa) = \rho\kappa$. The last claim follows because
$\rho<\pi_2\le\pi_w<1$ and $\Dq\rho b$ increases strictly in $b$ on
$[\rho,1)$ (Section~\ref{sec:entropies}).
\end{proof}

\begin{proof}[Proof of Lemma~\ref{lem:tilt-opt}]
With $s:=\log_q z = -\theta$ let
$g(s):= c(\pi,-s) = -\rho s+\log_q\inparen{1-\pi+\pi q^{s}}$. Then
\[
g'(s) = -\rho + \frac{\pi q^s}{1-\pi+\pi q^s}, \qquad
g''(s) = \ln q\cdot\frac{\pi q^s(1-\pi)}{\inparen{1-\pi+\pi q^s}^2}\;>\;0,
\]
so $g$ is strictly convex, with its unique unconstrained stationary point
at
\[
z^* = \frac{\rho(1-\pi)}{\pi(1-\rho)}.
\]
Note that $z^*\le1$ since $\pi\ge\rho$, i.e.\
$\theta^* = \log_q(1/z^*)\ge0$, so the stationary point lies in
$\theta\ge0$, and the minimum there is the unconstrained one. At it,
$1-\pi+\pi z^* = \frac{1-\pi}{1-\rho}$, so
\[
c(\pi,\theta^*) = \rho\log_q\frac{\pi(1-\rho)}{\rho(1-\pi)} +
\log_q\frac{1-\pi}{1-\rho}
= \rho\log_q\frac{\pi}{\rho} + (1-\rho)\log_q\frac{1-\pi}{1-\rho}
= -\Dq{\rho}{\pi}. \qedhere
\]
\end{proof}

\begin{proof}[Proof of Lemma~\ref{lem:lambda}]
\emph{Part 1.} For $c = 0$ the term $y = 0$ contributes $1$ and each of the $q-1$
values $y\neq0$ contributes $\beta$, and
$1+(q-1)\beta = 1+\frac{\rho}{1-\rho} = \frac1{1-\rho}$. For $c\neq0$
the term $y = 0$ contributes $\sqrt\beta$, so does the unique
$y = -c/\alpha$, which is nonzero, and each of the remaining $q-2$
values contributes $\beta$.

\emph{Part 2.} Let $m = 1$ first. Since $\beta(1-\rho) = \frac{\rho}{q-1}$ we have
$\nu_0(a) = (1-\rho)\beta^{\one\inbrak{a\neq0}}$ on $\F_q$, so each term
factors as
\[
\sqrt{\nu_0(a)\,\nu_0(a+c)}
= (1-\rho)\,\inparen{\sqrt\beta}^{\one\inbrak{a\neq0}}
\inparen{\sqrt\beta}^{\one\inbrak{a+c\neq0}},
\]
and the substitution $a := \alpha y$, a bijection of $\F_q$ with
$a\neq0\iff y\neq0$ and $a+c\neq0\iff c+\alpha y\neq0$, identifies the
left-hand side of \eqref{eq:bhatt} with $(1-\rho)\,S(c)$, which is
$\lambda$ by part~1. For general $m$ the sum factorises into $m$ copies
of this one, since $\nu_0$ is a product measure and the translate acts
coordinatewise.

\emph{Part 3.} By part~1, $\lambda<1$ is equivalent to $S(c)<S(0)$ for
$c\neq0$, that is to $2\sqrt\beta+(q-2)\beta<1+(q-1)\beta$, that is to
$\inparen{1-\sqrt\beta}^2>0$, which holds since $\beta<1$.
\end{proof}

\begin{proof}[Proof of Lemma~\ref{lem:basis-change}]
Since $a_{i_0}\neq0$ we have $a\notin\spn\inset{e_i}_{i\neq i_0}$, so
$\cB$ is a basis, $X\mapsto\widetilde X$ is a linear bijection
$\F_q^k\to\F_q^{\widetilde I}$, and $H_q(\widetilde X) = H_q(X)$. The
coordinate functionals of the new basis are precisely the members
$a,\ (e_i)_{i\neq i_0}$ of $\Lambda$, whose evaluations on $X$ are $\rho$-sparse by
hypothesis, giving~1. For~2, the $k+|J'|$ vectors $a,\ (e_i)_{i\neq i_0},\ (a_j)_{j\in J'}$ are distinct members of $\Lambda$, because $a\notin\inset{a_j}_{j\in J'}$ by hypothesis, and so they are nonzero and pairwise non-proportional by Definition~\ref{def:config}. Non-proportionality of a family of vectors
is a basis-free property, so these $k+|J'|$ distinct members of $\Lambda$
remain nonzero and pairwise non-proportional in the new coordinates.
Item~3 is the expansion of $\mu$ in $\cB$.
\end{proof}

\begin{proof}[Proof of Proposition~\ref{prop:support-reduction}]
Split each $a_j$ as $u_j+v_j$ with $u_j\in\cU$ and $|\supp v_j|<W$. By
pigeonhole over the at most $q^\sigma$ values of $u_j$, some $u^*\in\cU$ has
$F := \inset{j\in J\suchthat u_j = u^*}$ of size $|F|\ge N/q^\sigma$, and for
$j\in F$ we have $a_j = u^*+v_j$ with $|\supp v_j|\le W-1$.

Suppose first that $|\supp u^*|<W$. Then for every $j\in F$,
\[
|\supp a_j|\;\le\;|\supp u^*|+|\supp v_j|\;\le\;(W-1)+(W-1) = 2W-2.
\]
Here the change of basis is the identity and $J_0 := F$. Then $|J_0|\ge N/q^\sigma\ge N/q^{\sigma+W}$. 

Otherwise $|\supp u^*|\ge W$. Let $A_1\subseteq\supp(u^*)$ have size
$W$. The restrictions
$v_j|_{A_1}$ take at most $q^W$ values, and some class $F'\subseteq F$ has
\[
|F'|\;\ge\;|F|/q^W\;\ge\;N/q^{\sigma+W}, \qquad
v_j|_{A_1} = v_{j'}|_{A_1}\ \ (j,j'\in F').
\]
Fix $j_2\in F'$.

Since $|\supp v_{j_2}|\le W-1<W = |A_1|$, there is $i_0\in A_1$ with
$v_{j_2,i_0} = 0$. Then, as $i_0\in A_1\subseteq\supp u^*$,
\[
a_{j_2,i_0} = u^*_{i_0}+v_{j_2,i_0} = u^*_{i_0}\;\neq\;0.
\]
Hence $i_0\in\supp a_{j_2}$ and we may change basis along $a_{j_2}$
and $i_0$ (Lemma~\ref{lem:basis-change}).

Finally, for $j\in F'$ let
$d_j := v_j-v_{j_2}$. Then
\[
a_j = u^*+v_j = a_{j_2}+d_j, \qquad
|\supp d_j|\;\le\;|\supp v_j|+|\supp v_{j_2}|\;\le\;2W-2,
\]
and $d_{j,i_0} = v_{j,i_0}-v_{j_2,i_0} = 0$, because $j,j_2\in F'$
agree on $A_1\ni i_0$. So
$\supp d_j\subseteq[k]\setminus\inset{i_0}$, and
Lemma~\ref{lem:basis-change}(3) with $c = 1$ gives
\[
\supp_{\mathrm{new}}(a_j)\subseteq\inset{0'}\cup\supp(d_j),
\qquad
|\supp_{\mathrm{new}}(a_j)|\;\le\;1+(2W-2) = 2W-1,
\]
with $\tilde e_{0'}$-coefficient exactly $1$. Let
$J_0 := F'\setminus\inset{j_2}$, which has size $\ge N/q^{\sigma+W}-1$.
Let $J' := J\setminus\inset{j_2}$. Then
$a_{j_2}\notin\inset{a_j}_{j\in J'}$ and $J_0\subseteq J'$, so
Lemma~\ref{lem:basis-change} gives a configuration on $\widetilde X$ with
$H_q(\widetilde X) = H_q(X)$, all $k$ new coordinates $\rho$-sparse, and
each $\ip{a_j}{\widetilde X}$ ($j\in J_0$) $\rho$-sparse.
\end{proof}

\begin{proof}[Proof of Proposition~\ref{prop:low}]
Let $r := r^\circ\ge1$ and $k := r+1$. Lemma~\ref{lem:reduction}
provides a configuration of dimension $k$ with $N = m-k = L-r\ge1$ forms
and $\rat(\tau)\le H_q(X)/r$, and by Remark~\ref{rem:target} it suffices
to show
\begin{equation}\label{eq:low-target}
H_q(X)\;\le\;k\,h-\frac{h\,N}{L}.
\end{equation}

Suppose first that $hN\le\gamma^*L$. Corollary~\ref{cor:single-form} gives
$H_q(X)\le k\,h-\gamma^*\le k\,h-\frac{hN}{L}$.

Suppose instead that $hN>\gamma^*L$. Theorem~\ref{thm:tool}(2) applies
with $\eta := h/L$, and then $L_\eta = \ceil{h/\eta} = L$. Since \eqref{eq:choice-N0} is
monotone in $N_0$ and $N>\gamma^*L/h$, $N\ge N_0(q,\rho,h/L)$ holds as
soon as
$\gamma^*L/h$ satisfies \eqref{eq:choice-N0}, that is, as soon as $L$ satisfies \eqref{eq:choice-L0prime}, which holds since $L\ge L_{\mathrm{low}}$. Part~(2) then gives
$H_q(X)\le(k-1)\,h+\frac hL = k\,h-h+\frac hL$, and $N+1\le L$ gives
$h-\frac hL\ge\frac{hN}{L}$, so \eqref{eq:low-target} holds.
\end{proof}

\begin{proof}[Proof of Lemma~\ref{lem:delta}]
Since $\onesvector\in\F_q\onesvector$, the shift by $\alpha$ vanishes in
the quotient, so $\tilde\sigma_m\bmod\F_q\onesvector$ is the distribution
of $u\bmod\F_q\onesvector$, whose entropy is
$H_q(u)-\Delta_m = m\,h_q(\rho)-\Delta_m$ because the $m$ coordinates of $u$ are i.i.d.\ with distribution $\nu_0$. We divide by
$m-\dim\F_q\onesvector = m-1$ and obtain \eqref{eq:sigma-quotient}.

For the bound, $\Delta_m = H_q\inparen{u\mid u\bmod\F_q\onesvector}$ is
an average over cosets $\inset{x+c\onesvector}_{c\in\F_q}$ of the entropy
of the conditional distribution $f$ on the $q$ representatives. Let
$\delta := 1-\max_cf_c$ be the mass off the representative of largest
mass. The grouping rule for entropy then gives
$H_q(f)\le h_q^{\mathrm{bin}}(\delta)+\delta\log_q(q-1)$, and
$h_q^{\mathrm{bin}}(\delta)\le\delta\log_q\frac1\delta+\delta\log_qe$.
Let $x$ be that representative. The mass off it is
\[
\sum_{c\neq0}\nu_0(x+c\onesvector)
= \sum_{c\neq0}\min\inset{\nu_0(x),\nu_0(x+c\onesvector)}
\le \sum_{c\neq0}\sqrt{\nu_0(x)\,\nu_0(x+c\onesvector)},
\]
and summing over one representative per coset is dominated by the sum over all $x\in\F_q^m$ (each coset contributes one $x$, and every term is nonnegative), which is $(q-1)\lambda^m$ by
Lemma~\ref{lem:lambda}. The factor $\log_q\frac1\delta$ is at most
$m\log_q\frac{q-1}{\rho}$, since every $\nu_0$-atom has mass at least $\inparen{\frac{\rho}{q-1}}^m$, and $\delta$, the mass off the representative divided by the mass of the coset, is at least the mass of a single atom. Multiplying $H_q(f)$ by the mass of its
coset turns each $\delta$ into the unnormalized mass off the
representative, so summing over cosets gives
\[
\Delta_m\;\le\;\inparen{m\log_q\tfrac{q-1}{\rho}+\log_qe+\log_q(q-1)}
(q-1)\lambda^m
\;\le\;C_\lambda(q,\rho)\,m\,\lambda^m,
\]
the last step because $m\ge1$.
\end{proof}

\begin{proof}[Proof of Lemma~\ref{fact:submod}]
Let $X_i := X\bmod V_i$, $X_0 := X\bmod(V_1\cap V_2)$ and
$X_{12} := X\bmod(V_1+V_2)$. Each $X_i$ is a function of $X_0$.
Conversely, if $x-x'\in V_1$ and $x-x'\in V_2$ then
$x-x'\in V_1\cap V_2$, so $X_0$ is a function of $(X_1,X_2)$ and
$H_q(X_0) = H_q(X_1,X_2)$. Also $X_{12}$ is a function of $X_1$ and of
$X_2$. Therefore
\[
H_q(X_1,X_2) = H_q(X_1)+H_q(X_2\mid X_1)
\;\le\;H_q(X_1)+H_q(X_2\mid X_{12})
= H_q(X_1)+H_q(X_2)-H_q(X_{12}).
\]
For monotonicity, $V\le V'$ makes $X\bmod V'$ a function of
$X\bmod V$.
\end{proof}

\begin{proof}[Proof of Lemma~\ref{lem:scalar}]
When $\operatorname{char}\F_q\mid m$ the preimage of $\cY$ is
$\onesvector^\perp$, so $\bar u\bmod\cY$ is the value of the functional
with kernel $\onesvector^\perp$, i.e.\ $\sum_ju_j = \ip\onesvector u$.
By Lemma~\ref{lem:activation} with $a = \onesvector$ (support $m$), this
takes the value $0$ with probability $\frac1q\inparen{1+(q-1)\kappa^m}$
and each nonzero value with probability
$\frac1q\inparen{1-\kappa^m}$. Its entropy is exactly $h_q(\pi_m)$. Note that $\pi_m = \frac{q-1}q\inparen{1-\kappa^m}<1-\frac1q$, since $\kappa>0$, so $\pi_m$ lies in the range on which $h_q$ is increasing. The
equivalence follows from $\vphi(\cY) = (m-1)Q_m-h_q(\pi_m)$ and
$\dim\cY = m-2$.

For the cutoff $m_{\mathrm{lb}}$, concavity of $h_q$ with $h_q(0) = 0$ implies
$h_q(ca)\ge c\,h_q(a)$ for $c\in[0,1]$, so
\[
h_q(\pi_m) = h_q\inparen{(1-\kappa^m)\inparen{1-\tfrac1q}}
\;\ge\;(1-\kappa^m)\cdot h_q\inparen{1-\tfrac1q} = 1-\kappa^m,
\]
while $Q_m\le\frac{m\,h}{m-1} = h+\frac{h}{m-1}$. For $m\ge m_{\mathrm{lb}}$
we have $\frac{h}{m-1}\le\frac{1-h}2$ and
$\kappa^m\le\frac{1-h}2$, and thus
$1-\kappa^m\ge h+\frac{h}{m-1}\ge Q_m$.
\end{proof}

\begin{proof}[Proof of Corollary~\ref{cor:assembly}]
Since $L\ge L_{\Vabs}\ge\max\inset{q,\;L_{\mathrm{low}}(q,\rho)}$,
Proposition~\ref{prop:vabs-upper} gives the upper bound, and
$L+1\ge L_{\Vabs}+1\ge m_{\mathrm{lb}}(q,\rho)$, so Proposition~\ref{prop:sigma-min} applies to $\sigma_{L+1}$, an absolutely $\rho$-clustered type of length $L+1$, and gives $\Vabs(q,L,\rho)\ge\rat(\sigma_{L+1}) = h_q(\rho)\inparen{1+\frac1L}-\frac{\Delta_{L+1}}{L}$, and Lemma~\ref{lem:delta} bounds $\Delta_{L+1}\le C_\lambda(q,\rho)(L+1)\lambda^{L+1}$.
\end{proof}

\begin{proof}[Proof of Proposition~\ref{prop:twopoint}]
Each step below tests one list size, a fixed integer determined by
$(q,\rho,\eps)$ before $n$ is chosen, so each of the events involved has
probability $1-o(1)$ as $n\to\infty$, with the $o(1)$ of
Lemma~\ref{fact:threshold} depending on $(q,\rho,\eps)$ only, and a finite
intersection of such events again has probability $1-o(1)$.

\emph{Step 1 ($L^*\le\floor \zeta+1$).} Let $L := \floor \zeta+1$, so that
$L>\zeta$ and $h/L<\eps$. As $L>\zeta\ge L_{\mathrm{list}}\ge L_{\Vabs}$,
Corollary~\ref{cor:assembly} gives $\Vabs(q,L,\rho)\le h+h/L<h+\eps$. Hence $R
= 1-h-\eps<1-\Vabs(q,L,\rho)$, by the fixed positive amount $\eps-h/L$, and
Lemma~\ref{fact:threshold}\ref{item:threshold-sub} applies.

\emph{Step 2 ($L^*\ge\floor \zeta$).} Let $m := \floor \zeta\ge L_{\mathrm{list}}\ge m_{\mathrm{lb}}$ and $L := m-1\ge2$. (The requirement $L_{\mathrm{list}}\ge3$ is for convenience only: Lemma~\ref{fact:threshold} needs just $L\ge1$, so $L_{\mathrm{list}}\ge2$ would suffice here.) By
Proposition~\ref{prop:sigma-min}, and writing $\zeta-(m-1) = 1+\fracpart{\zeta}$,
\[
\rat(\sigma_m)-h-\eps
\;=\;\frac{h}{m-1}-\frac{\Delta_m}{m-1}-\frac{h}{\zeta}
\;=\;\frac{1}{m-1}\inparen{\frac{h\inparen{1+\fracpart{\zeta}}}{\zeta}-\Delta_m},
\]
which is positive as soon as $\Delta_m<h/\zeta$. Since $\zeta<m+1\le2m$,
it suffices that $\Delta_m\le C_\lambda\,m\lambda^m<h/(2m)$, i.e.\
$2C_\lambda m^2\lambda^m<h$, which holds by \eqref{eq:L1-condition} at
$\ell = m\ge L_{\mathrm{list}}$. Hence $R>1-\rat(\sigma_m)$. The distribution
$\tilde\sigma_m$ has distinct coordinates and full support and is
absolutely $\rho$-clustered, and the pair $(Y,\alpha)$ of Definition~\ref{def:sigma} has full support on $\F_q^m\times\F_q$, so
Lemma~\ref{fact:threshold}\ref{item:threshold-super} shows that
$\cC$ is not
$(\rho,m-1)$-list-decodable, and $L^*\ge\floor \zeta$.

\emph{Step 3 ($L^*\ge\floor \zeta+1$ when $\fracpart{\zeta}>t(\zeta)$).} The
computation of Step~2 with $m^+ = \floor \zeta+1$ in place of $m$, and
$\zeta-(m^+-1) = \fracpart{\zeta}$ in place of $1+\fracpart{\zeta}$, gives
\[
\rat(\sigma_{m^+})-h-\eps
= \frac{1}{\floor \zeta}\inparen{\frac{h\fracpart{\zeta}}{\zeta}-\Delta_{m^+}},
\]
which is positive precisely when $\fracpart{\zeta}>t(\zeta)$. In that case
Lemma~\ref{fact:threshold}\ref{item:threshold-super} implies
$L^*\ge\floor \zeta+1$, which with
Step~1 gives equality. Finally, Lemma~\ref{lem:delta} at $m = m^+$
gives $\Delta_{m^+}\le C_\lambda\inparen{\floor \zeta+1}\lambda^{\floor \zeta+1}
\le C_\lambda(\zeta+1)\lambda^{\zeta}$.
\end{proof}

\section{Constants}\label{app:constants}
Every constant below depends on $(q,\rho)$ only.

\subsection{The tilting constants}\label{sec:tilting-constants}

Let
\begin{align*}
\kappa &:= 1-\frac{q\rho}{q-1}\in(0,1), &
\theta_0 &:= \log_q\frac{(q-1)(1-\rho)}{\rho}\;>\;0, &
\beta &:= q^{-\theta_0} = \frac{\rho}{(q-1)(1-\rho)}\in(0,1),
\end{align*}
\begin{align*}
\pi_w &:= \frac{q-1}{q}\inparen{1-\kappa^w}\quad (w\ge 0), &
\lambda &:= \inparen{2\sqrt\beta + (q-2)\beta}(1-\rho)\in(0,1).
\end{align*}
Note that $\theta_0>0$ and $\beta<1$ hold precisely because
$\rho<1-\frac1q$; that $\pi_0=0$, $\pi_1=\rho$, and
$\pi_w$ increases to $1-\frac1q$; and that $\lambda<1$
(Lemma~\ref{lem:lambda}).

\subsection{The derived constants}\label{sec:derived-constants}

\begin{align*}
\gamma^* = \gamma^*(q,\rho) &:= \Dq{\rho}{\pi_2} = \Dq{\rho}{\rho(1+\kappa)}
\;>\;0.
\end{align*}
Let
\begin{align*}
T_{\mathrm{gp}} = T_{\mathrm{gp}}(q,\rho) &:= \floor{\frac{h_q(\rho)}{\gamma^*}}+2,
&
W_{\mathrm{gp}} = W_{\mathrm{gp}}(q,\rho)
&:= \max\inset{2,\;\ceil{\log_{1/\kappa}\frac{2q^{T_{\mathrm{gp}}}}{\gamma^*\ln q}}},\\
\omega = \omega(q,\rho) &:= 2W_{\mathrm{gp}}-1.
\end{align*}
Here $T_{\mathrm{gp}}\,\gamma^*\ge h_q(\rho)+\gamma^*$.
$W_{\mathrm{gp}}$ is the least integer at least $2$ for which the error
term $q^{T_{\mathrm{gp}}}\kappa^{W_{\mathrm{gp}}}/\ln q$ of Proposition~\ref{prop:gp} is at most
$\gamma^*/2$, and Proposition~\ref{prop:support-reduction} at $W = W_{\mathrm{gp}}$ leaves a subfamily with supports of size at most $\omega$.

The constant of Theorem~\ref{thm:main-upper} is
\[
C_\lambda(q,\rho) := (q-1)\inparen{\log_q\tfrac{q-1}{\rho}+\log_qe+\log_q(q-1)}
\]
(Lemma~\ref{lem:delta}). The cutoffs $m_{\mathrm{lb}}$ (Lemma~\ref{lem:scalar}), $L_{\mathrm{low}}$ (Proposition~\ref{prop:low}), $L_{\Vabs}$ (Corollary~\ref{cor:assembly}) and $L_{\mathrm{list}}$ (Proposition~\ref{prop:twopoint}) are defined where they are first needed.

\section{Limits of Theorem~\ref{thm:tool}}\label{app:limits}
\begin{proposition}\label{prop:limits}
Let $\kappa := 1-\frac{q\rho}{q-1}$, $h := h_q(\rho)$, and let $k\ge2$
satisfy
\begin{equation}\label{eq:limit-range}
k\inparen{\kappa-(1-h)}+(1-\kappa)\log_q\frac{q^k}{q^k-q}
\;<\;h+h_q^{\mathrm{bin}}(\kappa).
\end{equation}
Then for every $m$ with $k+1\le m\le q^{k-1}$ (a range that is empty only for $q = k = 2$) there is a configuration of
dimension $k$ and length $m$ with $H_q(X)>(k-1)\,h$.
\end{proposition}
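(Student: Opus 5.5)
The plan is an explicit two-point mixture, chosen so that every functional of coefficient sum $1$ is exactly $\rho$-sparse. Let $\mathcal{N}\subseteq\F_q^k$ be the set of non-constant vectors, i.e.\ $\F_q^k\setminus\F_q\onesvector$, so $|\mathcal{N}| = q^k-q$. Define $X$ by
\[
X = 0 \text{ with probability } \kappa, \qquad X \sim \mathrm{Unif}(\mathcal{N}) \text{ with probability } 1-\kappa .
\]
For the forms, take any $m-k$ vectors of coefficient sum $1$ other than $e_1,\dots,e_k$. This matches the structure of Remark~\ref{rem:coeff-sum}, and it is consistent with the claim in Section~\ref{sec:limits} that for these configurations $H_q\inparen{X\mid X\bmod\F_q\onesvector} = 1-\kappa$.

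\emph{Sparsity.} Let $a$ have $\sum_i a_i = 1$. On the line $\F_q\onesvector$ we have $\ip{a}{c\onesvector} = c$, so the line contains exactly $q-1$ points where the form is nonzero. On all of $\F_q^k$ the form $\ip{a}{\cdot}$ is nonzero on $(q-1)q^{k-1}$ points, so on $\mathcal{N}$ it is nonzero on $(q-1)(q^{k-1}-1)$ of the $q(q^{k-1}-1)$ points. This is a fraction of exactly $\frac{q-1}{q}$. Since the form vanishes at $X=0$,
\[
\PR{\ip{a}{X}\neq0} = (1-\kappa)\frac{q-1}{q} = \rho
\]
by the definition of $\kappa$. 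This applies to every $e_i$ and to every chosen form, so condition~\ref{item:config-sparse} of Definition~\ref{def:config} holds.

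\emph{Non-proportionality and the range of $m$.} Distinct vectors on the affine hyperplane $\inset{a\suchthat\sum_i a_i = 1}$ are pairwise non-proportional: if $a = cb$ with both of sum $1$, then $c = 1$. They are also nonzero. The hyperplane has $q^{k-1}$ points and contains $e_1,\dots,e_k$, so any $m$ with $k+1\le m\le q^{k-1}$ can be realized by choosing $m-k$ further points. The range is nonempty unless $q^{k-1}<k+1$, which for $k\ge2$ happens only at $q=k=2$. This gives condition~\ref{item:config-nonprop}.

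\emph{Entropy.} Since $0\in\F_q\onesvector$, the two components have disjoint supports, so by the grouping rule
\[
H_q(X) = h_q^{\mathrm{bin}}(\kappa)+(1-\kappa)\log_q(q^k-q).
\]
Write $\log_q(q^k-q) = k-\log_q\frac{q^k}{q^k-q}$. Then $H_q(X)>(k-1)h$ becomes $(k-1)h< h_q^{\mathrm{bin}}(\kappa)+(1-\kappa)k-(1-\kappa)\log_q\frac{q^k}{q^k-q}$. Moving $kh$ to the right-hand side shows this is exactly \eqref{eq:limit-range}.

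There is no real obstacle. The only step needing care is the count of nonzero values on $\mathcal{N}$, which is what makes the sparsity exactly $\rho$ rather than merely at most $\rho$. The rest is bookkeeping to confirm that the algebra matches \eqref{eq:limit-range} term by term.
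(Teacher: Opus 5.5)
Your proposal is correct and follows the paper's own proof essentially step for step. It uses the same mixture ($X=0$ with probability $\kappa$, otherwise uniform on $\F_q^k\setminus\F_q\onesvector$), the same fact that coefficient-sum-one vectors are exactly $\rho$-sparse under it and pairwise non-proportional, and the same entropy formula $h_q^{\mathrm{bin}}(\kappa)+(1-\kappa)\log_q(q^k-q)$, rearranged into \eqref{eq:limit-range}.
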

\begin{proof}
Let $X$ be $0$ with probability $\kappa$ and uniform on
$\F_q^k\setminus\F_q\onesvector$ otherwise, and let $a\in\F_q^k$ have
coefficient sum $1$. Then $\ip{a}{c\onesvector} = c$, so $0$ is the only
point of $\F_q\onesvector$ at which $\ip aX$ vanishes, and
$q^k-q^{k-1}-(q-1) = \frac{q-1}{q}\inparen{q^k-q}$ of the remaining $q^k-q$
points have $\ip ax\neq0$. Hence
\[
\PR{\ip aX\neq0}\;=\;(1-\kappa)\,\frac{q-1}{q}\;=\;\rho.
\]
There are $q^{k-1}$ such $a$, no two of them proportional, since $a' = ca$
forces $c = 1$, so any $m$ of them including $e_1,\dots,e_k$ form a
configuration on $X$. Since $X$ is uniform on $q^k-q$ points off $0$,
\[
H_q(X)\;=\;h_q^{\mathrm{bin}}(\kappa)+(1-\kappa)\log_q\inparen{q^k-q},
\]
so $k\,h-H_q(X)$ is the left-hand side of \eqref{eq:limit-range} less
$h_q^{\mathrm{bin}}(\kappa)$.
\end{proof}

For $\kappa$ small, condition \eqref{eq:limit-range} holds for every
$k\le\frac1\kappa+\log_q\frac1\kappa-O_q(1)$, so for every $\eta$ below the excess $H_q(X)-(k-1)\,h$ of these configurations, a quantity depending only on $q$, $\kappa$ and $k$, the $N_0(q,\rho,\eta)$ of Theorem~\ref{thm:tool}(2) is at least $q^{\Omega_q(1/\kappa)}$.

The conclusion $H_q(X)>k-1$, at which the weaker bound of
Lemma~\ref{lem:reduction} exceeds even the trivial $\rat(\tau)\le1$,
holds with no restriction on $\kappa$.

\begin{lemma}\label{lem:limits-unrestricted}
For every $\rho\in\inparen{0,1-\frac1q}$, every integer $k$ with
$2\le k\le\frac1\kappa$, and every $k+1\le m\le q^{k-1}$, there is a
configuration of dimension $k$ and length $m$ with $H_q(X)>k-1$.
\end{lemma}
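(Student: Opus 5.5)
We reuse the construction in the proof of Proposition~\ref{prop:limits}: $X$ equals $0$ with probability $\kappa$ and is uniform on $\F_q^k\setminus\F_q\onesvector$ otherwise. The forms are any $m-k$ vectors $a$ of coefficient sum $1$, chosen distinct from $e_1,\dots,e_k$ (there are $q^{k-1}$ such vectors including the $e_i$, so $m\le q^{k-1}$ suffices). That proof already verifies the configuration properties. Every functional of coefficient sum $1$ has $\PR{\ip aX\neq0} = (1-\kappa)\frac{q-1}{q} = \rho$, so all $m$ evaluations are $\rho$-sparse. Two such vectors are never proportional except with constant $1$. The only remaining task is the entropy inequality $H_q(X)>k-1$ in the range $2\le k\le 1/\kappa$.

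The entropy is
\[
H_q(X) = h_q^{\mathrm{bin}}(\kappa)+(1-\kappa)\log_q\inparen{q^k-q}
= h_q^{\mathrm{bin}}(\kappa)+(1-\kappa)\inparen{k-\log_q\tfrac{q^k}{q^k-q}}.
\]
So $H_q(X)-(k-1) = h_q^{\mathrm{bin}}(\kappa)+1-\kappa k-(1-\kappa)\log_q\frac{q^k}{q^k-q}$, and we must show this is positive. Since $k\le1/\kappa$ we have $1-\kappa k\ge0$. It therefore suffices to show
\[
(1-\kappa)\log_q\frac{q^k}{q^k-q}\;<\;h_q^{\mathrm{bin}}(\kappa).
\]
For the left side we write $\log_q\frac{q^k}{q^k-q} = -\log_q(1-q^{1-k})$. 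For $k\ge2$ we have $q^{1-k}\le1/q\le 1/2$, so this is at most $\frac{q^{1-k}}{(1-q^{1-k})\ln q}\le\frac{2q^{1-k}}{\ln q}$. For the right side, $h_q^{\mathrm{bin}}(\kappa)\ge\kappa\log_q\frac1\kappa+(1-\kappa)\log_q\frac1{1-\kappa}$, and the first term alone is $\ge \kappa\log_q k$ because $1/\kappa\ge k$.

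The main obstacle is comparing these two bounds, since $\kappa$ may be much smaller than $1/k$. In that case the right side shrinks while the left side stays of order $q^{1-k}$. I would therefore use the exact second term instead. We have $(1-\kappa)\log_q\frac{1}{1-\kappa}\ge(1-\kappa)\frac{\kappa}{\ln q}$, which gives $h_q^{\mathrm{bin}}(\kappa)\ge\frac{\kappa}{\ln q}\inparen{\ln\frac1\kappa+1-\kappa}$. This is still linear in $\kappa$, so for tiny $\kappa$ the comparison against the constant $2q^{1-k}/\ln q$ fails as stated. The fix is to stop discarding the slack $1-\kappa k$ and to bound the whole expression at once. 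Write $f(\kappa) := h_q^{\mathrm{bin}}(\kappa)+1-\kappa k-(1-\kappa)\log_q\frac{q^k}{q^k-q}$ for $\kappa\in(0,1/k]$. This $f$ is concave in $\kappa$, since $h_q^{\mathrm{bin}}$ is concave and the other terms are affine. Hence it suffices to check positivity at the two endpoints.

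At $\kappa\to0^+$ we get $f\to1-\log_q\frac{q^k}{q^k-q}>0$, because $q^k-q\ge q^{k-1}$, that is, $q^k\ge q^{k-1}+q$, which holds for $k\ge2$ and $q\ge2$. At $\kappa=1/k$ the slack vanishes, and we must show $(1-\frac1k)\log_q\frac{q^k}{q^k-q}<h_q^{\mathrm{bin}}(1/k)$. Here $h_q^{\mathrm{bin}}(1/k)\ge\frac1k\log_q k$, while the left side is at most $\frac{2q^{1-k}}{\ln q}$. The inequality $\frac{\ln k}{k}>2q^{1-k}$ is elementary for $k\ge3$. The case $k=2$ needs a direct check: there $h_q^{\mathrm{bin}}(1/2)=\log_q2$, and it must exceed $\frac12\log_q\frac{q}{q-1}$, which holds since $4>\frac q{q-1}$. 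Concavity then gives $f>0$ on the whole interval, which completes the proof.
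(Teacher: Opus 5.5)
Your route is the paper's: the same configuration, and your $f$ is exactly the paper's $F(\kappa)=h_q^{\mathrm{bin}}(\kappa)+(1-\kappa)\log_q(q-q^{2-k})-\kappa(k-1)$. Concavity on $[0,1/k]$ with $f(0)\ge0$ then reduces everything to showing $f(1/k)>0$.

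\textbf{The gap is in that endpoint check.} You bound $h_q^{\mathrm{bin}}(1/k)\ge\frac1k\log_q k$ and bound the left side by $\frac{2q^{1-k}}{\ln q}$. You then assert that $\frac{\ln k}{k}>2q^{1-k}$ is elementary for $k\ge3$. It is false at $q=2$, $k=3$: there $\frac{\ln 3}{3}\approx0.366<\frac12=2\cdot2^{-2}$. This case lies inside the lemma's range, since $m=4=q^{k-1}$ is allowed and $\kappa\le\frac13$ means $\rho\in[\frac13,\frac12)$. So as written your proof does not cover it.

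The needed inequality $f(1/3)>0$ does hold there, since $h_2^{\mathrm{bin}}(1/3)\approx0.918$ exceeds $\frac23\log_2\frac43\approx0.277$. You lost it by discarding two things: the factor $1-\frac1k$ on the left, and the term $\frac{k-1}{k}\log_q\frac{k}{k-1}$ of $h_q^{\mathrm{bin}}(1/k)$. Keeping either one repairs the step.
\begin{itemize}
\item With the factor kept, you need $\frac{\ln k}{k}>2\inparen{1-\frac1k}q^{1-k}$. At $(q,k)=(2,3)$ this reads $0.366>0.333$, and for every other $k\ge3$ it already follows from your cruder bound.
\item The paper instead keeps the entropy term. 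Using $(k-1)\ln\frac{k}{k-1}\ge\frac{k-1}{k}$ and $\ln(1-x)\ge-\frac{x}{1-x}$, it gets $kF(1/k)\ln q\ge\ln k+\frac{k-1}{k}-\frac{k-1}{q^{k-1}-1}>0$ uniformly.
\end{itemize}

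\textbf{A smaller slip.} $f(0)=\log_q(q-q^{2-k})$ is only $\ge0$, with equality at $q=k=2$, where $q^k-q=q^{k-1}$. So your claim $f(0)>0$ is not justified by the inequality you cite. This is harmless, because strictness is needed only at the right endpoint:
\[
f(\kappa)\ge(1-\kappa k)f(0)+\kappa k\,f(1/k)\ge\kappa k\,f(1/k)>0 \quad\text{for } \kappa\in(0,1/k].
\]
This is exactly the form in which the paper applies concavity.
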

\begin{proof}
Take the configuration from the proof of Proposition~\ref{prop:limits}. Neither the construction nor the formula
$H_q(X) = h_q^{\mathrm{bin}}(\kappa)+(1-\kappa)\log_q\inparen{q^k-q}$
requires condition \eqref{eq:limit-range}. Since
$\log_q\inparen{q^k-q} = (k-1)+\log_q\inparen{q-q^{2-k}}$,
\[
H_q(X)-(k-1)\;=\;h_q^{\mathrm{bin}}(\kappa)
+(1-\kappa)\log_q\inparen{q-q^{2-k}}-\kappa\,(k-1)\;=:\;F(\kappa).
\]
$F$ is strictly concave with
$F(0) = \log_q\inparen{q-q^{2-k}}\ge\log_q(q-1)\ge0$, so
$F(\kappa)\ge\kappa k\,F\inparen{\frac1k}$ for $0<\kappa\le\frac1k$, and it
suffices to show $F\inparen{\frac1k}>0$. Multiplying out,
\[
k\,F\inparen{\tfrac1k}\ln q\;=\;\ln k+(k-1)\ln\frac{k}{k-1}
+(k-1)\ln\inparen{1-q^{1-k}}
\;\ge\;\ln k+\frac{k-1}{k}-\frac{k-1}{q^{k-1}-1},
\]
by $\ln\frac{k}{k-1}\ge\frac1k$ and $\ln(1-x)\ge-\frac{x}{1-x}$. At $k = 2$
the right side is at least $\ln2+\frac12-1>0$, and for $k\ge3$ it is at
least $\ln3-\frac{k-1}{2^{k-1}-1}\ge\ln3-\frac23>0$.
\end{proof}

For the configuration of Proposition~\ref{prop:limits}, the distribution of
$X\bmod\F_q\onesvector$ has mass $\kappa$ at the class of $0$ and is uniform
on the remaining $q^{k-1}-1$ classes, so
\[
H_q\inparen{X\bmod\F_q\onesvector}\;=\;h_q^{\mathrm{bin}}(\kappa)
+(1-\kappa)\log_q\inparen{q^{k-1}-1}
\;\le\;h_q^{\mathrm{bin}}(\kappa)+(1-\kappa)(k-1),
\]
and the excess of the right side over $(k-1)\,h_q(\rho)$ is
$h_q^{\mathrm{bin}}(\kappa)-(k-1)\inparen{\kappa-(1-h_q(\rho))}$, where
$\kappa-(1-h_q(\rho))>0$ because $h_q$ is concave and equals the chord
$1-\kappa$ at the endpoints $0$ and $1-\frac1q$. Suppose
$m\inparen{1-h_q(\rho)}\ge C_q$ and $\kappa\le\kappa_0(q)$, so that
$1-h_q(\rho)\le\min\inset{\frac\kappa2,\,c_q\kappa^2}$ and, since
$m\le q^{k-1}$,
\[
k-1\;\ge\;\log_q\frac{C_q}{1-h_q(\rho)}\;\ge\;2\log_q\frac1\kappa+\log_q\frac{C_q}{c_q}
\;>\;\frac{2\,h_q^{\mathrm{bin}}(\kappa)}{\kappa}
\]
once $C_q$ is large, by $h_q^{\mathrm{bin}}(\kappa)\le\kappa\log_q\frac{e}{\kappa}$.
The excess is then negative. For $\kappa>\kappa_0(q)$ it is negative as well: the ratio $h_q^{\mathrm{bin}}(\kappa)/\inparen{\kappa-(1-h_q(\rho))}$ is continuous on $[\kappa_0,1)$ and tends to $\frac{q}{q-1}$ as $\kappa\to1$, because as $\rho\to0$ both $h_q^{\mathrm{bin}}(\kappa) = h_q^{\mathrm{bin}}\inparen{\frac{q\rho}{q-1}}$ and $\kappa-(1-h_q(\rho)) = h_q(\rho)-\frac{q\rho}{q-1}$ are $(1+o(1))$ times $\frac{q\rho}{q-1}\log_q\frac1\rho$ and $\rho\log_q\frac1\rho$ respectively. So the ratio is at most some $B_q$, while $k-1\ge\log_q C_q>B_q$ once $C_q$ is large. So under
\eqref{eq:limit-range} and $m\inparen{1-h_q(\rho)}\ge C_q$, this
configuration violates $H_q(X)\le(k-1)\,h_q(\rho)$ while satisfying the
bound of Conjecture~\ref{conj:range}.

\end{document}